\newtheorem{theorem}{Theorem}[section]
\newtheorem{lemma}{Lemma}[section]
\newtheorem{proposition}{Proposition}[section]
\newtheorem{corollary}{Corollary}[section]
\newtheorem{definition}{Definition}[section]
\theoremstyle{definition}
\newtheorem{remark}[lemma]{Remark}
\newtheorem{assumption}{Assumption}
\numberwithin{equation}{section}
\newcommand{\beq}{\begin{equation}}
\newcommand{\eeq}{\end{equation}}
\newcommand{\be}{\begin{equation*}}
\newcommand{\ee}{\end{equation*}}
\newcommand{\n}{\noindent}
\newcommand{\RE}{\mathbb R}
\newcommand{\erre}{\mathbb R}
\newcommand{\BB}{\mathscr B}
\newcommand{\DD}{\mathscr D}
\newcommand{\OO}{\mathcal{O}}
\newcommand{\LL}{\mathcal{L}}
\newcommand{\sgn}{\operatorname{sgn}\,}
\newcommand{\lf}{\left}
\newcommand{\ri}{\right}
\newcommand{\ve}{\varepsilon}
\newcommand{\al}{\alpha}
\newcommand{\la}{\lambda}
\newcommand{\de}{\delta}
\newcommand{\lan}{\langle}
\newcommand{\ran}{\rangle}
\renewcommand{\Im}{\operatorname{Im}\,}
\renewcommand{\Re}{\operatorname{Re}\,}
\renewcommand{\leq}{\leqslant}
\renewcommand{\geq}{\geqslant}
\newcommand{\x}{\underline{x}}
\newcommand{\f}{\frac}
\newcommand{\EE}{\mathcal E}
\newcommand{\HH}{\mathcal H}
\newcommand{\WW}{\mathcal W}
\newcommand{\xv}{\mathbf{x}}
\newcommand{\kv}{\mathbf{k}}
\newcommand{\pv}{\mathbf{p}}
\newcommand{\kvp}{\mathbf{k}^{\prime}}
\newcommand{\xvp}{\mathbf{x}^{\prime}}
\newcommand{\yv}{\mathbf{y}}
\newcommand{\bdm}{\begin{displaymath}}
\newcommand{\edm}{\end{displaymath}}
\newcommand{\bdn}{\begin{eqnarray}}
\newcommand{\edn}{\end{eqnarray}}
\newcommand{\bay}{\begin{array}{c}}
\newcommand{\eay}{\end{array}}
\newcommand{\ben}{\begin{enumerate}}
\newcommand{\een}{\end{enumerate}}
\newcommand{\beqn}{\begin{eqnarray}}
\newcommand{\eeqn}{\end{eqnarray}}
\newcommand{\bml}[1]{\begin{multline} #1 \end{multline}}
\newcommand{\bmln}[1]{\begin{multline*} #1 \end{multline*}}
\newcommand{\R}{\mathbb{R}}
\newcommand{\N}{\mathbb{N}}
\newcommand{\diff}{\mathrm{d}}
\newcommand{\dom}{\mathscr{D}}
\newcommand{\eps}{\varepsilon}
\newcommand{\amp}{A_V(\kv , \kvp)}
\newcommand{\ampe}{A_{\mathrm{eff}}(\kv , \kvp; \la)}
\newcommand{\ampeo}{A_{\mathrm{eff}}(0, 0; \la)}
\newcommand{\eigenv}{\phi_{V,\kv}}
\newcommand{\aeff}{a_{\mathrm{eff}}(\lambda)}
\newcommand{\tx}{\textstyle}
\newcommand{\nt}{\noindent}
\newcommand{\bra}[1]{\lf\langle #1\ri|}
\newcommand{\ket}[1]{\lf|#1 \ri\rangle}
\newcommand{\braket}[2]{\lf\langle #1|#2 \ri\rangle}
\newcommand{\braketr}[2]{\lf\langle #1\lf|#2\ri. \ri\rangle}
\newcommand{\braketl}[2]{\lf.\lf\langle #1\ri|#2 \ri\rangle}
\newcommand{\mean}[3]{\bra{#1}#2\ket{#3}}
\newcommand{\meanlrlr}[3]{\lf\langle #1\lf|#2\ri|#3\ri\rangle}
\newcommand{\meanlr}[3]{\lf\langle #1\Big|#2\Big|#3\ri\rangle}
\begin{document}

\title{A Quantum Model of Feshbach Resonances}

\author{R. Carlone}
\address{Dipartimento di Matematica e Applicazioni ``R. Caccioppoli'', Universit\'a di Napoli ``Federico II'', Via Cinthia, Monte S. Angelo, 80126 Napoli, Italy}

\author{M. Correggi}
\address{Dipartimento di Matematica, ``Sapienza'' Universit\`a di Roma, P.le A. Moro 5, 00185 Roma, Italy}
\email{michele.correggi@gmail.com}
\urladdr{http://www1.mat.uniroma1.it/people/correggi/}

\author{D. Finco}
\address{Facolt\`a di Ingegneria, Universit\`a Telematica Internazionale Uninettuno, Corso V. Emanuele II 39,  00186 Roma, Italy}
\author{A. Teta}
\address{Dipartimento di Matematica, ``Sapienza'' Universit\`a di Roma, P.le A. Moro 5, 00185 Roma, Italy}

\date{\today}

\keywords{}
\subjclass[2010]{}

\begin{abstract}
We consider a quantum model of two-channel scattering to describe the mechanism of a Feshbach resonance. We perform a rigorous analysis in order to count and localize the energy resonances in the perturbative regime, i.e., for small inter-channel coupling, and in the non-perturbative one. We provide an expansion of the effective scattering length near the resonances, via a detailed study of an effective Lippmann-Schwinger equation with energy-dependent potential.
\end{abstract}    


\maketitle

\section{Introduction}
\label{sec: intro}

In the last twenty years, a central topic in Physics and Mathematical Physics has been the study of ultra-cold quantum gases.
The primary motivation is the extraordinary degree of control that such systems offer and that allows to investigate the fundamental behavior of quantum matter
in various regimes.

In an ultra-cold quantum gas, the particle density is such that only the two-body scattering process is relevant at low energy, i.e., in the $s$-wave, and the scattering length $a$ is the only physical parameter characterizing the inter-particle interaction. This single parameter can, in turn, be controlled concretely in the experiments via the so-called  Feshbach resonance mechanism \cite{DS,fis10, fis11,fis3,fis4,S,fis9,Ti}, which allows to set the scattering length to any value, both positive and negative. 

The role of scattering length tunability in Bose-Einstein condensates is apparent, since the size of the  scattering length is crucial in order to reach condensation. Furthermore, in the Thomas-Fermi regime, the essential physical features of the condensate, as, e.g., the radius of its support or the highest peak of the particle density, strongly depends on the scattering length and in fact scales as suitable powers of it. Even more, in a two-component Fermi gas, one can tune $ a $ from positive to negative values, so passing from a regime of molecular Bose-Einstein condensation to the formation of a BCS superfluid of Cooper-like pairs. One can even single out the transition region, where the scattering length diverges and the gas is expected to show universal features: in this case, the system goes under the name of unitary gas \cite{fis7,CFT,fis6} and we refer to \cite{CDFMT1,CDFMT2} and references therein for further information.

The relevance of this control mechanism is however not limited to the physics of cold atoms but in fact emerges in many other applications, as, e.g., scattering by nuclei or nuclear physics, or for the realization of the Efimov effect \cite{efimov,fis8} (see also \cite{BT} and references therein for rigorous results), i.e., when the two-body interaction becomes resonant, and Efimov trimers appear.

The Feshbach resonance mechanism for cold alkali atoms can be easily modelled in terms of a {\it multi-channel scattering process} \cite[Chpts. 3 \& 5]{PS}. Indeed, the internal atomic structure is then given by the usual energy levels which are split because of the hyperfine coupling and thus it is basically determined by the properties of the valence electrons and specifically their spin. As a consequence, the interaction between two atoms induces transitions between the hyperfine energy levels and thus changes the spin of such electrons. It is therefore evident that such a mechanism is conveniently described by a multi-channel Hamiltonian. In this context, for a given value of the spin, one speaks of an {\it open channel}, when for that value of the spin a scattering process is allowed, whereas  the channel is called {\it closed}, if no scattering is possible and the system is described by a bound state (see Fig. \ref{graph}). A Feshbach resonance occurs when the low energy threshold (ionization threshold) of the scattering process in the open channel is close to the energy of the closed channel. Under such resonance condition and in presence of an inter-channel interaction, the scattering process in the open channels is strongly influenced by the presence of the closed channel, and the corresponding scattering length is modified accordingly, even for weak coupling between the channels. 

In extreme synthesis, the Feshbach resonance mechanism described above can be physically described as follows:  let us consider the stationary Schr\"odinger equation
\beq \label{fesh1}
{\mathscr H} \Psi = E \Psi
\eeq
and let us introduce two orthogonal projectors $P$ and $Q$ such that $P+Q=Id$. In the multi-channel picture $ P $ is the projector onto the open channel, while $ Q $ projects on the closed one. Equation \eqref{fesh1} is equivalent to the system
\begin{align}
&( {\mathscr H}_{PP} - E ) P\Psi + {\mathscr H}_{PQ} Q\Psi =0 \label{fesh2}\\
& ( {\mathscr H}_{QQ} - E ) Q\Psi + {\mathscr H}_{QP} P\Psi =0  \label{fesh3}
\end{align}
where ${\mathscr H}_{PP}= P {\mathscr H}P$, ${\mathscr H}_{PQ}= P {\mathscr H}Q$ and  ${\mathscr H}_{QQ}= Q{\mathscr H}Q$. The first channel is  open when $E$ lies in the continuous spectrum of ${\mathscr H}_{PP}$ and it is associated with a scattering process, i.e., $ E $ is not an eigenvalue of ${\mathscr H}_{PP} $. The second channel is  closed when $E$ lies below the continuous spectrum threshold of $\mathscr{H}_{QQ}$.
In such a situation, $ {\mathscr H}_{QQ} - E $ is invertible and equation \eqref{fesh3} can be solved, so that, after substitution in \eqref{fesh2}, \eqref{fesh1} proves to be equivalent to
\beq \label{fesh4}
{\mathscr H}_{\text{eff}} P\Psi = E P\Psi, \qquad
{\mathscr H}_{\text{eff}} = {\mathscr H}_{PP}- {\mathscr H}_{PQ} \lf( {\mathscr H}_{QQ} - E \ri)^{-1} {\mathscr H}_{QP},
\eeq
i.e., we have reduced the eigenvalue problem \eqref{fesh1} to an effective equation in the open channel.
Under the hypothesis that the interaction between channels is weak, \eqref{fesh3} can be approximated by $  ( {\mathscr H}_{QQ} - E ) Q\Psi =0$. Suppose now that $E_{\text{b}}$ and $\Phi_{\text{b}}$ satisfy
\beq \label{fesh5}
( {\mathscr H}_{QQ} - E_{\text{b}} ) \Phi_{\text{b}}=0,
\eeq
then, for $E$ near $E_{\text{b}}$, we can approximate ${\mathscr H}_{\text{eff}}$ by
\beq \label{fesh6}
{\mathscr H}_{\text{eff}}' = {\mathscr H}_{PP} - \dfrac{\ket{ {\mathscr H}_{PQ}  \Phi_{\text{b}}}  \bra{ {\mathscr H}_{QP}\Phi_{\text{b}}}}{E- E_{\text{b}}}.
\eeq 
This substitution is usually called single pole approximation. Let us we denote by $\Phi^+ $ the outgoing solution of $ ( {\mathscr H}_{PP} - E ) \Phi=0 $. Since ${\mathscr H}_{\text{eff}}'$ is a rank one perturbation of ${\mathscr H}_{PP}$, it is straightforward to see that the solution of \eqref{fesh4} can be approximated by
\beq \label{fesh7}
P\Psi = \Phi^+  -\dfrac{ ( {\mathscr H}_{PP} - (E+i0) )^{-1} \ket{{\mathscr H}_{PQ} \Phi_{\text{b}}} \bra{ {\mathscr H}_{QP}\Phi_{\text{b}}}}{E- E_{\text{b}}-  \bra{{\mathscr H}_{QP} \Phi_{\text{b}} } ({\mathscr H}_{PP} - (E+i0) )^{-1} \ket{{\mathscr H}_{PQ} \Phi_{\text{b}}}},
\eeq
where $ ({\mathscr H}_{PP} - (E+i0) )^{-1} $ stands for the boundary value of the resolvent.
Then, we see that the resonant scattering amplitude contains a {\it Breit-Wigner pole} $(E-E_{\text{res}}+i\Gamma/2)^{-1}$ with
\begin{align*}
E_{\text{res}} &= E_{\text{b}} + \Re \meanlrlr{{\mathscr H}_{QP}\Phi_{\text{b}}}{  \lf( {\mathscr H}_{PP} - (E+i0) \ri)^{-1}}{{\mathscr H}_{PQ} \Phi_{\text{b}}}, \\
\Gamma &= -2\Im  \meanlrlr{{\mathscr H}_{QP}\Phi_{\text{b}}}{ \lf( {\mathscr H}_{PP} - (E+i0) \ri)^{-1}}{{\mathscr H}_{PQ} \Phi_{\text{b}}}.
\end{align*}
Notice that $E_{\text{res}}$ is shifted from $ E_{\text{b}}$ due to channel interaction.

It is worth mentioning  that a  typical way to produce the channel splitting for cold atoms is obtained through an external magnetic field (magnetically induced Feshbach resonances).   In this case, the hyperfine energy levels associated to different configurations of the spin degrees of freedom split 
due to Zeeman effect and the splitting is proportional to the external magnetic field. If we normalize the zero energy level as the continuum threshold
of the open channel, the net effect of the magnetic field is to add a constant term  $\Delta E$ proportional to $B$ in $ {\mathscr H}_{QQ}$. 
As a consequence, we can set $B$ in such a way that $ E_{\text{b}}$ is zero, and the resonance influences the zero energy behavior of the open channel. The effective coupling between the channels is provided by the typical Van der Waals-like potentials. Hence, the scattering length is given by the formula
\beq \label{fesh8}
a(B)= a_{\infty}+ \frac{\Delta}{B- B_{\text{res}}},
\eeq
where $B_{\text{res}}$ is the resonant value of $B$, $\Delta$ is the resonance width and $a_{\infty}$ is the asymptotic value of $a$ for large $B$.

\medskip

Matrix Hamiltonians like $ \mathscr{H} $ above have already been considered in the mathematical literature \cite{djn1,djn2} in the abstract setting but, typically, in a simplified form, i.e., with
\[
{\mathscr H}_{QQ} = E_{\text{b}} \ket{\Phi_{\text{b}}}\bra{\Phi_{\text{b}}},
\]
which is equivalent to a single pole approximation of the closed channel. In this way, the total Hamiltonian $ \mathscr{H} $ may have an eigenvalue embedded in the continuous spectrum or at the threshold. The survival probability of this eigenvalue is  studied in the perturbative regime where the inter channel interaction is weak, and it is shown to depend on the properties of the function
\[
F(E)= E- E_{\text{b}}- \bra{{\mathscr H}_{QP} \Phi_{\text{b}} } ({\mathscr H}_{PP} - (E+i0) )^{-1} \ket{{\mathscr H}_{PQ} \Phi_{\text{b}}}.
\]
Other similar results on the Fermi Golden Rule are discussed in \cite{cjn, jn}, while the behavior of  eigenvalues at threshold is studied, e.g., in \cite{m1,m2} for small perturbations.

\medskip

In this paper, we consider a two-channel model, and we perform a rigorous analysis providing some information concerning both the number and the localization of Feshbach resonances.
It is indeed clear that the heuristic argument sketched above is not satisfying from the mathematical point of view: the Born expansion, implicitly assumed in \eqref{fesh4} and \eqref{fesh6}, breaks down near the resonance.  Moreover, we prove an expansion of the scattering length near a resonance similar to \eqref{fesh8}. Notice that no assumption of smallness for the inter-channel interaction is assumed in our model. If however, such an assumption does hold, we strengthen the results about the counting and the localization of Feshbach resonances.

\medskip

The paper is organized as follows. In Section \ref{sec:main} we introduce the model  and we state our main results. A simplified solvable model where the inter-channel interaction is given by a   separable potential is presented and worked out in Section \ref{sec:separable}. In Section \ref{sec:notation} we collect some notation and the necessary technical results used in the paper.
The generalized eigenfunctions of the effective problem in the open channel at low energy are discussed in Section \ref{sec:eigenfunctions}.  In Section \ref{sec:Feshbach} we finally prove the characterization of the number and the localization of Feshbach resonances.

\begin{center}
\begin{figure}\label{graph}
\includegraphics[width=8cm]{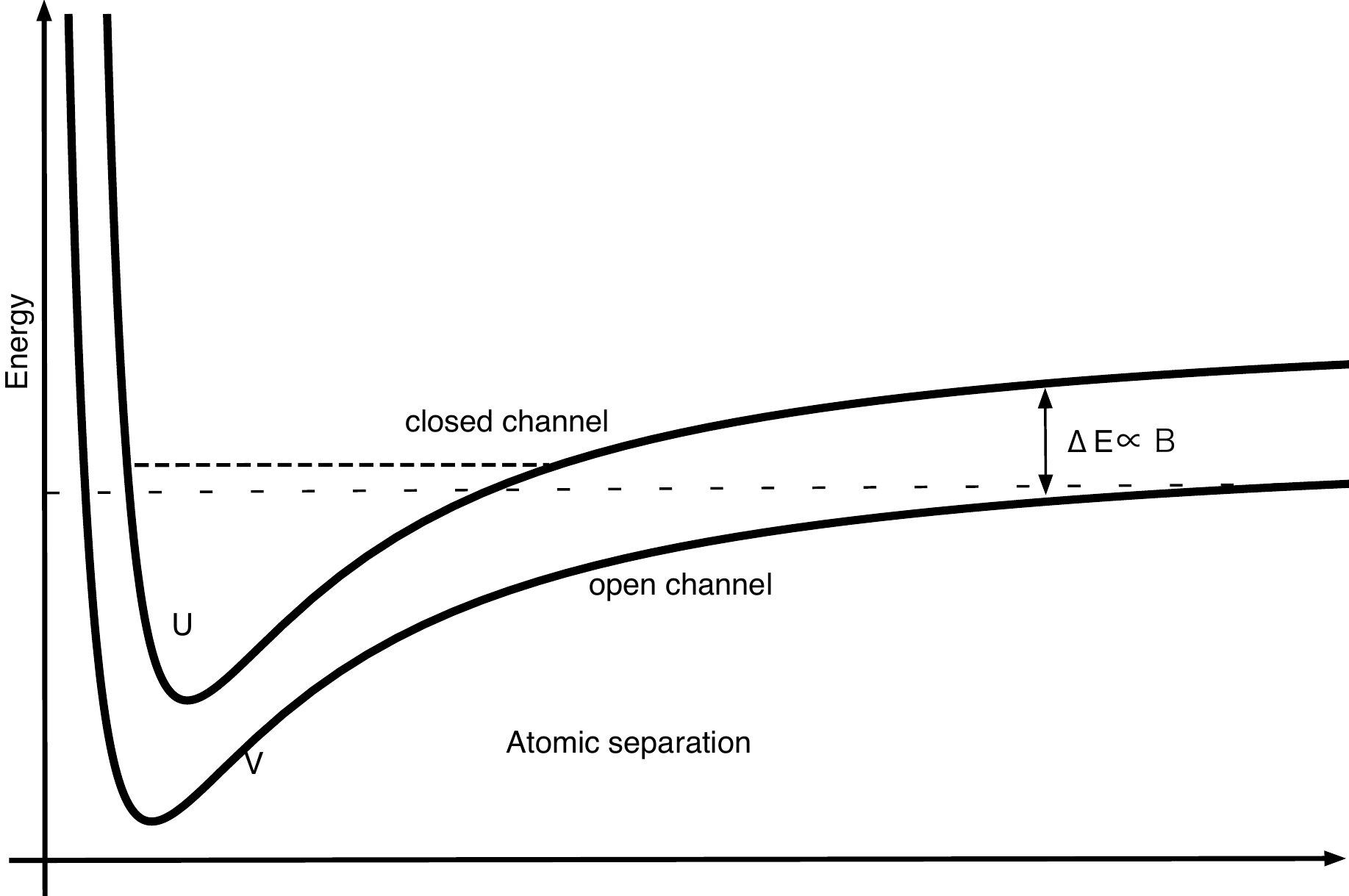} 
\caption{The potentials in open and closed channels are depicted as thick lines. The dashed lines represent the energy of the scattering process and the bound state energy in the closed channel, respectively.}
\end{figure}
\end{center}


\section{Main Results}
\label{sec:main}

In order to formulate our results, we first fix some notational conventions used in the paper. 

\n
We use when possible  capital greek or calligraphic letters to denote either vectors or matrix operators, while lower case or regular capital letters will stand for vector in $ L^2(\RE^3) $ or operator on the same space. Bold face letters denote vectors in $ \R^3 $, e.g., $ \xv $, while scalars are denoted by regular letters, e.g., $ E $. When there is no ambiguity we also use the notation $ x : = |\xv| $ for the modulus of a vector $ \xv $.
Since we always deal with functions on $ \R^3 $, we often omit the base space in Banach space notation, i.e., $ L^p : = L^p(\R^3) $, where there is no possible confusion and denote $ \lf\| \: \cdot \: \ri\|_{L^p} =: \lf\| \: \cdot \: \ri\|_p $.
Finally, by $C$  we denote a generic positive constant which can possibly vary from line to line. 

\medskip

We now make the mathematical setting more precise. We consider a multi-channel scattering of a particle in dimension three: we assume that there are an {\it open channel}, where the scattering is energetically possible, and a {\it closed} one where any scattering process is forbidden because of an energy constraint. The two channels interact via a coupling term in the Hamiltonian. Hence, we can describe the system by the following matrix Hamiltonian 
acting on the Hilbert space  ${\mathscr H} = L^2 (\RE^3) \oplus  L^2 (\RE^3) $:
\beq
\HH = 
\left(
\begin{array}{cc}
H_V & W \\
 W & H_U +\la
\end{array}
\right) = \HH_0 + \WW, 
\eeq
\beq
\HH_0 = 
\left(
\begin{array}{cc}
H_V & 0 \\
 0 & H_U  +\la
\end{array}
\right),
\qquad
\WW = 
\left(
\begin{array}{cc}
0 & W \\
 W & 0
\end{array}
\right), 
\eeq
where 
\beq
	\label{eq:1bh}
	H_V=-\Delta +V,	\qquad		 H_U=-\Delta +U
\eeq
are the open and closed channel Schr\"{o}dinger operators, respectively, and $W$ is the inter-channel interaction, i.e., in the physical picture the hyperfine coupling between different Zeeman levels. The  role of the control parameter is played by $\la > 0$, which can be thought as proportional to the magnetic field.

In order to study the modified scattering in the open channel, we aim at deriving an equation analogous to the usual {\it Lippmann-Schwinger (LS) equation} for the one or two-particle scattering problem. We recall here for the convenience of the reader the basics of time-independent scattering theory of quantum particle by a smooth and short range potential $ V(\xv) $ (see, e.g., \cite{RS3}). In this simple case the spectrum of the Hamiltonian $ H_V $ is known  to be absolutely continuous and to coincide with the positive real line, with the possible exception of some isolated negative eigenvalues. Then, let 
\beq
 \phi_{0,\kv}(\xv):= e^{i \kv \cdot \xv} 
 \eeq
 be a plane wave with momentum 
$ \kv \in \R^3 $, which is a generalized eigenfunction of the unperturbed Hamiltonian $  - \Delta $, then the LS equation is
\beq
	\label{eq:ls}
	\phi_{V,\kv}(\xv) = \phi_{0,\kv}(\xv) - \lf( R_0(k^2)  V  \phi_{V,\kv} \ri)(\xv),
\eeq
where we  denote by $ R_V(E)$  the resolvent of the operator $ H_V $ (which reduces to the resolvent of $- \Delta$ for $V=0$). For $ E$ in the resolvent set of $H_V$  
\beq
	\label{eq:resolvent}
	R_{V}(E) : = \lf( H_V - E \ri)^{-1}
\eeq
is a bounded operator in $L^2$ while, if $ E \geq 0 $ falls into the continuous spectrum of $ H_V $, then $ R_V(E) $ stands for the boundary value of the resolvent from the upper half-plane, 
\beq
	\label{eq:resolvent-bv}
	R_{V}(E) : = \lim_{\eps \to 0^+} \lf( H_V - E - i \eps \ri)^{-1}.
\eeq
The solutions of the LS equation are generalized eigenfunctions of $ H_V $ (a.k.a. distorted plane waves) with energy $ E = k^2 \geq 0 $ lying inside the continuous spectrum. 

Under suitable assumptions on the potential $ V $, one can show (see, e.g.,  \cite{A}) that equation \eqref{eq:ls} has a  unique continuous solution, which asymptotically behaves like
\beq
	\label{eq:ls-solution}
	\phi_{V,\kv}(\xv) \underset{x \to +\infty}{\simeq} e^{i \kv \cdot \xv} + \amp \frac{e^{i k x}}{x},
\eeq
i.e., the superposition of a plane wave and a spherical one. Here we denote by 
\beq
	\label{eq:k-transferred}
	\kvp : = k \hat{\xv},
\eeq
the outgoing momentum. The quantity $ \amp $ is named {\it scattering amplitude} and is in fact sufficient to completely describe the elastic scattering process. At low energy one can actually  characterize all the effects of the scattering by a single scalar quantity, the {\it scattering length} associated to $ V $, which is defined as
\beq
	\label{eq:scattering-length}
	a_V : = \lim_{k \to 0} \amp.
\eeq

The starting point of our investigation is the eigenvalue equation for the matrix Hamiltonian $ \HH $: 
\beq\label{eigenequa}
\HH \Psi_\kv = k^2 \, \Psi_\kv.
\eeq
We are going to cast such an equation in a form akin to the LS equation \eqref{eq:ls} and prove that the corresponding solution has the form \eqref{eq:ls-solution}. This will allow us to introduce an {\it effective scattering length} in the open channel, which takes into account the effect of the interaction with the closed one. We will then characterize further the salient features of such an effective parameter. 

Writing $\Psi_\kv=(\varphi_\kv, \xi_\kv)$,  equation \eqref{eigenequa} is equivalent to the coupled system
\beq \label{geneigen}
\begin{cases}
(-\Delta + V)\varphi_\kv + W \xi_\kv =  k^2 \varphi_\kv, \\    
( -\Delta + U +\la )\xi_\kv+ W \varphi_\kv = k^2 \xi_\kv. 
\end{cases} 
\eeq
Since we are interested in the low energy behavior, we can restrict to the energies  
\beq
	\label{eq:condition-k}
	0<k^2<\la.
\eeq
For $k^2 - \lambda$ in the resolvent set of $H_U$, $ -\Delta + U +\la -k^2$ has a bounded inverse $R_U(k^2-\la)$  in $L^2(\erre^3)$ and the system \eqref{geneigen} is equivalent to the coupled integral equations
\beq \label{geneigen2}
\begin{cases}
 \varphi_\kv + R_V (k^2) \, W\, \xi_\kv = \eigenv, \\
 \xi_\kv + R_U(k^2-\la)\, W\, \varphi_\kv= 0,
\end{cases}
\eeq
where we have denoted by $ \eigenv $ the generalized eigenfunctions of $H_V$, i.e., the solution of the LS equation \eqref{eq:ls}. The resolvent $ R_V(k^2) $ is defined as a boundary value as in \eqref{eq:resolvent-bv}. From \eqref{geneigen2} one sees that the problem is reduced to find the solution  $\varphi_\kv$ of the {\it effective LS equation}
\beq \label{geneigen3}
 \varphi_\kv - R_V (k^2) \, W\, R_U(k^2-\la)\, W\, \varphi_\kv = \eigenv .
\eeq

Before stating the first result, let us specify the technical assumptions we make on the potentials $ U $, $ V $ and $ W $. Our analysis is partly based on the classical work of Ikebe \cite{I} and therefore we recall first the definition of the Ikebe class $ I_n(\R^3) $, $ n \in \N $.

\begin{definition}[Ikebe class $ I_n $]
	\label{def:ikebe}
	\mbox{}	\\
We say that a measurable function $V$ belongs to the Ikebe class $I_n(\R^3)$, $ n \in \N $, if $V\in L^2(\R^3)$, $ V $ is locally H\"older continuous except for a finite number of points and there exists $R_0>0$ and $\de\in (0,1)$ such that 
	\beq
		\label{eq:ikebe-decay}
		|V(\xv)| \leq \dfrac{C}{x^{n+\de}},	\qquad		\mbox{for } x\geq R_0.
	\eeq
\end{definition}

\begin{assumption} \label{pot}
We assume that 
	\begin{itemize}
		\item[a)] $U\in I_2(\R^3)$;
		\item[b)] $V \in I_4(\R^3) \cap L^3(\R^3) $;
		\item[c)] $W \in I_3(\R^3) \cap L^3(\R^3) $. 
	\end{itemize}
\end{assumption}

Our assumptions on the potentials are certainly not optimal and other hypothesis are possible (see, e.g., \cite{RS3,SQF}). However, we stick the choice above for concreteness. Note that, under Assumption \ref{pot}, $ V $ and $ U $ are {\it Agmon potentials} according to \cite[Sect. XIII.8]{RS4} and therefore are relatively compact perturbations of $-\Delta$. In particular, the one-body Hamiltonians $H_U $ and  $H_V $ are self-adjoint on $H^2(\R^3) $. Furthermore, 
\begin{itemize}
	\item $H_U $ and $ H_V $ have finitely many negative eigenvalues by, e.g., the Birman-Schwinger bound \cite[Thm. XIII.10]{RS4} (see also \cite[Thm. 4]{I});
	\item $ \sigma_{\mathrm{ess}}(H_U) = \sigma_{\mathrm{ess}}(H_V) = [0,+\infty) $ by, e.g.,  Weyl's theorem \cite[Thm. XIII.15]{RS4};
	\item $H_U $ and $ H_V $ have no positive eigenvalues by, e.g., \cite[Thm. XIII.58]{RS4};
	\item $ \sigma_{\mathrm{sing}}(H_U)= \sigma_{\mathrm{sing}}(H_V) = \emptyset $ by \cite[Thm. XIII.33]{RS4} and, therefore, $ \sigma_{\mathrm{ac}}(H_U) = \sigma_{\mathrm{ac}}(H_V) = [0,+\infty) $.
\end{itemize} 

In order to control the effect of the closed channel on the scattering, we also have to make some further assumptions on the spectra of $ - \Delta +  U $ and $ - \Delta + V $:

\begin{assumption} \label{boundst}
We assume that 
\begin{itemize}
	\item[a)] $H_U  $ has $N \geq 1$ negative simple eigenvalues $ E_0 < E_1< \cdots < E_{ N-1} < 0 $, with corresponding eigenvectors $ \psi_0, \psi_1, \ldots, \psi_{N-1} \in L^2(\R^3) $;
	\item[b)]  $H_V \geq 0 $ and zero is  neither an eigenvalue nor a  resonance.
	\item[c)]  $ \ker \lf( R_V^{1/2}(0) W \ri) = \lf\{ 0 \ri\} $. 
\end{itemize}
\end{assumption}

The above set of assumptions (in particular point a)) will be used only in next Theorem \ref{main} about the characterization of the singularities corresponding to Feshbach resonances. In fact, we are going to make such assumptions only for technical reasons and to simplify the discussion of the singularities of the scattering length. We do expect however the result to hold true even if, e.g., the eigenvalues of $ H_U $ have nontrivial multiplicities, but the proof will become much more involved in that case. 

We are now able to state the first result about the equation \eqref{geneigen3}. We denote by $ \EE $ the set of positive eigenvalues of $ \HH $, i.e.,
\beq
	\EE : = \sigma_{\mathrm{pp}}(\HH) \cap \R^+.
\eeq
We will prove $\EE$ is a discrete set containing only eigenvalues with finite multiplicity (Proposition \ref{pro:positive-eigenvalues}). Obviously, we are interested in the solutions of \eqref{geneigen3} having the form of generalized eigenfunctions or distorted plane waves and therefore we have to assume that $ k^2 \notin \EE $. In the following $ \mathscr{B} $ will stand for the set of continuous functions vanishing at infinity, i.e.,
	\beq
		\label{eq:Bfunctions}
		\mathscr{B} : = \Big\{ f \in C(\R^3) \: \Big| \: \lim_{x \to + \infty} f(\xv) = 0 \Big\}.
	\eeq

	\begin{proposition}[{Generalized eigenfunctions}]
		\label{pro:solution-scattering}
		\mbox{}	\\
		Let Assumption \ref{pot} hold true and let $ \lambda > 0 $ be fixed. Then, for any $ \kv \in \R^3 $ with $ k^2 \in (0,\la) \setminus \EE$ and
$k^2-\la \neq E_j$, for $j=0, \ldots, N-1$, equation \eqref{geneigen3} admits a unique continuous solution $ \varphi_\kv $, such that
$ \varphi_\kv - \eigenv \in \BB$. Furthermore, $ \varphi_\kv $ satisfies the asymptotic
		\beq
			\label{eq:varphi-asymptotics}
			\varphi_{\kv}(\xv) \underset{x \to +\infty}{\simeq} e^{i \kv \cdot \xv} + \ampe \frac{e^{i k x}}{x},
		\eeq
		with
		\beq \label{scatamp}
			\ampe = \tx\frac{1}{4 \pi} \mean{\phi_{V,\kvp}}{W\, R_U(k^2-\la)\, W}{\varphi_{\kv}} + \amp,
		\eeq
		where $ \amp $ is the scattering amplitude associated to the potential $ V $ (see \eqref{eq:ls-solution}).		
	\end{proposition}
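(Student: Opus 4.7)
The equation \eqref{geneigen3} has the form $(I - K_{k,\la})\varphi_\kv = \phi_{V,\kv}$ with $K_{k,\la} := R_V(k^2) W R_U(k^2-\la) W$, so the plan is to apply a Fredholm alternative in a Banach space of continuous decaying functions and then extract the asymptotics from the known pointwise expansion at infinity of the resolvent kernel of $H_V$. Since $k^2 - \la < 0$ lies in the resolvent set of $H_U$ (by $k^2 < \la$ and $k^2 - \la \neq E_j$), $R_U(k^2-\la)$ is a bounded self-adjoint operator on $L^2(\R^3)$. Assumption \ref{pot}(c) on $W \in I_3 \cap L^3$ then gives that $W R_U(k^2-\la) W$ is bounded from $L^2$ into a class of suitably decaying $L^2$ functions. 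The delicate ingredient is the boundary-value resolvent $R_V(k^2)$, which is not bounded on $L^2$ but, under Assumption \ref{pot}(b), maps decaying functions into $\mathscr{B}$ by the limiting absorption principle. Composing, $K_{k,\la}$ acts continuously on $\mathscr{B}$ and is compact there thanks to the pointwise decay of $W$; the source $\phi_{V,\kv}$ belongs to $L^\infty \cap C(\R^3)$ by Ikebe's theory, so $\varphi_\kv - \phi_{V,\kv} \in \mathscr{B}$ for any candidate solution.

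For injectivity: if $\psi \in \mathscr{B}$ solves $(I - K_{k,\la})\psi = 0$, set $\chi := -R_U(k^2-\la) W \psi \in L^2(\R^3)$; then the pair $(\psi,\chi)$ satisfies the homogeneous system \eqref{geneigen} distributionally. A bootstrap using the representation $\psi = R_V(k^2) W \chi$, together with the kernel decay of $R_V(k^2+i0)$ and the $I_3$ decay of $W$, would promote $\psi$ itself to $L^2(\R^3)$, yielding an honest $L^2$-eigenvector of $\HH$ at energy $k^2$ and contradicting $k^2 \notin \EE$. The Fredholm alternative then furnishes a unique $\varphi_\kv$ with $\varphi_\kv - \phi_{V,\kv} \in \mathscr{B}$, whose continuity follows from the regularizing properties of $R_V(k^2)$ applied to the continuous source.

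For the asymptotic expansion, I would write $\varphi_\kv = \phi_{V,\kv} + R_V(k^2) g$ with $g := W R_U(k^2-\la) W \varphi_\kv$, and insert both the standard expansion \eqref{eq:ls-solution} of $\phi_{V,\kv}$ and the Ikebe-type pointwise asymptotic of the outgoing Green's function $R_V(k^2+i0)(\xv,\yv)$ as $x \to +\infty$, which pairs $e^{ikx}/(4\pi x)$ against a distorted plane wave in the $\yv$-variable. The $I_3$ decay of $g$ makes the error uniform in $\yv$ and produces exactly \eqref{eq:varphi-asymptotics}--\eqref{scatamp}. The principal obstacle is the compactness of $K_{k,\la}$: because $R_V(k^2)$ is not bounded on $L^2$, the analysis must be carried out entirely within decaying function spaces, where the $I_3$ decay of $W$ barely compensates for the lack of $L^2$-boundedness; closely related is the $L^2$-bootstrap in the injectivity step, where the hypothesis $k^2 \notin \EE$ must be brought in cleanly.
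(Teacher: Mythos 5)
Your overall architecture coincides with the paper's: cast \eqref{geneigen3} as a Fredholm equation for $\eta_\kv = \varphi_\kv - \phi_{V,\kv}$ in $\BB$, prove compactness of $D_\kv = R_V(k^2)\,W\,R_U(k^2-\la)\,W$ there (the paper does this via the resolvent identity, factoring out the Ikebe operators $T_Z(k) = R_0(k^2)Z$, which are the actual source of compactness in $\BB$ rather than the decay of $W$ alone), reduce injectivity to the hypothesis $k^2 \notin \EE$, and read off the asymptotics from Lemma \ref{ike}, point c). The existence/uniqueness skeleton and the asymptotic expansion are essentially the paper's.

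The genuine gap is in the injectivity step. You propose to promote a $\BB$-solution $\psi$ of the homogeneous equation to $L^2$ by ``a bootstrap using the kernel decay of $R_V(k^2+i0)$ and the $I_3$ decay of $W$.'' This cannot work as stated: the outgoing kernel $e^{ik|\xv-\yv|}/(4\pi|\xv-\yv|)$ decays only like $x^{-1}$, which is not square integrable in $\R^3$, so $\psi = R_V(k^2)\,W\,R_U(k^2-\la)\,W\psi$ generically carries an outgoing spherical wave $c\, e^{ikx}/x \notin L^2$ no matter how fast $W$ decays; the paper explicitly remarks that Agmon's standard bootstrap fails here because the effective potential $W R_U(k^2-\la) W$ is non-local. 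The missing idea is to show that the coefficient $c$ vanishes, i.e., that $\widehat{f_\kv}$ has vanishing trace on the sphere $|\pv|=k$, where $f_\kv = (I - V R_V(k^2))\,W R_U(k^2-\la) W \psi$. This follows from the identity $\Im \braketl{R_0(k^2)f_\kv}{f_\kv} = \frac{\pi}{2k}\int_{S^2(k)}\diff\sigma\,|\widehat{f_\kv}|^2$ of Lemma \ref{agmon1} combined with the reality of the quadratic form produced by the homogeneous equation ($R_U(k^2-\la)$ is self-adjoint since $k^2-\la$ is real and in the resolvent set, and $V$ is real), which forces the surface integral to vanish; only then does Agmon's division lemma (Lemma \ref{agmon2}) yield $\psi \in L^2_{s-1}\subset L^2$, so that $(\psi, -R_U(k^2-\la)W\psi)$ is a genuine $H^2\oplus H^2$ eigenvector of $\HH$ and the contradiction with $k^2\notin\EE$ can be drawn. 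Without this unitarity-type argument your injectivity step does not close.
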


	The above results sets the stage for the main result we prove in this paper. Indeed, the asymptotic \eqref{eq:varphi-asymptotics} shows that, by analogy with the one-particle case described above and the solution of the LS equation \eqref{eq:ls} given by \eqref{eq:ls-solution}, one can think of $ \ampe $ as the {\it effective scattering amplitude} in the open channel, taking into account the effects of the interaction with the closed one, as usually done in the physics literature  (see, e.g., \cite{Ga}).
	
	\n 
The next definition is thus the obvious consequence.
	
	\begin{definition}[Effective scattering length]
		\label{def:scattering-length}
		\mbox{}	\\
		We define the effective scattering length in the open channel as
		\beq
			a_{\mathrm{eff}}(\la) : = \lim_{k \to 0} \ampe,
		\eeq
		where $ \ampe $ is given by \eqref{scatamp}
	\end{definition}
	
Any resonance in the open channel is then associated to a singularity of the effective scattering length $ \aeff $. The next Theorem classifies all such singularities and the critical values of the control parameter $ \lambda $ yielding them.

	\begin{theorem}[Feshbach resonances]
		\label{main}
		\mbox{}	\\
		Let Assumptions \ref{pot} and \ref{boundst} hold true. Then,			
		\begin{enumerate}[(i)]
			\item there exists at least a critical value $ \la_0 \in (|E_0|, + \infty) $, such that the homogenous equation
				\beq
					\label{teo: homo}
					\lf(I - R_V (0) \, W\,  R_U(-\la_0)\, W \ri) \eta = 0,
				\eeq
				admits at least one non-trivial solution $ \eta_0 \in \mathscr{B} $. Furthermore, $ \aeff $ is continuous for $ \la \neq \la_0 $, with the possible exception of finitely many other points, and, if additionally
				\beq
					\label{eq: resonant condition}
					\braket{\phi_{V,0}}{   W  R_U(-\la_0) W \eta_0} \neq 0,	 \qquad	\forall \eta_0 \in \ker\lf(I - R_V (0) \, W\,  R_U(-\la_0)\, W \ri),
 				\eeq
				one has the expansion
				\beq
					\aeff = \frac{c_0}{\la - \la_0} + \OO(1),		\qquad		\mbox{as } \la \to \la_0,
				\eeq
				for some $ c_0 \neq 0 $;
			\item there exists $\de_0>0$ such that, if $ \lf\| W \ri\|_3 \leq \de_0$,
				then there are exactly $ N $ critical values $ \la_j $, $ j = 0, \ldots, N-1 $, satisfying
				\beq
					\la_0 > |E_0| > \la_1 > |E_1 | > \cdots > \la_{N-1} > |E_{N-1}|,
				\eeq
				such that the homogenous equation
				\beq
					\label{teo: homo perturbative}
					\lf(I - R_V (0) \, W\,  R_U(-\la_j)\, W \ri) \eta = 0,
				\eeq
				admits at least one non-trivial solution $ \eta_j \in \mathscr{B} $, for any $ j = 0, \ldots, N-1 $. Furthermore, $ \aeff $ is continuous for any $ \la \in [|E_{N-1}|, + \infty) $, $ \la \neq \la_j $, i.e., any further critical value $\la_j $, with $j\geq N$, is such that $ |E_{N-1}| > \la_j >0 $. Finally, if
				\beq
					\label{eq: resonant condition perturbative}
					\braket{\phi_{V,0}}{   W  R_U(-\la_j) W \eta_j} \neq 0, \qquad	\forall \eta_j \in \ker\lf(I - R_V (0) \, W\,  R_U(-\la_j)\, W \ri),
 				\eeq
				one has 
				\beq
					\aeff = \frac{c_j}{\la-\la_j} + {\mathcal O}(1),  \qquad		\mbox{as } \la \to \la_j,
				\eeq
				for some $c_j \neq 0 $.		
		\end{enumerate}
	\end{theorem}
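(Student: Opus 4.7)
The approach I would take is to study the operator-valued map
\[
T(\la) := R_V(0) \, W \, R_U(-\la) \, W,
\]
appearing in the effective LS equation \eqref{geneigen3} specialized at $k=0$. The critical values featured in \eqref{teo: homo} and \eqref{teo: homo perturbative} are, by definition, those $\la$ for which $\ker(I - T(\la)) \neq \{0\}$, so the bulk of the argument is to count and locate them. Using the spectral decomposition
\[
R_U(-\la) = \sum_{j=0}^{N-1} \frac{\ket{\psi_j} \bra{\psi_j}}{\la + E_j} + R_U^{\mathrm{reg}}(-\la),
\]
with $R_U^{\mathrm{reg}}(-\la)$ analytic near each $|E_j|$, together with the compactness of $R_V(0) W$ on $\BB$ granted by Assumption \ref{pot} and the estimates of Sections \ref{sec:notation}--\ref{sec:eigenfunctions}, $T(\la)$ becomes a meromorphic family of compact operators on $\BB$ with simple poles exactly at $\la = |E_j|$. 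The analytic Fredholm theorem then gives that $(I - T(\la))^{-1}$ is meromorphic on $(0,+\infty)$ away from the $|E_j|$ and its poles form a discrete subset; since $\|T(\la)\| \to 0$ as $\la \to +\infty$, there is no accumulation at infinity, yielding the ``finitely many other points'' clause of item (i) on any compact interval of $(0,+\infty)$.

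For the existence part of (i), I would isolate the ground-state contribution and write, near $\la = |E_0|$,
\[
T(\la) = \frac{\ket{R_V(0) W \psi_0}\bra{W \psi_0}}{\la - |E_0|} + T_{\mathrm{reg}}(\la),
\]
with $T_{\mathrm{reg}}$ bounded. Assumption \ref{boundst}(b)--(c) guarantees that $R_V(0)$ is a non-negative operator and that $W \psi_0 \neq 0$, so $\alpha_0 := \braket{W \psi_0}{R_V(0) W \psi_0} > 0$ and the rank-one leading term of $T(\la)$ has an eigenvalue $\alpha_0/(\la - |E_0|)$ that diverges as $\la \to |E_0|^+$, while $\|T(\la)\| \to 0$ for $\la \to +\infty$. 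To extract a rigorous critical value $\la_0$ out of this dichotomy and bypass the non-self-adjointness of $T(\la)$, I would pass to a Birman--Schwinger-type kernel $\mathcal{M}(\la) := |W|^{1/2} R_V(0) W R_U(-\la) |W|^{1/2}$, which shares the nonzero spectrum of $T(\la)$ and can be symmetrized (using the positivity of $R_V(0)$ and of the residue of $R_U(-\la)$ at $E_0$) into a self-adjoint compact operator whose top eigenvalue inherits the blow-up at $|E_0|$. An intermediate-value argument on this continuous top eigenvalue produces at least one $\la_0 \in (|E_0|, +\infty)$ with $1 \in \sigma(T(\la_0))$, equivalent to a nontrivial solution $\eta_0 \in \BB$ of \eqref{teo: homo}.

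For item (ii), the smallness assumption $\|W\|_3 \leq \de_0$ enables a Neumann-series argument on the regular part: $\|(I - T_{\mathrm{reg}}(\la))^{-1}\| \leq 2$ uniformly on $[|E_{N-1}|, +\infty)$ outside shrinking balls around the $|E_j|$. Inside such a ball, I would factor out the singular rank-one term and apply the implicit function theorem to the scalar equation
\[
1 = \frac{\braket{W \psi_j}{R_V(0) W \psi_j}}{\la - |E_j|} + \OO(\|W\|_3^2),
\]
obtaining a unique solution $\la_j = |E_j| + \braket{W \psi_j}{R_V(0) W \psi_j} + \OO(\|W\|_3^4)$, which, for $\de_0$ small enough, lies in $(|E_j|, |E_{j-1}|)$ (with the convention $|E_{-1}| := +\infty$). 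Outside the union of these balls one has $\|T(\la)\| < 1$, ruling out further critical values on $[|E_{N-1}|,+\infty)$ and producing the exact count and the interlacing chain.

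The pole expansion of $\aeff$ near each $\la_j$ follows by writing $\varphi_\kv = (I - T(\la))^{-1} \eigenv$ and Laurent-expanding $(I - T(\la))^{-1} = P_j/(\la - \la_j) + R_j(\la)$ at the (simple) pole $\la_j$, with $P_j$ the finite-rank spectral projection onto $\ker(I - T(\la_j))$ and $R_j$ analytic at $\la_j$. Substituting into \eqref{scatamp} evaluated at $k=0$ produces
\[
\aeff = \frac{c_j}{\la - \la_j} + \OO(1), \qquad c_j = \tfrac{1}{4\pi} \mean{\phi_{V,0}}{W R_U(-\la_j) W}{P_j \phi_{V,0}},
\]
and the conditions \eqref{eq: resonant condition} and \eqref{eq: resonant condition perturbative} guarantee exactly $c_j \neq 0$. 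The main obstacle is item (i): producing a rigorous existence of $\la_0$ making $1 \in \sigma(T(\la_0))$ for the non-self-adjoint compact $T(\la)$ is the delicate point, which is why the symmetric reformulation through a Birman--Schwinger kernel---and the careful use of the sign structure of $R_V(0)$ and of the singular residue of $R_U(-\la)$ at $E_0$---is essential.
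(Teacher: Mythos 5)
Your overall strategy is the same as the paper's (symmetrize the non-self-adjoint operator, run an intermediate-value/monotonicity argument on the top eigenvalue to produce $\la_0$, use smallness of $W$ to obtain the interlacing in (ii), and Laurent-expand the inverse at each $\la_j$), but two of your key steps do not work as written. First, the kernel $\mathcal{M}(\la) = |W|^{1/2} R_V(0)\, W\, R_U(-\la)\, |W|^{1/2}$ is \emph{not} self-adjoint: its adjoint is $|W|^{1/2} R_U(-\la)\, W\, R_V(0)\, W^{1/2}$, and no positivity of the residue at $E_0$ repairs this, so the min--max/monotonicity machinery you want to invoke is not available for it. The paper conjugates on the \emph{other} side, setting $\zeta = H_V^{1/2}\eta$ and working with $K(\la) = R_V^{1/2}(0)\, W\, R_U(-\la)\, W\, R_V^{1/2}(0)$, which is genuinely self-adjoint and compact for every $\la$ (because $H_V \geq 0$ with trivial kernel, Assumption \ref{boundst}(b)), positive and monotone decreasing on $(|E_0|,+\infty)$, and whose eigenvalue branches $\mu_j(\la)$ can then be tracked across all the intervals $(|E_j|, |E_{j-1}|)$ --- not just near $|E_0|$, which is essential for part (ii).

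Second, your Laurent expansion $(I-T(\la))^{-1} = P_j/(\la-\la_j) + R_j(\la)$ with $P_j$ ``the spectral projection onto $\ker(I-T(\la_j))$'' is not justified and your resulting formula for $c_j$ is wrong. For a $\la$-dependent family the simplicity of the pole is not automatic: it must be extracted from the invertibility of $Q_j K'(\la_j) Q_j$ on $\ker(I-K(\la_j))$, which the paper obtains from $K'(\la) = -R_V^{1/2}(0) W R_U^2(-\la) W R_V^{1/2}(0) < 0$ via Assumption \ref{boundst}(c), and the residue is then $Q_j \lf[Q_j K'(\la_j) Q_j\ri]^{-1} Q_j$ (computed through the Schur--Grushin--Feshbach formula), not the bare projection; compare your $c_j$ with \eqref{eq:c0}--\eqref{eq:cj}, which carry the extra factor $Y_j = \lf(Q_j K'(\la_j) Q_j\ri)^{-1}$. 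Two further pieces are missing entirely: (a) $\aeff$ is \emph{defined} as $\lim_{k\to 0}\ampe$, so substituting into \eqref{scatamp} ``at $k=0$'' presupposes the continuity in $k$ and the non-accumulation of embedded eigenvalues at $k=0$ proved in Lemma \ref{lemma:positive} and Proposition \ref{usb}; and (b) the claim in (ii) that $\aeff$ is continuous on all of $[|E_{N-1}|,+\infty)\setminus\{\la_j\}$ includes the points $\la = |E_j|$, where $R_U(-\la)$ itself blows up, and requires the separate rank-one inversion argument given at the end of the paper's proof to show the singularities of $R_U$ cancel in the scattering length.
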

	
	\begin{remark}[Resonant condition]
		\label{rem: resonant condition}
		\mbox{}	\\
		Condition \eqref{eq: resonant condition} (resp. \eqref{eq: resonant condition perturbative}) is not simply a technical assumption, but rather a key ingredient which plays the role of a necessary condition for the occurrence of the resonance at $ \la_0 $ (resp. $ \la_j $), at least if $ \ker\lf(I - R_V (0) \, W\,  R_U(-\la_0)\, W \ri) $ is one-dimensional, which we are going to assume in this discussion. In other words, if $
			\braket{\phi_{V,0}}{   W  R_U(-\la_0) W \eta_0} = 0 $,
			the scattering length remains bounded at $ \la_0 $ and no resonance is present. Concretely, this is due to the fact that $ c_0 $ is proportional to the above quantity. In fact, the explicit expression of the coefficients $ c_0 $ and $ c_j $ is provided in \eqref{eq:c0} and \eqref{eq:cj}, respectively, and, in principle, their values could be computed or numerically estimated, once the potentials $ U, V $ and $ W $ are given.
		
		In order to understand the role played by such a condition, it is convenient to consider the analogy with one-particle Schr\"{o}dinger operators: it is very well known that an operator of the form $ - \Delta + v $ in three-dimensions has a zero-energy resonance if the operator $ |v|^{1/2} (-\Delta)^{-1} |v|^{1/2} $ has the eigenvalue $ 1 $ and, denoting by $ \phi_0 \in L^2(\R^3) $ a corresponding eigenvector, it must be 
		\beq
			\label{eq: resonant condition conventional}
			\braket{|v|^{1/2}}{\phi_0} \neq 0.
		\eeq
		Correspondingly, the scattering length diverges. If the above condition is not met, then $ 0 $ is not a resonance of $ - \Delta + v $ but rather a zero-energy eigenvalue, and the scattering length remains finite.
		
		Now, condition \eqref{eq: resonant condition} is precisely the analogue of \eqref{eq: resonant condition conventional} for our matrix Hamiltonian. This is also made apparent by the fact that $  W  R_U(-\la) W $ plays the role of an effective potential in the open channel and \eqref{eq: resonant condition} reduces to \eqref{eq: resonant condition conventional} for $ V = 0 $.	
	\end{remark}
	
	\begin{remark}[Number of critical points]
		\label{rem:Npoints}
		\mbox{}	\\
		Without the smallness hypothesis on $ W $, we can not prove that there are {\it at least} $ N $ critical points, due to a possible accidental degeneracy occurring to a compact operator, which plays a key role in the analysis (see Proposition \ref{crit} and Remark \ref{rem:Npoints-homo} for further details). Of course such a degeneracy does not typically occurs, since any infinitesimal change of the potentials $ U, V $, or $ W $, would break it and we expect at least $ N $ critical values as in case (ii). The further singular points mentioned at point (i) of the above Theorem, where $ a(\lambda) $ is not continuous, become the $ \la_j$'s localized at point (ii), thanks to the smallness of $ W $. 
	\end{remark}

	It is well known \cite[Def. 2]{hs} that in the one-body case the scattering length associated to a potential $ v $ can be expressed as
\beq
	a_v = \tx\frac{1}{4\pi}  \mean{v^{1/2}}{\lf( I +|v|^{1/2} (-\Delta)^{-1} v^{1/2} \ri)^{-1}}{|v|^{1/2}}
\eeq
where we have conventionally set $ v^{1/2} : = \sgn(v) |v|^{1/2} $. Hence, any divergence of the scattering length is associated to a nontrivial kernel of the operator $  I +|v|^{1/2} (-\Delta)^{-1} v^{1/2} $, i.e., a nontrivial solution of the zero-energy Schr\"{o}dinger equation $ (-\Delta + v) \psi = 0 $, which in turn implies the presence of a zero-energy resonance, provided the condition \eqref{eq: resonant condition conventional} is satisfied. The same behavior holds true in our case, namely there is a Feshbach resonance if and only if there is a non-$L^2$ solution of the zero-energy Schr\"{o}dinger equation (see \cite{CF} for a low-energy expansion of the resolvent).

	\begin{corollary}[Zero-energy equation]
 		\label{zerores}
 		\mbox{}	\\
		Under the same assumptions of Theorem \ref{main}, if $\la=\la_j$, then there exists a distributional solution of the zero-energy equation $\HH \Psi = 0$.
	\end{corollary}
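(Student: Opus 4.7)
The plan is to exhibit an explicit non-trivial distributional solution by exploiting the non-trivial element $\eta_j \in \mathscr{B}$ of $\ker(I - R_V(0)\,W\,R_U(-\lambda_j)\,W)$ produced by Theorem \ref{main}. Concretely, I will set
\[
\Psi := (\eta_j,\, \xi_j), \qquad \xi_j := -R_U(-\lambda_j)\,W\,\eta_j,
\]
and verify that the two components of $\HH\Psi=0$, namely
\[
H_V \eta_j + W \xi_j = 0, \qquad (H_U+\lambda_j)\xi_j + W \eta_j = 0,
\]
both hold in the sense of distributions.

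First I would check that $\xi_j$ is well-defined as an element of $H^2(\R^3)$. By the ordering in Theorem \ref{main} one has $\lambda_j \neq |E_i|$ for every $i$, so $-\lambda_j$ belongs to the resolvent set of $H_U$ and $R_U(-\lambda_j)$ is bounded on $L^2(\R^3)$. Since $\eta_j \in \mathscr{B}$ is bounded and $W \in L^2 \cap L^3$ by Assumption \ref{pot}, one has $W\eta_j \in L^2$, hence $\xi_j \in H^2(\R^3)$. The second equation then follows in the strong sense from the identity $(H_U+\lambda_j)R_U(-\lambda_j)=I$ on $L^2$.

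For the first equation I would rewrite the homogeneous identity \eqref{teo: homo} (resp. \eqref{teo: homo perturbative}) in the equivalent form $\eta_j = -R_V(0)\,W\,\xi_j$. Applying $H_V$ as a differential operator acting on distributions and using that, thanks to Assumption \ref{boundst} b), the boundary-value resolvent $R_V(0)$ is a distributional right-inverse of $H_V$ on functions belonging to a suitable $L^p$ class, one obtains $H_V \eta_j = -W\xi_j$, which is the desired identity. Non-triviality of $\Psi$ is immediate from $\eta_j \neq 0$.

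The step I expect to be the main technical obstacle is the rigorous justification that $H_V R_V(0) f = f$ holds distributionally for $f := W\xi_j$. This requires the regularity theory of the zero-energy resolvent $R_V(0)$, defined as the boundary value \eqref{eq:resolvent-bv}, together with the absence of zero-energy eigenvalues and resonances built into Assumption \ref{boundst} b); the relevant mapping properties of $R_V(0)$ on the $L^p$-class containing $W\xi_j$ are precisely those already used to set up and solve the effective Lippmann--Schwinger equation \eqref{geneigen3} in Section \ref{sec:eigenfunctions}. Once this identity is available, assembling the two components into $\HH\Psi = 0$ in $\mathscr{D}'(\R^3)\oplus\mathscr{D}'(\R^3)$ is automatic.
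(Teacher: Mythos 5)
Your proposal is correct and follows essentially the same route as the paper: the paper also takes a non-trivial solution $\varphi$ of the zero-energy homogeneous equation, sets $\Psi_0=(\varphi,\,-R_U(-\la_j)W\varphi)$, and verifies the two component equations using the mapping property $R_V(0):L^2_s\to H^2_{-s}$, $s>1$, from Remark \ref{rem:k02} (your ``main technical obstacle'' is exactly what that remark supplies, the relevant class being the weighted space $L^2_s$ rather than an $L^p$ class, since $W\in I_3$ and $\xi_j$ is bounded). No gaps.
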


Before dealing with the proof of the main results above, we present in next Section \ref{sec:separable} a simple solvable model in which the effective scattering length can be derived almost explicitly and the corresponding resonances easily found, i.e., the case of an inter-channel interaction given by a {\it separable} or {\it one-rank potential} of the form $ W  = \ket{w}\bra{w} $. Such an example will prove to be useful to present and mimic the  overall strategy of the proof, which we will discuss in the rest of the paper.

\section{Separable Potentials}
\label{sec:separable}

We consider here a toy model built in terms of separable potentials, i.e., we replace the potential $ W $ with a projector onto a function $ w \in I_3 $, i.e., only in this Sect. we set
\beq
	W = \ket{w} \bra{w}.
\eeq
Then,  equation  \eqref{geneigen3} takes the form:
\beq
	\varphi_\kv  = \phi_{V,\kv} + \meanlrlr{w}{R_U(k^2-\la)}{w} \braket{w}{\varphi_\kv} R_V(k^2) w.
\eeq
Taking the inner product of the above equation with $w$, we obtain
\bdm
	\braket{w}{\varphi_\kv} = \meanlrlr{w}{R_U(k^2-\la)}{w} \braket{w}{\varphi_\kv} \meanlrlr{w}{R_V(k^2)}{w} + \braket{w}{\phi_{V,\kv}}
\edm
and thus
\beq
	\label{eq:separable1}
	\varphi_\kv  =   \phi_{V,\kv} +  \frac{ \braket{w}{ \phi_{V,\kv}}\meanlrlr{w}{R_U(k^2-\la)}{w}}{1 -  \meanlrlr{w}{R_V(k^2)}{w}\meanlrlr{w}{R_U(k^2-\la)}{w}}  R_V(k^2) w.
\eeq
In order to recover the asymptotic \eqref{eq:varphi-asymptotics}, it now suffices to use the resolvent identity and the standard large distance expansion for $R_0(k^2) w$ (see, e.g., \cite{I}), i.e., 
\beq
	\lf(R_0(k^2) f\ri)(\xv) \underset{x \to + \infty}{=}  \frac{e^{i k x}}{4\pi x} \int_{\R^3} \diff \xvp \: e^{-i\kvp \cdot \xvp}  f(\xvp) + o(x^{-1}) = \braket{\phi_{0,\kvp}}{f} \frac{e^{i \kv \cdot \xv}}{4\pi x} + o(x^{-1}) ,
\eeq
where $ \kvp = k \hat{\xv} $, to get
\bmln{
	\lf(R_V(k^2) w\ri)(\xv) =  \lf(R_0(k^2) \lf(1 - VR_{V}(k^2)\ri) w\ri)(\xv) \\
	\underset{x \to + \infty}{=}  \braketr{\phi_{0,\kvp}}{\lf(1 - VR_{V}(k^2)\ri) w} \frac{e^{i k x}}{4\pi x} + o(x^{-1}) .
}
On the other hand, the LS equation can be rewritten as $\;
	\lf( I - R_{V}(k^2)V \ri) \phi_{0,\kvp} = \phi_{V,\kvp} \;$ 
and therefore
\beq
	\lf(R_V(k^2) w\ri)(\xv) \underset{x \to + \infty}{=}  \braket{\phi_{V,\kvp}}{w} \frac{e^{i k x}}{4\pi x} + o(x^{-1}).
\eeq
Hence, combining \eqref{eq:separable1}, the above asymptotics and the expansion \eqref{eq:ls-solution} for $ \phi_{V,\kv} $, we get
\bdm
	\varphi_\kv (\xv)  \underset{x \to + \infty}{=} e^{i \kv \cdot \xv} + \ampe  \frac{e^{i k x}}{x} + o(x^{-1}),
\edm
with
\beq
	\ampe =  \amp  + \frac{1}{4\pi} \frac{ \braket{w}{ \phi_{V,\kv}}  \braket{ \phi_{V,\kvp}}{w} \meanlrlr{w}{R_U(k^2-\la)}{w}}{1 -  \meanlrlr{w}{R_V(k^2)}{w}\meanlrlr{w}{R_U(k^2-\la)}{w}}.
\eeq
To compute the scattering length, we have then to take the limit  $ k\to 0$, obtaining
\beq
	\label{eq:aeff-separable}
	\aeff = a_V + \frac{1}{4\pi} \frac{\lf| \braket{w}{ \phi_{V,0}}\ri|^2  \meanlrlr{w}{R_U(-\la)}{w}}{1 -  \meanlrlr{w}{R_V(0)}{w}\meanlrlr{w}{R_U(-\la)}{w}}.
\eeq

In order to find the singularities of the scattering length, one has to investigate the denominator of the second term on the r.h.s. of the expression above, as $ \lambda $ varies in $ \R^+ $, since $ a_V $ is finite by assumption. We also note that $\meanlrlr{w}{R_V(0)}{w} \geq 0$ by Assumption 2, point (b), and $ \meanlrlr{w}{R_V(0)}{w} < + \infty$, which can be proved along the lines of Remark \ref{rem:k02}.

Let us  introduce the following  further assumptions: 
\beq
	 \meanlrlr{w}{R_V(0)}{w} >0,
\eeq
which is a rephrasing of Assumption \ref{boundst}, point c), and
\beq
	\label{eq:hp-R}
	\braket{w}{\psi_j} \neq 0,		\qquad		\mbox{for any } j = 0, \ldots, N-1.
\eeq
Observe that condition \eqref{eq:hp-R} is the analogue of the resonant condition \eqref{eq: resonant condition}. By equation \eqref{eq:aeff-separable} one sees that a singularity of $ \aeff $ does occur only for $\lambda$ solution of the equation 
\beq
	\label{eq:eq-lambda}
	F(\lambda) = \frac{1}{\meanlrlr{w}{R_V(0)}{w}}.
\eeq
where we have denoted $ F(\lambda) : = \meanlrlr{w}{R_U(-\la)}{w} $. By the spectral resolution of $H_U$ we have
\beq
	F(\lambda) = \sum_{j = 0}^{N-1} \frac{1}{E_j + \lambda} \lf| \braket{\psi_j}{w} \ri|^2 + \int_{0}^{+\infty} \diff \mean{w}{E(\mu)}{w} \: \frac{1}{ \mu + \lambda},
\eeq
where the second term is continuous and monotonically decreasing and the first one has singularities only at $ \lambda = - E_j $, $ j = 0, \ldots, N-1 $. More precisely,
\bdm
	\lim_{\lambda \to - E_j^{\pm}} F(\lambda) = \pm \infty.
\edm
Moreover 
\bdm
	\lim_{\lambda \to + \infty} F(\lambda) = 0,		
\edm
and $ F(\lambda) $ is decreasing in $ \lambda $: taking the derivative w.r.t. $ \lambda$ for $ \lambda \neq - E_j $, $j = 1, \ldots, N-1 $, we get
\beq
	\label{eq:Fprime}
	F^{\prime}(\lambda) = - \meanlrlr{w}{R^2_U(-\la)}{w} = - \lf\| R_U(-\la) w \ri\|_2^2 < 0.
\eeq
Hence, we obtain that if $\lim_{\lambda \rightarrow 0}F(\lambda)\leq 1/\meanlrlr{w}{R_V(0)}{w} $ then equation \eqref{eq:eq-lambda} admits  $ N $ solutions $ \lambda_0, \ldots, \lambda_{N-1} \in \R^+$, such that
\beq
	\label{eq:lambdi}
	0 < \lf|E_{N-1}\ri| < \lambda_{N-1} < \cdots < \lf|E_1\ri| < \lambda_1 < \lf|E_0\ri| < \lambda_0.
\eeq
On the other hand, if $\lim_{\lambda \rightarrow 0}F(\lambda)> 1/\meanlrlr{w}{R_V(0)}{w} $ then there is an additional solution $ \lambda_N $, such that 
\bdm
	0 < \lambda_N < |E_{N-1}|.
\edm

\begin{center}
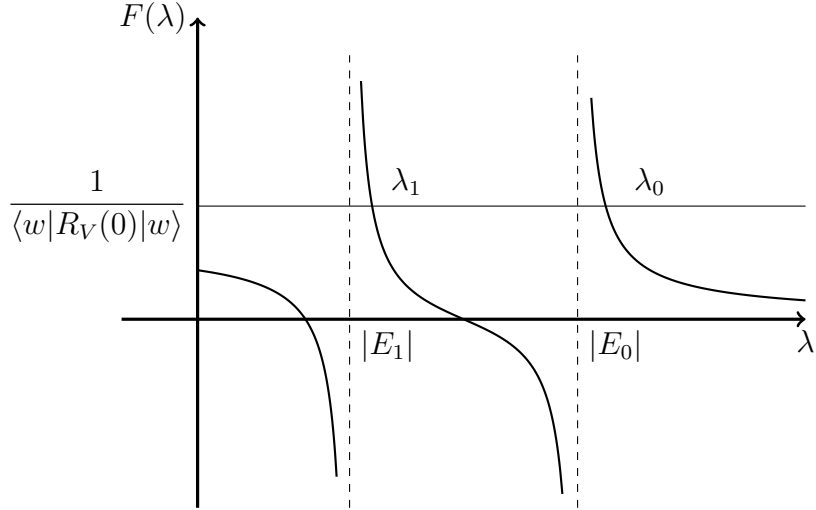
\begin{figure}
\label{fig:F}
\begin{tikzpicture}[scale=1]
\draw[->, very thick] (-1,0) to (8,0) ;
\draw[->, very thick] (0,-2.5) to  (0,4) ;
\node [below] at (8,0) {\scalebox{1.0}{$\la$}};
\node [left] at (0,1.5) {\scalebox{1.0}{$\dfrac{1}{\langle w| R_V(0) |w\rangle}$}};
\node[left] at (0,4) {\scalebox{1.0}{$F(\la)$} };
\draw[domain=5.18:8, samples=100, thick] plot[smooth] (\x, {(\x-3.5)/((\x-5)*(\x-2)) } ); 
\draw[domain=2.15:4.8, samples=100, thick] plot[smooth] (\x, {(\x-3.5)/((\x-5)*(\x-2)) } ); 
\draw[domain=0:1.83, samples=100, thick] plot[smooth] (\x, {1+(\x-3.5)/((\x-5)*(\x-2)) } );
\draw[thin, dashed] (5,-2.5) to (5,3.5);
\draw[thin, dashed] (2,-2.5) to (2,3.5);
\draw (0,1.5) to (8,1.5);
\node[above right ] at (5.6,1.5) {$\la_0$ };
\node[above right] at (2.4,1.5) {$\la_1$ };
\node[below right] at (5,0) {$|E_0|$ };
\node[below right] at (2,0) {$|E_1|$ };
\end{tikzpicture}
\caption{Qualitative behavior of the function $ F(\la) $, $ \la \in \R^+ $.}
\end{figure}
\end{center}

Therefore, we conclude that there exist at least $ N $ points $ \lambda_j \in \erre^+$, $ j = 0, \ldots, N-1 $, satisfying \eqref{eq:lambdi}, so that
\beq
	\aeff = \frac{c_j}{\lambda - \lambda_j} + \OO(1), 
\eeq
since by \eqref{eq:Fprime} the first derivative of $ F(\lambda) $ does not vanish at $ \lambda_j $. The explicit value of the constant $ c_j $ can be easily obtained from \eqref{eq:aeff-separable} and \eqref{eq:Fprime}:
\beq
	c_j  = \frac{1}{4\pi}  \frac{\lf| \braket{w}{ \phi_{V,0}}\ri|^2 \meanlrlr{w}{R_U(-\la_j)}{w}}{\meanlrlr{w}{R_V(0)}{w}\lf\| R_U(-\la_j) w \ri\|_2^2}.
\eeq
We have thus shown the result expressed in theorem \ref{main} (Feshbach resonances) in the special case of an inter-channel interaction given by a separable potential.

\section{Notation and Technical Lemmas}
\label{sec:notation}

We collect here some notation and useful technical results, which will be used in the rest of the paper. 

Given the important role played in the following, we recall the definition of weighted Hilbert spaces: set $ \lan x \ran : = \sqrt{1 + x^2} $ for short, then, for any $ s \geq 0 $, we define 
\beq
	\label{eq:wnorm}
	\lf\| f \ri\|_{L^2_s} : = \lf\| \lan x \ran^s f \ri\|_2.
\eeq
The closure of $ C^{\infty}_0 $ w.r.t. the above norm is denoted by $ L^2_s $. The weighted Sobolev space $ H^2_s $, $ s \geq 0 $, is defined analogously as the closure of $ C^{\infty}_0 $ w.r.t. the norm
\beq
	\label{eq:wSnorm}
	\lf\| f \ri\|_{H^2_s} : = \lf\| f \ri\|_{L^2_s} + \lf\| \Delta f \ri\|_{L^2_s}.
\eeq
The conventional Sobolev spaces $ H^p $, $ p \in \R $ can be defined via Fourier transform as the closure of $ C^{\infty}_0(\R^3) $ w.r.t. the norms 
\beq
	\lf\| f \ri\|_{H^p} : = \big\| \lan k \ran^p \hat{f} \big\|_2,
\eeq
 where we use the following convention for the Fourier transform
\beq
	\hat{f}(\kv) : = \frac{1}{(2\pi)^{3/2}} \int_{\R^3} \diff \xv \: e^{-i \kv \cdot \xv} f(\xv).
\eeq
By the properties of the Fourier transform and a simple exchange of the role of $ \xv $ and $ \kv $, one easily gets
\beq\label{LsHs}
	 \lf\| f \ri\|_{L^2_s}^2 = \int_{\R^3} \diff \xv \: \lf(1 +  x^2 \ri)^{s} \lf| f(\xv) \ri|^2 = \int_{\R^3} \diff \xv \: \lf( 1 + x^2 \ri)^{s} \Big| \widehat{\hat{f}}(-\xv) \Big|^2 = \big\| \hat{f} \big\|_{H^s}^2.
\eeq
Hence, we obtain the useful identity
\beq
	\label{eq:wSnorm-identity}
	\lf\| f \ri\|_{H^2_s} = \big\| \hat{f} \big\|_{H^s} + \big\| k^2 \hat{f} \big\|_{H^s}.
\eeq

We now recall some classical results on spectral and scattering theory mostly taken from \cite{A,I}, which will be used in the proofs. Recall the definition \eqref{eq:Bfunctions} of the space $ \mathscr{B} $ of continuous functions vanishing at $ \infty $. We also denote by $ {\mathcal B}(L^2) $ the space of bounded linear operators on $ L^2 $ and, more in general, $ \mathcal{B}(X,Y) $ stands for the space of continuous linear transformations between two Banach spaces $ X $ and $ Y $. Similarly, $ \LL^{\infty}(X,Y) $ is the space of compact operators from $ X $ to $ Y $ and $ \LL^{\infty}(X) : = \LL^{\infty}(X,X) $.

	\begin{remark}[$ I_n $ and compactness]
		\label{rem:compact}
		\mbox{}	\\
		By Sobolev embedding of $ H^2 $ in the space of continuous function, one can easily realize that the multiplication operator by a potential $ Z \in I_n $ belongs to $ \LL^{\infty}(H^2_s, L^2_{s+n}) $, for any $ s > 0 $.
	\end{remark}

We also state some  properties of the operator $ R_0(k^2) Z $ on $ \mathscr{B} $, where $ Z $ is a suitable Ikebe potential. In the following the parameter $ \delta > 0 $ is the one appearing in the definition of the Ikebe class (see Definition \ref{def:ikebe}) and thus it is a characteristic of the corresponding potential.

	\begin{lemma}[\mbox{\cite[Lemmas 3.1, 3.2, 4.1, 4.2 \& 4.6]{I}}] \label{ike}
		\mbox{}	\\
		Assume $Z\in I_2$ and consider the operator $T_Z (k) := R_0 (k^2) Z$ with integral kernel given by
		\beq
			T_Z \lf(\xv , \yv ; k\ri) = \frac{e^{i k|\xv-\yv| }}{4\pi|\xv-\yv|}  Z(\yv).
		\eeq
		Then,
		\begin{itemize}
			\item[a)] $T_Z (k) \in \LL^{\infty}(\BB) $;
			\item[b)] if $ f \in \mathscr{B} $, then 
				\beq
					 \lf( T_Z(k) f \ri) (\xv) \underset{x \to +\infty}{=}  \OO(x^{-\delta});
				\eeq
			\item[c)] if $ f \in \mathscr{B} $ and $ f(\xv) \underset{x \to +\infty}{=} \OO (x^{-1} )  $, then 
				\beq
					\lf( T_Z (k) f\ri)  (\xv) = \f{e^{i k x}}{4 \pi x} \int_{\RE^3} \diff \yv \: e^{-i \kvp \cdot \yv } f(\yv) + \OO(x^{-1-\de/2}) ;
				\eeq
			\item[d)] the map $ k \to T_Z (k)$ from $ \R^+ $ to $ \mathcal{B}(\mathscr{B}) $ is continuous in the uniform topology.
		\end{itemize}
	\end{lemma}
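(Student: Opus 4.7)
My plan is to treat the four claims in order, since they build on one another, and the hardest work is concentrated in establishing that $T_Z(k)$ sends $\mathscr{B}$ to $\mathscr{B}$ compactly; the asymptotic statements in (b)--(d) are then essentially refinements of the same decay estimates.

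For part (a), I would first verify that $T_Z(k)$ maps $\mathscr{B}$ into itself. Fix $f\in\mathscr{B}$ and the kernel $K(\xv,\yv;k) = \frac{e^{ik|\xv-\yv|}}{4\pi|\xv-\yv|}Z(\yv)$. Continuity in $\xv$ of $(T_Z(k)f)(\xv)$ follows from dominated convergence, using the local integrability of $|\xv-\yv|^{-1}$ near $\yv=\xv$ (here the hypothesis $Z\in L^2$ together with the polynomial decay are used to dominate) and the boundedness of $f$. To get $(T_Z(k)f)(\xv)\to 0$ as $x\to\infty$, split the $\yv$-integral into $|\yv|\le x/2$ and $|\yv|>x/2$: in the first region $|\xv-\yv|\ge x/2$ gives a factor $O(1/x)$ times $\int|Z|$; in the second region the Ikebe decay $|Z(\yv)|\le C\langle y\rangle^{-2-\delta}$ forces the integrand to be small while $f\in\mathscr{B}$ gives $|f(\yv)|\to 0$. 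Compactness then follows from an Arzel\`a--Ascoli argument: equicontinuity of the family $\{T_Z(k)f : \|f\|_\infty\le 1\}$ on compacts comes from uniform continuity of the convolution with the Yukawa-type kernel against an $L^2\cap L^1$ function, and uniform decay at infinity comes from the estimate just sketched, applied uniformly in $f$.

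For part (b), I would repeat the dyadic split from (a) more carefully. In the inner region $|\yv|\le x/2$, the factor $1/|\xv-\yv|\le 2/x$ combined with $|Z(\yv)|\le C\langle y\rangle^{-2-\delta}$ (which is $L^1$) yields a contribution $O(1/x)$. In the outer region $|\yv|>x/2$, the decay of $Z$ gives $|Z(\yv)|\le C x^{-2-\delta}$, while $\int_{|\xv-\yv|\le 1}\frac{d\yv}{|\xv-\yv|} = O(1)$ and $\int_{|\xv-\yv|>1}\frac{|f(\yv)|}{|\xv-\yv|}\,d\yv = O(\log x)\cdot\|f\|_\infty$ or similar; combining the two regions produces an $O(x^{-\delta})$ bound (the critical gain of the weaker $\delta$ instead of $1$ coming from the overlap at $|\yv|\sim x$).

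For part (c), where $f=O(x^{-1})$, I would isolate the leading order by inserting the geometric expansion $|\xv-\yv| = x - \hat{\xv}\cdot\yv + O(y^2/x)$ into both the phase and the denominator. Writing
\[
\frac{e^{ik|\xv-\yv|}}{|\xv-\yv|} = \frac{e^{ikx}}{x}\,e^{-ik\hat{\xv}\cdot\yv}\bigl(1 + O(y/x) + O(ky^2/x)\bigr),
\]
the main term factorises and gives $\frac{e^{ikx}}{4\pi x}\int e^{-i\kvp\cdot\yv}Z(\yv)f(\yv)\,d\yv$; the error terms are controlled by the integral $\int |\yv|\,|Z(\yv)||f(\yv)|\,d\yv$, which converges since $|Z(\yv)f(\yv)|\lesssim \langle y\rangle^{-3-\delta}$. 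A careful accounting of the splitting $|\yv|\le x^{1/2}$ versus $|\yv|>x^{1/2}$ (to handle the Taylor remainder uniformly) produces the stated $O(x^{-1-\delta/2})$ remainder.

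Part (d), continuity of $k\mapsto T_Z(k)$ in the operator norm on $\mathscr{B}$, reduces to estimating
\[
\bigl|(T_Z(k)-T_Z(k'))f(\xv)\bigr| \le \int\frac{|e^{ik|\xv-\yv|}-e^{ik'|\xv-\yv|}|}{4\pi|\xv-\yv|}|Z(\yv)||f(\yv)|\,d\yv
\]
and using $|e^{i\alpha}-e^{i\beta}|\le\min(2,|\alpha-\beta|)$, splitting $|\xv-\yv|\le M$ against $|\xv-\yv|>M$ and sending $M\to\infty$ after letting $k'\to k$; the integrability of $|Z|$ on $\R^3$ and the $\langle y\rangle^{-2-\delta}$ decay supply the uniform tail control in $\xv$. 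The main obstacle in the whole lemma is actually the interplay in (a) between the local singularity of the Green's kernel at $\yv=\xv$ and the need to get \emph{uniform} decay as $x\to\infty$ for the equicontinuity/compactness argument; once this is handled by the dyadic split, (b)--(d) follow by refining the same estimate, and the sharpness of the exponent $\delta$ in (b) and $\delta/2$ in (c) comes directly from the Ikebe tail rate in the definition of $I_n$.
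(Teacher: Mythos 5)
First, a point of reference: the paper offers no proof of this lemma at all --- it is quoted from Ikebe's classical paper --- so your reconstruction can only be compared with the standard argument, and the architecture you propose (show $T_Z(k)$ maps $\mathscr{B}$ to $\mathscr{B}$, Arzel\`a--Ascoli for compactness, a dyadic split in $\yv$ for the decay rates, Taylor expansion of $|\xv-\yv|$ for the spherical-wave asymptotics, and $|e^{i\alpha}-e^{i\beta}|\leq\min(2,|\alpha-\beta|)$ for norm continuity) is exactly that standard argument. Incidentally, your leading term in (c) carries the factor $Z(\yv)$ inside the integral, which is what the application in the proof of Proposition \ref{pro:solution-scattering} actually requires; the paper's statement of (c) omits it.

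There is, however, a recurring quantitative error that breaks several of your steps as written: you repeatedly treat $Z\in I_2$ as if it were in $L^1(\R^3)$. It is not: the Ikebe decay $|Z(\yv)|\leq C y^{-2-\delta}$ with $\delta\in(0,1)$ gives $\int_{R_0}^{\infty} r^{-2-\delta}\,r^2\,\diff r=\int_{R_0}^{\infty} r^{-\delta}\,\diff r=+\infty$. Concretely: (i) in (a)/(b) the inner region $|\yv|\leq x/2$ contributes $O(x^{-1})\cdot\int_{|\yv|\leq x/2}|Z|=O(x^{-1})\cdot O(x^{1-\delta})=O(x^{-\delta})$, not $O(x^{-1})$ --- this already saturates the claimed rate, so the exponent $\delta$ is produced by the bulk of the integral, not only by the overlap at $|\yv|\sim x$; (ii) in your outer region for (b) you pull out $\sup_{|\yv|>x/2}|Z|$ and are left with $\int_{|\xv-\yv|>1}|\xv-\yv|^{-1}|f(\yv)|\,\diff\yv$ over an unbounded set, which diverges for $f$ merely bounded --- you must keep the decay of $Z$ inside the integral and use the convolution bound $\int_{\R^3}|\xv-\yv|^{-1}\langle y\rangle^{-2-\delta}\,\diff\yv=O(x^{-\delta})$; (iii) in (c) the integral $\int|\yv|\,|Z(\yv)||f(\yv)|\,\diff\yv$ does \emph{not} converge, since $|\yv|\cdot\langle y\rangle^{-3-\delta}=\langle y\rangle^{-2-\delta}$, and this is precisely why the remainder is only $O(x^{-1-\delta/2})$ rather than $O(x^{-2})$: the cut at $|\yv|\sim x^{1/2}$ is not a uniformity technicality but the mechanism generating the exponent, the region $|\yv|\leq\sqrt{x}$ giving $x^{-2}\int_0^{\sqrt{x}}r^{1-\delta}\,\diff r\sim x^{-1-\delta/2}$ and the tail $|\yv|>\sqrt{x}$ giving $x^{-1}\int_{\sqrt{x}}^{\infty}r^{-1-\delta}\,\diff r\sim x^{-1-\delta/2}$; (iv) in (d) ``the integrability of $|Z|$ on $\R^3$'' is again false --- what is actually available is $Z\in L^2$ for the region $|\xv-\yv|\leq M$ (via Cauchy--Schwarz, which also handles the finitely many local singularities where the pointwise bound fails) together with $\sup_{\xv}\int_{|\xv-\yv|>M}|\xv-\yv|^{-1}\langle y\rangle^{-2-\delta}\,\diff\yv=O(M^{-\delta})$ for the tail. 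With these repairs the argument closes and coincides with Ikebe's.
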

	
	Next we recall some useful properties of the free resolvent in weighted Hilbert spaces.

	\begin{lemma}[\mbox{\cite[Thm. 4.1]{A}}]
		\label{agmon1}
		\mbox{}	\\
		For any $k^2> 0$, $R_0 (k^2) \in {\mathcal B} (L^2_s, H^2_{-s}) $, for any $ s > 1/2$. Furthermore, for any $f \in L^2_s$, with $s>1/2$, the function $ u= R_0 (k^2) f$ solves
		\beq
			\lf(-\Delta - k^2 \ri) u =  f
		\eeq
		in distributional sense and the following identity holds:
		\beq
			\Im \braketl{R_0 (k^2)f}{f} = \f{\pi}{ 2 k} \int_{S^2(k)} \diff \sigma(\kv) \: \big| \hat f (\kv)  \big|^2,
		\eeq
		where $ S^2(k) $ is the sphere of radius $ k $ and $\diff \sigma(\kv) $ is the surface  measure.
	\end{lemma}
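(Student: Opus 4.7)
The plan is to work on the Fourier side, where $R_0(k^2)$ becomes multiplication by $(|\pv|^2 - k^2 - i 0)^{-1}$, understood as the $\eps \to 0^+$ limit. Using the Sokhotski--Plemelj decomposition
\[
\frac{1}{|\pv|^2-k^2-i0} = \mathrm{PV}\,\frac{1}{|\pv|^2-k^2} + i\pi\,\delta\bigl(|\pv|^2-k^2\bigr),
\]
the analysis splits naturally into a principal-value part and a delta part concentrated on the sphere $S^2(k)$. By the identities \eqref{LsHs} and \eqref{eq:wSnorm-identity}, the claim $R_0(k^2) \in \mathcal{B}(L^2_s, H^2_{-s})$ for $s>1/2$ is equivalent to showing that, for $\hat f \in H^s(\RE^3)$, both $\hat u := (|\pv|^2-k^2-i0)^{-1}\hat f$ and $|\pv|^2 \hat u$ lie in $H^{-s}(\RE^3)$.

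For the delta contribution the key tool is the classical trace theorem: for $s>1/2$ the restriction map $H^s(\RE^3) \to L^2(S^2(k))$ is bounded, and by duality its adjoint sends $L^2(S^2(k))$ into $H^{-s}(\RE^3)$; composing yields the required bound. For the principal-value piece, near the sphere one writes $|\pv|^2-k^2 = 2k(|\pv|-k)+(|\pv|-k)^2$ and reduces the estimate to the boundedness from $H^s$ to $H^{-s}$ of a one-dimensional Hilbert-type transform across $|\pv|=k$, which is standard in this regularity range. The extra factor $|\pv|^2$ required to control $\|u\|_{H^2_{-s}}$ gives no additional difficulty thanks to $|\pv|^2 = k^2 + (|\pv|^2-k^2)$, the second term cancelling the singular denominator.

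Once boundedness is established, the distributional identity $(-\Delta - k^2)u = f$ is immediate, since on the Fourier side the multiplier $(|\pv|^2-k^2)\,(|\pv|^2-k^2-i0)^{-1}$ equals $1$ as a tempered distribution. Finally, for the imaginary part, Sokhotski--Plemelj directly yields
\[
\Im \braket{R_0(k^2)f}{f} = \pi \int_{\RE^3} \diff \pv \: \delta\bigl(|\pv|^2-k^2\bigr)\, \bigl|\hat f(\pv)\bigr|^2,
\]
and switching to spherical coordinates $\pv = r\bm{\omega}$ with $\delta(r^2-k^2)=(2k)^{-1}\delta(r-k)$ for $r>0$, followed by rewriting the unit-sphere integration as $k^{-2}$ times the surface measure on $S^2(k)$, produces the stated identity. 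I expect the most delicate step to be the rigorous principal-value bound in the weighted scale, whose proof essentially amounts to Agmon's trace lemma; the other two parts then follow by direct computation.
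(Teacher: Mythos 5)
The paper offers no proof of this lemma: it is imported verbatim from Agmon \cite[Thm.~4.1]{A}, so there is no internal argument to compare against. Your sketch is a correct outline of the standard (indeed, Agmon's own) proof of the limiting absorption principle for the free resolvent: passing to the Fourier side, splitting $(|\pv|^2-k^2-i0)^{-1}$ into principal value plus $i\pi\delta(|\pv|^2-k^2)$, controlling the delta part by the trace theorem $H^s(\R^3)\to L^2(S^2(k))$ for $s>1/2$ together with duality, and observing that $|\pv|^2\hat u=k^2\hat u+\hat f$ so that the second Sobolev derivative costs nothing extra; the surface-measure bookkeeping $\delta(r^2-k^2)=(2k)^{-1}\delta(r-k)$ and $\diff\sigma=k^2\diff\omega$ correctly reproduces the factor $\pi/(2k)$. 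Three caveats. First, writing the resolvent as the distribution $(|\pv|^2-k^2-i0)^{-1}$ presupposes that the boundary value $\lim_{\eps\to0^+}(-\Delta-k^2-i\eps)^{-1}f$ exists in $H^2_{-s}$, which is itself part of the assertion; to close the argument one should prove the estimates uniformly in $\eps>0$ and then deduce convergence, rather than manipulate the limit object from the outset. Second, the principal-value bound is the entire analytic content of the lemma, and ``a one-dimensional Hilbert-type transform'' conceals the essential step of subtracting from $\hat f$ its trace on $S^2(k)$ (so the trace theorem enters a second time) before dividing by $|\pv|^2-k^2$; this is precisely the paper's Lemma~\ref{agmon2} (\cite[Thm.~3.2]{A}), and you rightly flag it as the delicate point, but as written it is asserted rather than proved. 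Third, the positive sign in the identity holds with the inner product linear in its first argument (Agmon's convention); under the physics convention suggested by the paper's bra--ket typography the right-hand side would carry a minus sign, so it is worth stating which convention you are using.
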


	\begin{remark}[Extension to $ k = 0 $]
		\label{rem:k01}
		\mbox{}	\\
		The same results of Lemma \ref{agmon1} hold for $k =0$ under the stronger assumption $s>1$: see, e.g., \cite{Y} for the proof that  $R_0 (0) \in {\mathcal B} (L^2_s , L^2_{-s})$, while the improvement to $H_{-s}^2$ is standard 
		\[
			\lf\| R_0 (0) f \ri\|_{H^2_{-s}} = \lf\| R_0(0) f \ri\|_{L^2_{-s}} + \lf\| \Delta R_0 (0) f \ri\|_{L^2_{-s}} \leq C \lf\| f \ri\|_{L^2_{s}}
+ \| f\|_{L^2_{-s}} \leq C  \| f\|_{L^2_{s}}.
		\]
		Notice also that $ R_0 (k^2) $, $ k \geq 0 $, is a compact operator from $ L^2_s \to L^2_{-s} $ (see \cite{Y}).
	\end{remark}

	\begin{lemma}[\mbox{\cite[Thm. 3.2]{A}}] 
		\label{agmon2}
		\mbox{}	\\
		Let $f\in H^s$, $s>1/2$,  and let the spherical average of $f$ vanish on the sphere of radius $ k > 0 $. Set for any $ \alpha_j \in \N $, $ j = 1,2,3 $, such that $0 \leq \sum_j \alpha_j \leq 2$,
		\[
			g_{\alpha} (\xv) := \frac{x_1^{\alpha_1} x_2^{\alpha_2} x_3^{\alpha_3} f(\xv)}{x^2-k^2}.
		\]
		Then, $g_\al \in H^{s-1} \cap L^1_{\mathrm{loc}}$ and  $\lf\| g_\al \ri\|_{ H^{s-1}} \leq C \|f\|_{H^s}$.
	\end{lemma}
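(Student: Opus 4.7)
The plan is to localize the analysis around the sphere $S^2(k) = \{|x| = k\}$, the only locus where the denominator vanishes and hence the only potential obstruction to the claim. I would first introduce a smooth cutoff $\chi$ supported in a thin annular neighborhood of $S^2(k)$ and equal to one on a narrower one, and split $g_\alpha = \chi g_\alpha + (1-\chi) g_\alpha$, treating the two pieces by different arguments.

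For the exterior piece $(1-\chi) g_\alpha$, the coefficient $(1-\chi)(x)\, x^\alpha/(x^2 - k^2)$ is smooth with all derivatives bounded; the polynomial growth from $|\alpha| \leq 2$ is absorbed by the quadratic denominator at infinity, so it acts as a bounded Fourier multiplier on $H^{s-1}$. Standard multiplier estimates then give $\|(1-\chi) g_\alpha\|_{H^{s-1}} \leq C \|f\|_{H^s}$ at once, reducing the remaining work to a neighborhood of the critical sphere.

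For the interior piece $\chi g_\alpha$ I would pass to polar coordinates $x = r\omega$, with $r = |x|$ and $\omega \in S^2$. Since $x^2 - k^2 = (r-k)(r+k) \simeq 2k(r-k)$ near $r=k$, the smooth factors $\chi$, $(r+k)^{-1}$, $r^{|\alpha|}$, $\omega^\alpha$ combine into a bounded smooth multiplier, and the problem reduces to showing that $f(r\omega)/(r-k)$ lies in $H^{s-1}_{\mathrm{loc}}$ near the sphere. Since $s > 1/2$, the trace theorem yields $f_k := f|_{S^2(k)} \in H^{s-1/2}(S^2(k))$. Decomposing $f(r\omega) = f_k(\omega) + \bigl(f(r\omega) - f_k(\omega)\bigr)$, the second term vanishes at $r=k$ and a one-dimensional Hardy inequality in the radial variable gives $(f(r\omega) - f_k(\omega))/(r-k) \in H^{s-1}_{\mathrm{loc}}$ with the required bound.

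The hard part will be handling the surviving singular piece $f_k(\omega)/(r-k)$, where the hypothesis on the vanishing spherical average enters together with the monomial numerator $x^\alpha$. Expanding $\omega^\alpha$ in spherical harmonics of degree at most $|\alpha|\leq 2$ and pairing against the distributional kernel $\mathrm{p.v.}\,(r-k)^{-1}$, the only genuine obstruction to the result lying in $H^{s-1}\cap L^1_{\mathrm{loc}}$ is a logarithmically singular contribution whose coefficient is proportional to $\int_{S^2(k)} f_k \, d\sigma$, which vanishes by hypothesis. The remaining principal-value distribution can then be bounded in $H^{s-1}$ by standard pseudo-differential estimates for homogeneous singular integrals, closing the argument. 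This last step—careful spherical-harmonic bookkeeping for the pairing against the singular kernel and the verification of local integrability from the vanishing mean condition—is where I anticipate the main technical work.
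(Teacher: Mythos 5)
First, a remark on the ground truth: the paper contains no proof of this lemma. It is imported verbatim from Agmon \cite[Thm.~3.2]{A} (the classical ``division theorem'' underlying the limiting absorption principle), so the comparison below is necessarily with the standard proof of that result rather than with anything in the text. Your localization near the sphere, the passage to polar coordinates, and the one-dimensional Hardy-inequality treatment of $\left(f(r\omega)-f_k(\omega)\right)/(r-k)$ follow exactly that standard route, and that part of the sketch is sound in outline (modulo the fractional-Hardy/interpolation details needed for non-integer $s$, which you do not address).

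The genuine gap sits precisely in the step you flag as the ``hard part'': the surviving piece $f_k(\omega)/(r-k)$ cannot be rescued by the vanishing of the spherical \emph{average}, and the mechanism you propose does not exist. Two concrete obstructions. (1) The claim $g_\alpha\in L^1_{\mathrm{loc}}$ concerns the pointwise-defined function, and near the sphere $\int_{S^2}\int_{k-\varepsilon}^{k+\varepsilon}\frac{|f_k(\omega)|}{|r-k|}\,r^2\,\diff r\,\diff\omega$ factors into an angular integral times a divergent radial one; the integrand is non-negative, so no cancellation coming from $\int_{S^2(k)}f_k\,\diff\sigma=0$ can restore local integrability --- the divergence is purely radial and blind to the angular mean. (2) Even reinterpreted as a principal value, $f_k(\omega)\,\mathrm{p.v.}\,(r-k)^{-1}$ has radial Fourier transform proportional to $\sgn$, hence lies locally in $H^{\sigma}$ only for $\sigma<-1/2$; since $s-1>-1/2$, it is never in $H^{s-1}$ unless $f_k\equiv 0$. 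The resolution is that the hypothesis --- despite the loose wording ``spherical average'' --- must be read, as in Agmon and as in both places the paper actually invokes the lemma (where $\int_{S^2(k)}|\hat f|^2\,\diff\sigma=0$ is what gets established), as the vanishing of the full trace $f|_{S^2(k)}$ in $L^2(S^2(k))$. Under that reading $f_k\equiv 0$, your ``hard part'' is vacuous, and the remainder of your sketch reduces to the standard argument; as written, however, the proposal rests on a cancellation that cannot occur and would purport to prove a false statement.
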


	Finally, we consider the resolvent $ R_{V}(k^2) $ (recall that $ V \in I_4 $ by Assumption \ref{pot}, point (b)). Exploiting the fact that $ V $ has no non-negative eigenvalues, we have

	\begin{lemma}[\mbox{\cite[Thm. 4.2]{A}}] 
		\label{agmon3}
		\mbox{}	\\
		For any $k^2> 0$, $R_V (k^2) \in {\mathcal B} (L^2_s , H^2_{-s})$, for any $s>1/2$. Furthermore, for any $f \in L^2_s$, with $s>1/2$, the function $u= R_V (k^2)f$ solves 
		\beq
			(-\Delta +V- k^2 ) u =  f
		\eeq
		in distributional sense and the resolvent identity holds, i.e.,
		\beq
			R_V (k^2)f = R_0 (k^2)f - R_0 (k^2)\, V \, R_V (k^2)f.
		\eeq
	\end{lemma}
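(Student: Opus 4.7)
The plan is to follow Agmon's limiting absorption strategy \cite{A}, bootstrapping from the free-resolvent properties of Lemma \ref{agmon1}. The argument has three stages: (i) construct $R_V(k^2)$ by a Fredholm-type inversion on weighted spaces, (ii) derive the resolvent identity and distributional equation, and (iii) upgrade the target regularity from $L^2_{-s}$ to $H^2_{-s}$.

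Stage (i). Since $V\in I_4$ decays as $x^{-4-\delta}$ at infinity, multiplication by $V$ sends $L^2_{-s}$ into $L^2_s$ for any $1/2<s<2+\delta/2$, so by Lemma \ref{agmon1} the composition $R_0(k^2)V$ belongs to $\mathcal{B}(L^2_{-s})$. Approximating $V$ in the $I_4$-seminorm by compactly supported potentials and using the compactness of $R_0(k^2):L^2_s\to L^2_{-s}$ (the analogue for $k>0$ of Remark \ref{rem:k01}, via Rellich--Kondrachov applied on the $H^2_{-s}$-image) upgrades this to $R_0(k^2)V\in\LL^{\infty}(L^2_{-s})$. Analytic Fredholm theory then supplies $(I+R_0(k^2)V)^{-1}$ at every $k>0$ where $I+R_0(k^2)V$ is injective on $L^2_{-s}$; any nontrivial kernel element would be a polynomially bounded distributional solution of $(-\Delta+V-k^2)u=0$, and for Agmon-class $V$ such solutions must vanish because $H_V$ has no positive eigenvalues (bulleted list after Assumption \ref{pot}). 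One therefore defines $R_V(k^2):=(I+R_0(k^2)V)^{-1}R_0(k^2)\in\mathcal{B}(L^2_s,L^2_{-s})$.

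Stage (ii). The resolvent identity is built into the definition: $R_V(k^2)=R_0(k^2)-R_0(k^2)\,V\,R_V(k^2)$. Applying the distributional identity of Lemma \ref{agmon1} with source $f-V R_V(k^2)f\in L^2_s$ gives $(-\Delta+V-k^2)R_V(k^2)f=f$ in distributional sense. To identify the operator just constructed with the honest boundary value of the resolvent of $H_V$, I would show that $(I+R_0(k^2+i\varepsilon)V)^{-1}$ is uniformly bounded on $L^2_{-s}$ as $\varepsilon\downarrow 0$ (Fredholm theory plus continuity of $R_0(\cdot)$ across the real axis) and then pass to the limit in the factored identity $(H_V-k^2-i\varepsilon)^{-1}=(I+R_0(k^2+i\varepsilon)V)^{-1}R_0(k^2+i\varepsilon)$. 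Stage (iii) is a routine bootstrap: the distributional equation rewrites as $\Delta u=(V-k^2)u-f$, whose right-hand side lies in $L^2_{-s}$ since $V$ is bounded, so $u\in H^2_{-s}$ with the quantitative estimate $\lf\|u\ri\|_{H^2_{-s}}\leq C\lf\|f\ri\|_{L^2_s}$.

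The main obstacle is the invertibility step inside stage (i): the Fredholm alternative degenerates if $I+R_0(k^2)V$ has nontrivial kernel, and ruling this out requires the absence of positive eigenvalues of $H_V$, a deep result for Agmon potentials invoked here through \cite[Thm. XIII.58]{RS4}. Once this is in hand, the remaining steps amount to algebraic manipulation together with the weighted mapping properties already packaged in Lemma \ref{agmon1} and Remark \ref{rem:k01}.
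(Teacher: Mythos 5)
This lemma is quoted in the paper directly from Agmon's work (\cite[Thm.~4.2]{A}) and carries no proof of its own, so the only meaningful comparison is with the argument of the cited reference; your three-stage plan does reproduce Agmon's architecture (limiting absorption for $R_0$, Fredholm inversion of $I+R_0(k^2)V$ on a weighted space, identification with the boundary value of the resolvent, elliptic upgrade to $H^2_{-s}$). However, there is one genuine gap, and it sits exactly at the point you yourself flag as ``the main obstacle''. A nontrivial kernel element $u\in L^2_{-s}$ of $I+R_0(k^2)V$ is \emph{not} ruled out by the absence of positive eigenvalues of $H_V$: that theorem excludes $L^2$ eigenfunctions, whereas your $u$ is a priori only polynomially weighted and could in principle be a non-$L^2$ ``positive-energy resonance''. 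The analytic heart of Agmon's Theorem 4.2 is precisely the step that upgrades such a $u$ to a genuine $L^2$ eigenfunction before Kato's theorem can be invoked: writing $u=-R_0(k^2)f$ with $f=Vu\in L^2_s$, the reality of $V$ forces
\begin{equation*}
\Im \braketl{R_0(k^2)f}{f} = \frac{\pi}{2k}\int_{S^2(k)}\diff\sigma(\pv)\,\big|\hat f(\pv)\big|^2 = 0,
\end{equation*}
so $\hat f$ vanishes on the sphere of radius $k$, and the division Lemma \ref{agmon2} then yields $u\in L^2_{s-1}\subset L^2$; only at that point does the absence of positive eigenvalues give $u=0$. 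This is exactly the chain the paper runs in the proof of Proposition \ref{poseigen} (see \eqref{trnulla}), and without it your Fredholm alternative does not close. You should insert this step explicitly rather than asserting that ``polynomially bounded solutions must vanish''.

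A secondary, more cosmetic issue: the claim that multiplication by $V$ maps $L^2_{-s}$ into $L^2_s$ uses only the decay $|V|\lesssim x^{-4-\delta}$, but the Ikebe class permits finitely many local singularities where $V$ is merely $L^2$, so $V$ need not be locally bounded. The standard fix (and the one the paper adopts in Remark \ref{rem:compact}) is to let $V$ act on $H^2_{-s}$ rather than $L^2_{-s}$, using the Sobolev embedding to control $u$ locally in $L^\infty$; the Fredholm argument should accordingly be set up for $R_0(k^2)V$ as a compact operator on $H^2_{-s}$ (or on $\mathscr{B}$, as in Lemma \ref{ike}). The same caveat applies to your Stage (iii), where ``$V$ is bounded'' is not available; the cleaner route is to read $u=R_0(k^2)\lf(f-VR_V(k^2)f\ri)$ with source in $L^2_s$ and apply Lemma \ref{agmon1} directly.
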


	\begin{remark}[Extension to $ k = 0 $]
		\label{rem:k02}
		\mbox{}	\\
		For $ k = 0 $ the above result does not apply, but one has nevertheless
		\beq \label{bujo}
			R_V (0): L^2_s \to H^2_{-s}, \qquad	\mbox{for any }  s>1.
		\eeq
		Indeed, since $R_0 (0) \in {\mathcal B} ( L^2_{-s},H^2_s)$, for any $ s > 1 $, we have that $V R_0 (0) \in {\mathcal B}( L^2_{s})$ and it is in fact a compact operator on $ L^2_s $, by Remark \ref{rem:compact} and Assumption \ref{pot}, point b). 
Therefore, by Freedom alternative
and Assumption \ref{boundst}, point b), $(I + V R_0 (0) )^{-1} \in {\mathcal B}( L^2_{s}) $: absence of zero energy eigenvalues or resonances for $ V $ guarantees that the equation $ (I + V R_0 (0) ) f = 0 $ admits only the trivial solution. In  conclusion, the resolvent identity
$R_V (0) = R_0 (0) (I + V R_0 (0) )^{-1}$ implies the claim via Remark \ref{rem:k01}.
	\end{remark}


\section{Generalized eigenfunctions}
\label{sec:eigenfunctions}

In the first  two propositions we relate of the positive eigenvalues of $\HH$ and the solutions of the homogeneous equation associated to \eqref{geneigen3}. Let us then set 
\beq
	\EE : = \sigma_{\mathrm{pp}}(\HH) \cap \R^+.
\eeq

	\begin{proposition}[Positive eigenvalues of $ \HH $]
		\label{pro:positive-eigenvalues}
		\mbox{}	\\
		The set $\EE$ is discrete and its only accumulation points can be $0$ or $+ \infty$. Furthermore, each eigenvalue in $ \EE $ has finite multiplicity.
	\end{proposition}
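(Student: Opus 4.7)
My strategy is a Feshbach--Schur reduction to the open channel, which turns the two-channel spectral problem into a scalar Lippmann--Schwinger-type equation to which analytic Fredholm theory applies.

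Fix $E \in (0,\la)$ with $E \neq \la + E_j$ for all $j = 0,\ldots, N-1$. Then $H_U + \la - E$ is boundedly invertible on $L^2$, and from the second line of \eqref{geneigen} we get $\xi = -R_U(E-\la) W \varphi$, reducing $\HH\Psi = E\Psi$ with $\Psi = (\varphi,\xi) \in L^2 \oplus L^2$ to the effective scalar equation $(H_V - E)\varphi = K(E)\varphi$, where $K(E) := W R_U(E-\la) W$ is bounded self-adjoint and compact on $L^2$ by Remark \ref{rem:compact} and Assumption \ref{pot}. Using the fast decay of $W$ one checks $K(E)\varphi \in L^2_s$ for $s>1/2$, so Lemma \ref{agmon3} recasts the equation in Lippmann--Schwinger form $(I - T(E))\varphi = 0$ with $T(E) := R_V(E+i0) W R_U(E-\la) W$. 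By Lemma \ref{ike} and the decay of $W$, $T(E)$ is a compact operator on $\BB$, continuous in operator norm in $E$. The correspondence $\varphi \mapsto (\varphi, -R_U(E-\la) W\varphi)$ identifies $\ker(\HH - E)$ with $L^2 \cap \ker(I - T(E))$, which is finite-dimensional by Fredholm compactness; this already gives finite multiplicity.

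To extract discreteness I extend $T(E)$ analytically to the upper half-plane $\{E \in \C : \Im E > 0\}$, where both $R_V(E)$ and $R_U(E-\la)$ are bounded holomorphic operator-valued functions (away from the isolated poles $\la + E_j$). Since $\|T(E)\| \to 0$ as $\Im E \to +\infty$, $I - T(E)$ is invertible there, and the analytic Fredholm theorem furnishes a meromorphic inverse $(I - T(E))^{-1}$ with a discrete pole set on the half-plane. Discreteness on the real segment $(0,\la)\setminus\{\la+E_j\}$ then follows from a compactness-plus-orthogonality argument that replaces the missing analyticity across the real axis: along a putative accumulating sequence $E_n \to E_*$ of distinct eigenvalues with unit eigenvectors $\Psi_n=(\varphi_n,\xi_n)$, self-adjointness forces $\Psi_n \rightharpoonup 0$, while the bound $\|\xi_n\|_{L^2} \leq \|R_U(E_n-\la)W\|\,\|\varphi_n\|_{L^2}$ (uniform on compact subsets avoiding $\la+E_j$) gives $\|\varphi_n\|_{L^2} \geq c >0$; the fixed-point identity $\varphi_n = T(E_n)\varphi_n$, together with norm-continuity $T(E_n)\to T(E_*)$ and compactness of $T(E_*)$ on $\BB$, produces a $\BB$-convergent subsequence of $\varphi_n$ with non-trivial limit in $\ker(I-T(E_*))$, contradicting the weak decay combined with Agmon-type spatial decay of eigenfunctions of the effective operator $H_V - K(E)$.

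For $E \in [\la,+\infty)$ I run the symmetric reduction: solve formally $\varphi = -R_V(E+i0) W \xi$ subject to the $L^2$-compatibility (vanishing of the spherical mean of $\widehat{W\xi}$ on $S^2(\sqrt{E})$, cf.\ Lemma \ref{agmon1}) and derive a Fredholm equation for $\xi$ on the closed channel, then repeat the analytic-Fredholm/compactness argument verbatim. At the finitely many exceptional points $E = \la + E_j$ a direct residue analysis shows that any $L^2\oplus L^2$ eigenvector must have a non-zero $\psi_j$-component in the closed channel, which gives only finitely many such eigenvalues in $\R^+$. The main obstacle is precisely the bridge between discreteness on the open upper half-plane and discreteness on the real axis: under the polynomial-decay Ikebe hypotheses, the boundary value $R_V(E+i0)$ is only H\"older-continuous---not analytic---in $E$, so the conclusion cannot be deduced by naive analytic continuation, and the self-adjoint structure of $\HH$ (through mutual orthogonality of eigenvectors and decay of embedded eigenfunctions) must be used essentially to close the argument.
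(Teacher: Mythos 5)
Your reduction to a Fredholm equation on $\BB$ is a reasonable starting point, but the proof has two genuine gaps, and you essentially concede the first one yourself. The decisive step is ruling out accumulation of embedded eigenvalues at an interior point $E_*\in(0,+\infty)$, and your argument for it is not closed. The analytic Fredholm theorem in the open upper half-plane contributes nothing here (the poles it controls never reach the boundary, and $\HH$ is self-adjoint anyway), so everything rests on your final compactness-plus-orthogonality contradiction. There the logic is: $\Psi_n\rightharpoonup 0$ forces the $\BB$-limit $\varphi_*$ of the (sub)sequence $\varphi_n=T(E_n)\varphi_n$ to vanish, while $\|\varphi_n\|_2\geq c>0$; to contradict these two facts you need \emph{tightness} of the $\varphi_n$ in $L^2$, i.e.\ a uniform weighted bound of the type $\lf\|\varphi_n\ri\|_{L^2_\eps}\leq C$. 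You invoke ``Agmon-type spatial decay of eigenfunctions'', but for eigenvalues embedded in the continuous spectrum no such decay is automatic; establishing that uniform bound is precisely the content of the paper's estimate \eqref{sarda}, obtained by passing to Fourier variables, observing that the trace of the right-hand side on the sphere $|\pv|=k$ vanishes, and applying Lemma \ref{agmon2}. Note also that a uniform $H^2$ (hence $\BB$) bound on the normalized eigenvectors, which you need even to run the compactness step, must itself be extracted from the eigenvalue equation (the paper's \eqref{spigola}); $L^2$-normalization alone does not provide it.

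The second gap is the region $E\geq\la$. There $E-\la$ lies in the continuous spectrum of $H_U$ as well, so $R_U(E-\la)$ is not a bounded operator on $L^2$ and both channels are ``open''; your symmetric reduction $\varphi=-R_V(E+i0)W\xi$ produces an equation for $\xi$ that still involves the non-invertible operator $H_U+\la-E$, so no Fredholm equation results, and the ``residue analysis'' at the exceptional points $E=\la+E_j$ is asserted rather than carried out. The paper avoids every channel reduction: it works directly with the coupled system, rewrites it through the resolvents $R_V(i)$ and $R_U(i)$ at the non-real point $i$, derives the weighted a priori estimate \eqref{sarda} for \emph{both} components uniformly for $k^2$ in any compact subinterval of $(0,+\infty)$, and concludes via Rellich's criterion that an infinite orthonormal family of eigenvectors would admit a norm-convergent subsequence, a contradiction. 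If you wish to salvage your route, the missing ingredient is exactly that weighted estimate; once it is in hand your compactness argument closes, but at that point the Feshbach reduction and the analytic Fredholm detour become superfluous.
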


	\begin{remark}[Positive eigenvalues]
		\label{rem: positive eigen}
		\mbox{}	\\
		We expect this characterization of positive eigenvalues not to be optimal but it is
sufficient for our scopes. For instance, one could adapt Kato's argument as presented in \cite{RS4}
and prove with some lengthy calculations that there are no positive eigenvalues in $(\la,+\infty)$.
On the contrary, we expect that there may be positive eigenvalues in $(0,\la)$. Indeed, in the unperturbed case, i.e., for $W=0$, there may be embedded eigenvalues, depending on the relation between $\{E_j\}_{j = 0, \ldots, N-1}$ and $\la$. When $W\neq 0$, it is expected  that such eigenvalues turn into complex resonances (see, e.g., \cite{ccf1, ccf2, frr}) for the analysis of this phenomenon in some solvable models). As discussed in \cite{ahs} in the one-body case, however, there may be persistence of embedded eigenvalues for some special perturbations.
	\end{remark}
	
	\begin{proof}
		Let $\Psi_{\kv}=(\zeta_{\kv}, \theta_{\kv}) \in H^2 \oplus H^2$ be a solution of  the eigenvalue equation $\HH \Psi_{\kv} = k^2 \Psi_{\kv}$, for $k^2>0$. Equivalently, we have that $\zeta_{\kv}, \theta_{\kv} \in H^2$ are  solutions of the system \eqref{geneigen}.  
Following the line of \cite[Thm. 3.1]{A}, we preliminary prove that there exists $\ve>0$ such that 
		\beq 
			\label{sarda}
		 	\lf\| \zeta_{\kv} \ri\|_{H^2_\ve} \leq C \, \lf(\lf\| \zeta_{\kv} \ri\|_{2} + \lf\| \theta_{\kv} \ri\|_{2}\ri), \qquad	\lf\| \theta_{\kv} \ri\|_{H^2_\ve} \leq C \, \lf(\lf\| \zeta_{\kv}\ri\|_{2} + \lf\| \theta_{\kv}\ri\|_{2}\ri) .
		\eeq
		The system \eqref{geneigen} can be rewritten as
		\be 
			\begin{cases} 
  				\zeta_{\kv}   = (k^2 -i) R_V(i)   \zeta_{\kv} - R_V(i) W\,   \theta_{\kv}, \\
  				\theta_{\kv}  =  (k^2 -i) R_U(i) \theta_{\kv} - R_U(i)  W\, \zeta_{\kv}.
			\end{cases}
		\ee
		Then, by Sobolev embedding and the boundedness of $(-\Delta +1) R_V (i)$ and $(-\Delta +1) R_U (i)$, we have for any $ s > 3/2 $ that
		\beq \label{triglia}
			\begin{cases} 
 				\lf\| \zeta_{\kv} \ri\|_{H^2} \leq   C\, \lf\|   \zeta_{\kv} \ri\|_{2}+ \lf\| W \ri\|_{2} \lf\|  \theta_{\kv} \ri\|_{L^\infty}    \leq C \, \lf(\lf\|   \zeta_{\kv} \ri\|_{2} + \lf\|\theta_{\kv}  \ri\|_{H^s} \ri), \\
				\lf\|  \theta_{\kv} \ri\|_{H^2} \leq C\,  \lf\|\theta_{\kv} \ri\|_{2} +C\,  \lf\|W\ri\|_{2}  \lf\|\zeta_{\kv} \ri\|_{L^\infty} \leq C\, \lf(   \lf\|\theta_{\kv} \ri\|_{2} +  \lf\|\zeta_{\kv} \ri\|_{H^s}\ri).
			\end{cases}
		\eeq
		Let us now pick some $ s \in (3/2,2) $: then, for any $  \ve >0$ there exists a finite $ c_{\ve} >0$, such that $ \lf\| f \ri\|_{H^s} \leq \ve  \lf\| f  \ri\|_{H^2} + c_{\ve}  \lf\| f \ri\|_2 $, and from \eqref{triglia} we obtain 
		\beq \label{spigola}
			\begin{cases} 
				\lf\| \zeta_{\kv} \ri\|_{H^2}  \leq C \, \lf(\lf\|   \zeta_{\kv} \ri\|_{2} + \lf\|\theta_{\kv}  \ri\|_{2} \ri), \\
				\lf\|  \theta_{\kv} \ri\|_{H^2} \leq C\, \lf(   \lf\|\theta_{\kv} \ri\|_{2} +  \lf\|\zeta_{\kv} \ri\|_{2}\ri).
			\end{cases}
		\eeq
		Thanks to Assumption \ref{pot} and Remark \ref{rem:compact}, the multiplication operators by the potentials $V$, $U$ and $W$ belong to $ \LL^{\infty}(H^2, L^2_{1+\ve}) $. In particular, 
 using also \eqref{spigola}, we have 
		\bdm
			\lf\| W \theta_{\kv} \ri\|_{L^2_{1+\ve} } \leq C\, \lf\| \theta_{\kv} \ri\|_{H^2 } \leq C \, \lf( \lf\|   \zeta_{\kv} \ri\|_{2} + \lf\| \theta_{\kv} \ri\|_{2} \ri).
		\edm
Using Fourier transform, \eqref{geneigen} can be written as
		\beq \label{tonno}
			\begin{cases} 
				(p^2-k^2) \widehat{ \zeta_{\kv}} (\pv)  = - \widehat{ W\,   \theta_{\kv}} (\pv)  -\widehat{   V \zeta_{\kv}} (\pv) =:\widehat{g_{\kv,1}}(\pv),   \\
				(p^2-k^2) \widehat{ \theta_{\kv}}(\pv)  = - \widehat{ W\,   \zeta_{\kv}} (\pv)  -\widehat{U \theta_{\kv}}(\pv) - \lambda \widehat{\theta_{\kv}}(\pv) =: \widehat{g_{\kv,2}}(\pv). 
			\end{cases}
		\eeq
		Due to the above remark, $g_{\kv,i} \in L^2_{1+\ve}$ and $\widehat{g_{\kv,i}} \in H^{1+\ve}$, $i=1,2$. Moreover,  \eqref{tonno} implies that
the trace of $\widehat{g_{\kv,i}} $ on the sphere $|\pv |=k$ vanishes (see also \cite{A}). By Lemma \ref{agmon2}, it follows that 
		\beq
			\widehat{\zeta_{\kv}} (\pv) = \frac{ \widehat{g_{\kv,1}}(\pv)}{ p^2-k^2 } 
		\eeq
		belongs to $H^\ve$ and, in addition,  $\pv^{\bm{\alpha}} \widehat{\zeta_{\kv}} (\pv)$ belongs to   $ H^\ve $ for any multi-index $ \bm{\alpha} $ such that $| \bm{\al} |\leq 2 $. Therefore, by \eqref{eq:wSnorm-identity} and again by Lemma \ref{agmon2}, $ \zeta_{\kv}  \in H^2_\ve$ and
		\beq
			\lf\|  \zeta_{\kv}  \ri\|_{ H^2_\ve} \leq C \, \lf\| \widehat{g_{\kv,1}} \ri\|_{H^{1+\ve}} = C \,\lf\| g_{\kv,1} \ri\|_{L^2_{1+\ve} } \leq C\,  \lf( \lf\| \zeta_{\kv} \ri\|_{L^2} + \lf\| \theta_{\kv} \ri\|_{L^2} \ri).
		\eeq
		The same estimate holds for $\theta_{\kv}$ and therefore \eqref{sarda} is proven.

		Let us now prove that there can be only a finite number of positive eigenvalues in the interval $ (a,b) $, with $0< a < b <+\infty$, and that their multiplicity is finite. Suppose by absurd that this is false and let $\{ \Psi_j \}_{j \in \N} $ a sequence of orthonormal eigenvectors. Then by \eqref{sarda} and Rellich's criterion \cite[Thm. XIII.65]{RS4}, the sequence $\{ \Psi_j \}_{j \in \N}$ lies in a compact subset of $ L^2 $ and we could extract a convergent subsequence. But this
is absurd since the eigenvectors $\Psi_j $ are orthogonal.
\end{proof}

In the next proposition  we study the homogeneous equations associated to \eqref{geneigen3}, i.e.,
\beq 
	\label{eqfonhom}
	\lf( I- D_{\kv} \ri) \zeta_{\kv} = 0
\eeq
where we have set
\beq
	D_{\kv}:= R_V (k^2) \, W\,  R_U(k^2-\la)\, W\,.
\eeq

	\begin{proposition}[Solutions of \eqref{eqfonhom}]
		\label{poseigen}
		\mbox{}	\\
		Let Assumption \ref{pot} hold and let $k^2 \in (0,\la)$, with $k^2-\la \neq E_j$, $j=0, \ldots, N-1$. Then, $k^2\in \EE$ if and only if there exists a non-trivial solution $\zeta_{\kv} \in H^2 \subset \BB $ of \eqref{eqfonhom}.
	\end{proposition}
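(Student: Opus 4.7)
My plan is to prove the two implications separately, exploiting that, under the standing hypothesis $k^2 - \la \neq E_j$ for all $j = 0, \dots, N-1$ and $0 < k^2 < \la$, the number $k^2 - \la$ is strictly negative and avoids the eigenvalues of $H_U$; hence $R_U(k^2 - \la)$ is bounded from $L^2$ into $H^2$, and the closed-channel component can always be algebraically expressed in terms of the open-channel one via the second line of \eqref{geneigen}.

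For the ``if'' direction, suppose $\zeta_\kv \in H^2$ is a nontrivial solution of \eqref{eqfonhom} and set $\xi_\kv := -R_U(k^2 - \la)\, W \zeta_\kv$, which lies in $H^2$ since $W \zeta_\kv \in L^2$ and $R_U(k^2-\la): L^2 \to H^2$ is bounded. By construction the second equation of \eqref{geneigen} is automatically satisfied. For the first, I would rewrite the hypothesis as $\zeta_\kv = -R_V(k^2)\, W \xi_\kv$ and apply $H_V - k^2$: by Lemma \ref{agmon3}, since $W \xi_\kv \in L^2_s$ for some $s > 1/2$ (using Assumption \ref{pot} on the decay of $W$), this produces exactly the first equation of \eqref{geneigen}. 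Therefore $(\zeta_\kv, \xi_\kv) \in H^2 \oplus H^2$ is a nontrivial eigenvector of $\HH$ at eigenvalue $k^2$, so $k^2 \in \EE$.

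For the ``only if'' direction, let $\Psi_\kv = (\varphi_\kv, \xi_\kv) \in H^2 \oplus H^2$ be a nontrivial eigenvector with $\HH \Psi_\kv = k^2 \Psi_\kv$. The second equation of \eqref{geneigen} yields $\xi_\kv = -R_U(k^2-\la) W \varphi_\kv$, and substitution into the first gives the distributional identity $(H_V - k^2) \varphi_\kv = g$, where $g := W R_U(k^2-\la) W \varphi_\kv$. Using the decay of $W$ and the Sobolev embedding $H^2 \hookrightarrow L^\infty$, one directly checks $g \in L^2_s$ for some $s > 1/2$. The key remaining step is the conversion into the integral form $\varphi_\kv = R_V(k^2) g$, which is exactly $(I - D_\kv)\varphi_\kv = 0$. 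To obtain it, I would write, for $\eps > 0$,
\[
\varphi_\kv = R_V(k^2 + i \eps)\, g \;-\; i\eps\, R_V(k^2 + i \eps) \varphi_\kv,
\]
and let $\eps \to 0^+$. The first summand converges to $R_V(k^2) g$ in $H^2_{-s}$ by the limiting absorption principle (consistent with Lemma \ref{agmon3}), while the spectral theorem applied to the self-adjoint operator $H_V$ gives
\[
\lf\| i\eps R_V(k^2+i\eps) \varphi_\kv \ri\|_2^2 = \int \frac{\eps^2}{(\mu - k^2)^2 + \eps^2}\, \diff \lf\| E(\mu) \varphi_\kv \ri\|^2 \longrightarrow 0
\]
by dominated convergence, since $k^2$ is not an eigenvalue of $H_V$ (Assumption \ref{pot} rules out positive eigenvalues). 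Consequently $\varphi_\kv = R_V(k^2) g$ in $H^2_{-s}$, and $\zeta_\kv := \varphi_\kv \in H^2 \subset \BB$ is the sought nontrivial solution of \eqref{eqfonhom}.

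I expect the main obstacle to be exactly this last spectral-theoretic passage: the identity $\varphi_\kv = R_V(k^2) g$ is not a formal consequence of $(H_V - k^2)\varphi_\kv = g$ once $k^2$ sits in the continuous spectrum of $H_V$, and what makes the regularizing term $-i\eps R_V(k^2+i\eps)\varphi_\kv$ disappear in the limit is precisely the absence of a point mass of the spectral measure of $H_V$ at $k^2$, i.e.\ the absence of positive eigenvalues of $H_V$ guaranteed by Assumption \ref{pot}. The converse implication, by contrast, is essentially an algebraic verification once the weighted-space integrability of $W \xi_\kv$ has been secured.
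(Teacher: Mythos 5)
Your ``only if'' direction ($k^2\in\EE$ implies a nontrivial solution of \eqref{eqfonhom}) follows the same route as the paper and is in fact more careful on the one delicate point: the paper simply asserts that $R_V(k^2)\lf(H_V-k^2\ri)\varphi_\kv=\varphi_\kv$ once $\lf(H_V-k^2\ri)\varphi_\kv\in L^2_s$, whereas you justify it through the identity $\varphi_\kv=R_V(k^2+i\eps)g-i\eps R_V(k^2+i\eps)\varphi_\kv$, the limiting absorption principle for the first term, and the vanishing of $\lf\|i\eps R_V(k^2+i\eps)\varphi_\kv\ri\|_2$ via the spectral theorem and the absence of positive eigenvalues of $H_V$. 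That is a legitimate and welcome filling-in of a step the paper leaves implicit. (A small loose end, shared with the paper: one should also note that $\varphi_\kv\neq 0$, since $\varphi_\kv=0$ would force $\xi_\kv$ to be an eigenvector of $H_U$ with eigenvalue $k^2-\la$, excluded by hypothesis.)

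The gap is in the ``if'' direction. You take as hypothesis a nontrivial solution $\zeta_\kv\in H^2$, after which the verification is indeed purely algebraic. But the proposition is invoked in the proof of Proposition \ref{pro:solution-scattering} through the Fredholm alternative for the compact operator $D_\kv$ on $\BB$: to conclude that $I-D_\kv$ is invertible on $\BB$ when $k^2\notin\EE$, one must know that a nontrivial solution of \eqref{eqfonhom} \emph{in $\BB$} already forces $k^2\in\EE$. This is what the paper actually proves, and it is where essentially all the work lies: starting from $\zeta_\kv\in\BB$ one only obtains $\zeta_\kv=R_V(k^2)\chi_\kv\in H^2_{-s}$, and one cannot bootstrap decay as in Agmon's argument for local potentials because the effective potential $W\,R_U(k^2-\la)\,W$ is non-local. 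The paper circumvents this by writing $\zeta_\kv=R_0(k^2)f_\kv$ with $f_\kv=(I-V R_V(k^2))\chi_\kv\in L^2_s$, showing via the homogeneous equation that $\Im\lan R_0(k^2)f_\kv,f_\kv\ran=0$ (see \eqref{trnulla}), deducing from Lemma \ref{agmon1} that the spherical trace of $\widehat{f_\kv}$ on $|\pv|=k$ vanishes, and then applying Lemma \ref{agmon2} to get $\zeta_\kv\in L^2_{s-1}\subset L^2$, hence $\zeta_\kv\in H^2$. None of this appears in your argument. As a proof of the literal biconditional with ``$H^2$ solution'' your text is correct, but it bypasses the regularity upgrade that the phrasing ``$\zeta_\kv\in H^2\subset\BB$'' is meant to encode and that the rest of the paper relies on.
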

	
	\begin{proof}
		Let $\zeta_{\kv}\in \BB$ be a solution of \eqref{eqfonhom}. We first  prove that  $\zeta_{\kv}  \in H^2$ and
		\beq \label{penna}
			(-\Delta+V -k^2) \zeta_{\kv} = W\, R_U(k^{2}-\la)\, W\, \zeta_{\kv}.
		\eeq
		Let us denote $ \chi_{\kv} :=  W\, R_U(k^{2}-\la)\, W\, \zeta_{\kv}$, so that  $\zeta_{\kv} = R_V (k^2) \chi_{\kv} $. Then, by Assumption \ref{pot} and boundedness of $R_U(k^{2}-\la) $,  we have that $ \chi_{\kv} \in L^2_s $, for some $s>3/2$. By Lemma \ref{agmon3} we have that $ \zeta_{\kv}\in H^2_{-s}$ and  \eqref{penna} holds at least in distributional sense. Moreover, we note that the r.h.s. of \eqref{penna} belongs to $L^2$ and therefore  we also have  $(-\Delta +V -k^2) \zeta_{\kv}  \in L^2$.

		Let us now  show that  $\zeta_{\kv}  \in L^2$. We notice that we can not straightforwardly bootstrap the argument to conclude that $ \zeta_{\kv}\in L^2_s$, for any $ s\in \RE$, as it is done in \cite{A}, because of the non-locality of the effective potential $ W\, R_U(k^{2}-\la)\, W$. However, using the resolvent identity (see Lemma \ref{agmon3}), we can cast \eqref{eqfonhom} in the following form:
		\be
			\zeta_{\kv} - R_0 (k^2) \, W\,  R_U(k^2-\la)\, W\, \zeta_{\kv} + R_0 (k^2) V \, R_V (k^2)\, W\,  R_U(k^2-\la)\, W\, \zeta_{\kv}=0
		\ee
		or, equivalently, 
		\beq
			\zeta_{\kv} - R_0 (k^2) f_{\kv} =0, \qquad
			 f_{\kv} := \lf( I -  V \, R_V (k^2)\ri) \chi_{\kv}.
		\eeq
		By  the assumption on $V$, we have $  f_{\kv} = \chi_{\kv} - V \, R_V (k^2) \chi_{\kv} = \chi_{\kv} -V \zeta_{\kv} \in L^2_s$.  
Hence, we can apply Lemma \ref{agmon1}, to get
		\begin{multline}\label{trnulla}
			\f{\pi}{ 2 k} \int_{S^2(k)}  \!\! \diff \sigma(\pv) \: \big|\widehat{ f_{\kv}} (\pv)  \big|^2  = 
 			\Im \braketl{R_0 (k^2)f_{\kv}}{f_{\kv}}  =
  			\text{Im} \braketr{\zeta_{\kv}}{ (I- V \, R_V (k^2) )   W\, R_U(k^{2}-\la)\, W\, \zeta_{\kv}} \\
			= \text{Im} \lf( \mean{W\zeta_{\kv}}{ R_U(k^{2}-\la)}{ W\, \zeta_{\kv}} - \mean{ \zeta_{\kv}}{V}{ \zeta_{\kv}} \ri) = 0
		\end{multline}
		where, in the last line, we have used equation \eqref{eqfonhom}.

		Taking into account  that $f \in L^2_s$ is equivalent to $\hat{f} \in H^s$, by (\ref{trnulla}) and Lemma \ref{agmon2}, we obtain
that $  (p^2 - k^2)^{-1} \widehat{f_{\kv}} = \hat{\zeta}_{\kv}\in H^{s-1} \cap L^1_{\mathrm{loc}}$. This means that $\zeta_{\kv} = R_0 (k^2) f_{\kv}  \in L^2_{s-1}$, with $s-1>1/2$, and then, in particular, $\zeta_{\kv} \in L^2$. Thus, by $\zeta_{\kv} \in L^2$ and $(-\Delta +V -k^2) \zeta_{\kv}  \in L^2$ we conclude that $\zeta_{\kv} \in H^2 = \DD(H_V)$.

		To complete the first part of the proof, it suffices to show that $(\zeta_{\kv}, \theta_{\kv})$ satisfies the eigenvector equation of $ \HH $ with eigenvalue $ k^2 $: setting $ \theta_{\kv} :=- R_U(k^2-\la)\, W\, \zeta_{\kv}$, one has that  $ \theta_{\kv} \in H^2$ and 
		\beq \label{matita}
			( -\Delta + U +\la -k^2)  \theta_{\kv} =- W\, \zeta_{\kv},
		\eeq
		which combined with \eqref{penna} yields the result.
		
		Conversely, if $ k^2 \in \EE $, then there must be a non-trivial solution to \eqref{geneigen} with $ \varphi_{\kv}, \xi_{\kv} \in H^2 $. Exploiting the assumption on $ \la $, we can invert $ H_U + \la - k^2 $, obtaining
		\beq
			\label{eq:system}
			\begin{cases}
				H_V \varphi_\kv + W \xi_\kv =  k^2 \varphi_\kv, \\    
				\xi_\kv + R_U(k^2 - \la) W \varphi_\kv = k^2 R_U(k^2-\la) \xi_\kv. 
		\end{cases} 
		\eeq
		Now, in order to show that $ \xi_{\kv} $ provides a non-trivial solution to \eqref{eqfonhom}, we would have to invert the first equation, but $ R_V(k^2) $ is defined as a boundary value and therefore it makes sense only on a function belonging to $ L^2_s $, $ s > 1/2 $ (Lemma \ref{agmon3}). More precisely, if $ \lf( H_V - k^2 \ri)\varphi_{\kv} \in L^2_s  $, $ s > 1/2 $, then 
		\bdm
			R_V(k^2) \lf(H_V - k^2\ri) \varphi_{\kv} = \varphi_{\kv},
		\edm
		which together with the first equation of \eqref{eq:system} yields $ \varphi_{\kv} = - R_V(k^2) W \xi_{\kv}$. This in turn can be replaced in the second equation of \eqref{eq:system}, leading to \eqref{eqfonhom}. It just remains to verify that $ \lf( H_V - k^2 \ri)\varphi_{\kv} \in L^2_s  $, $ s > 1/2 $, but this is directly implied by the first equation in \eqref{eq:system}, since $ \xi_{\kv} \in H^2 $ and thus $ W \xi_{\kv} \in L^2_{s} $, for any $ 0 \leq s \leq 3/2 $.
\end{proof}

We are now in position to  prove our  result concerning existence, uniqueness and asymptotic behavior of the  solutions of equation \eqref{geneigen3}.

	\begin{proof}[Proof of Proposition \ref{pro:solution-scattering}]   
		Let us fix $ k^2 \in (0,\la) \setminus \EE$ and $k^2-\la \neq E_j$, for $j=0, \ldots, N-1$. 
		We set  
		\beq
			\eta_{\kv}:=  \varphi_{\kv}- \phi_{V,\kv} 
		\eeq
		and accordingly cast \eqref{geneigen3} as an equation for  $ \eta_{\kv} $  in the space $\BB$, i.e.,
		\beq \label{baba}
			\lf( I - D_{\kv} \ri) \eta_{\kv} = D_{\kv} \phi_{V,\kv}.
		\eeq
		
		As a first step, we shall prove that $D_{\kv} \in \LL^{\infty}(\BB) $. Using the resolvent identity and the notation $T_Z(k)= R_0(k^2) Z$ introduced in Lemma \ref{ike},  we write
		\begin{multline}
 			D_{\kv}  =  R_0(k^2) WR_U (k^2-\la) W - R_0(k^2) V\, R_V(k^2) W  R_U (k^2-\la) W \\ 
= T_W (k)R_U (k^2-\la) W  -T_{\lan x\ran^s V} (k) \,\lan x\ran^{-s} \, R_V(k^2) W R_U (k^2-\la) W,
		\end{multline}
		where we have taken some $ s \in ( 1/2, 3/2]$, which is fixed throughout this proof, and $ \lan x\ran^{-s} $ on the r.h.s. stands for the multiplication operator by the corresponding function. By Lemma \ref{ike} and Assumption \ref{pot}, the operators $T_W (k)$ and $T_{\lan x\ran^s V} (k)$ are compact in $\BB$. Hence it suffices to show that the operators $ R_U(k^2-\la) W $ and $ \lan x\ran^{-s} \, R_V(k^2) W R_U (k^2-\la) W $ are bounded in $ \BB $ to get the result. First we notice that, for any $\eta_{\kv} \in \BB$,
		\begin{equation}
				\lf\| R_U(k^2-\la) W \eta_{\kv} \ri\|_{\BB} \leq  C\, \lf\| R_U(k^2-\la) W \eta_{\kv} \ri\|_{H^2} \leq C \, \lf\| W \eta_{\kv} \ri\|_{2} \leq C\, \lf\|W \ri\|_{2} \lf\|\eta_{\kv} \ri\|_{\BB}
		\end{equation}
		and therefore $R_U(k^2-\la) W \in \mathcal{B}(\BB) $. Analogously, using Lemma \ref{agmon1}, we estimate
		\begin{multline}
			\lf\| \lan x\ran^{-s} \, R_V(k^2) W R_U (k^2-\la) W \eta_{\kv} \ri\|_{\BB} 
\leq C\, \lf\|\lan x\ran^{-s} \, R_V(k^2) W R_U (k^2-\la) W \eta_{\kv} \ri\|_{H^2}  \\
\leq C\, \lf\|R_V(k^2) W R_U (k^2-\la) W \eta_{\kv} \ri\|_{H^2_{-s}} \leq C\, \lf\| W R_U (k^2-\la) W\eta_{\kv} \ri\|_{L^2_s}\\
\leq C \, \lf\|W \ri\|_{L^2_s}  \lf\| R_U (k^2-\la) W\eta_{\kv} \ri\|_{\BB} \leq C \, \lf\| W \ri\|_{L^2_s}   \lf\| W \ri\|_{2} \lf\| \eta_{\kv} \ri\|_{\BB}
		\end{multline}
		and then the operator $\lan x\ran^{-s} \, R_V(k^2) W R_U (k^2-\la) W $ is  also bounded in $ \BB $. 
		Thus, we conclude that $D_{\kv}$ is a compact operator in $\BB$.

		Using the same arguments, one also shows that  $D_{\kv}   \phi_{V,\kv} \in \BB$. Therefore,   by  Freedom alternative and Proposition \ref{poseigen}, we conclude that equation \eqref{baba}  has a unique solution in $\BB$ and this implies that \eqref{geneigen3} has a unique continuous solution $\varphi_{\kv}$, such that $\varphi_{\kv}-\phi_{V,\kv} \in \BB$.
	
		It remains to characterize  the asymptotic behavior of  $\varphi_{\kv} (\xv)$   for $x \rightarrow \infty$. We have shown that $D_{\kv}  \eta_{\kv}$ and  $D_{\kv} \phi_{V,\kv} $ belong to $ \BB $ and therefore both functions are $ o(1) $ as $x\to \infty$. This implies that $ \varphi_{\kv} = \OO(1) $ for large $ x $ and $ W \varphi_{\kv} = \OO(x^{-3}) $, which in particular implies that $ W \varphi_{\kv} \in L^2 $. Moreover, using the resolvent identity, we can rewrite \eqref{geneigen3} as 
		\beq\label{bibo}
			\varphi_{\kv} = \phi_{V,\kv}+ T_{\langle x \rangle W}(k) \langle x \rangle^{-1} R_U (k^2-\la) W\varphi_{\kv}   -T_{\lan x\ran V} (k) \,\lan x\ran^{-1} \, R_V(k^2) W R_U (k^2-\la) W \varphi_{\kv}.
		\eeq
		Now, both functions $ R_U (k^2-\la) W\varphi_{\kv} $ and $ R_V(k^2) W R_U (k^2-\la) W \varphi_{\kv} $ are in $ H^2 $, because $ W \varphi_{\kv} \in L^2 $ and the operators  $ R_U (k^2-\la) $ and $ R_V(k^2) W R_U (k^2-\la) $ map $ L^2 $ to $ H^2 $, and therefore they are bounded in $ \BB $ by Sobolev embedding. We can then apply Lemma \ref{ike}, point c), obtaining
		\bml{
			T_{\langle x \rangle W}(k) \langle x \rangle^{-1} R_U (k^2-\la) W\varphi_{\kv}   -T_{\lan x\ran V} (k) \,\lan x\ran^{-1} \, R_V(k^2) W R_U (k^2-\la) W \varphi_{\kv}  =	\\
			\braketr{\phi_{0,\kvp}}{ (I\!-\!V \, R_V (k^2))  W\,  R_U(k^2-\la)\, W\, \varphi_{\kv}} \frac{e^{i k x}}{4 \pi x} + o(x^{-1}).
		}
		Combining this asymptotic with the expansion \eqref{eq:ls-solution} of $\phi_{V,\kv}$ and recalling that the LS equation can be rewritten as $\phi_{V,\kv}=(I-R_V(k^2) V) \phi_{0,\kv}$, we get the result from \eqref{bibo}.
	\end{proof}

\section{Feshbach resonances}
\label{sec:Feshbach}


According to Proposition \ref{pro:solution-scattering}, the effective LS equation \eqref{geneigen3} admits for $ k > 0 $ a unique solution behaving like a generalized eigenfunction with precise asymptotic for large $ \xv $. In order to derive the behavior of the effective scattering length $ \aeff $, we thus have to investigate the low-energy limit of such quantities. This will be done in the following in two steps: first we will study the LS equation \eqref{geneigen3} at zero energy, i.e., for $ \kv = 0 $, and then we will prove continuity of $ \ampe $, as a function of $ k $, to take the limit $ k \to 0 $.

More precisely, by Proposition \ref{pro:solution-scattering}, we know that for any $ \la \in (0, +\infty) $ and any $ \kv \in \R^3 $, such that $ k^2 \in (0, \la) \setminus \EE $ and $ k^2 - \la \neq E_j $, $ j = 0, \ldots, N-1 $, the LS equation \eqref{geneigen3} admits a unique solution $ \varphi_{\kv} $ in the space of continuous bounded functions and the following asymptotic holds true (see \eqref{eq:varphi-asymptotics} and \eqref{scatamp}):
\bdm
	\varphi_{\kv}(\xv) \underset{x \to +\infty}{\simeq} e^{i \kv \cdot \xv} + \ampe \frac{e^{i k x}}{x},
\edm
with
\beq
	\label{eq:ampe1} 
	\ampe = \tx\frac{1}{4 \pi} \mean{\phi_{V,\kvp}}{W\, R_U(k^2-\la)\, W}{\varphi_{\kv}} + \amp.
\eeq
We prove (Corollary \ref{crit2}) that these results hold true also for $ k = 0 $, i.e., the zero-energy version of \eqref{geneigen3},
\beq \label{zero}
	\varphi_0 - R_V (0) \, W\,  R_U(-\la)\, W\, \varphi_0 = \phi_{V,0},
\eeq
admits a unique continuous solution, whose asymptotic allows to identify the scattering amplitude
\beq
	A_{\mathrm{eff}}(0,0; \la) = \tx\frac{1}{4 \pi} \mean{\phi_{V,0}}{W\, R_U(-\la)\, W}{\varphi_{0}} + a_V.
\eeq
In order to conclude that the expression above coincide with $ \aeff $, we just have to show that $ A_{\mathrm{eff}}(\kv, \kvp; \la) $, and in particular the first term of the r.h.s. of \eqref{eq:ampe1}, is continuous for $ k $ small.

We thus start by considering \eqref{zero} and, more precisely, we focus on the homogeneous equation associated to the one above, which we rewrite as
\beq
	\label{zeroh}
	\eta - R_V (0) \, W\,  R_U(-\la)\, W\, \eta = 0.
\eeq

	\begin{proposition}[Solutions of \eqref{zeroh}]
		\label{crit}
		\mbox{}	\\
		Let Assumptions \ref{pot} and \ref{boundst} hold. Then, there exists at least a critical value $ \la_0 \in (|E_0|, + \infty) $, such that \eqref{zeroh} has a non-trivial solution $ \eta \in \BB $. The number $ M $ of critical values $ \la_j $, for which \eqref{zeroh} admits a non-trivial solution is always finite.
		\newline
		Furthermore, there exists $ \delta_n > 0 $, $ n =1, \ldots N-1 $, with $ \delta_n < \delta_m $, for $ n > m $, such that, if 
		\beq
			\label{eq:Wsmall}
			\lf\| W \ri\|_3 \leq \de_n 
		\eeq
		then there are at least $ n $ critical points $ \la_0, \ldots, \la_n $, for which a non-trivial solution of \eqref{zeroh} occurs, and such values satisfy
		\beq
			\label{eq:order}
			\la_0 > |E_0| > \la_1 > |E_1 |> \cdots > \la_{n} > |E_{n}|.
		\eeq
		Any further critical value $ \la_j $ with $ j > n $ is such that $ \la_n < |E_n| $.
	\end{proposition}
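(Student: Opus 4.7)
The strategy is to reformulate \eqref{zeroh} as a spectral problem for a self-adjoint compact operator on $L^2$, and then track the behavior of its eigenvalues as $\la$ varies. Since $H_V\geq 0$ by Assumption \ref{boundst}.(b), the operator $W R_V(0) W$ is bounded, positive semi-definite, and self-adjoint on $L^2$ (this relies on the weighted-space bounds of Remark \ref{rem:k02} combined with the decay of $W$ from Assumption \ref{pot}); let $A:=(WR_V(0)W)^{1/2}$ be its positive square root. Introduce
\beq
	\mathcal{M}(\la) := A\, R_U(-\la)\, A,
\eeq
which is self-adjoint on $L^2$ for every $\la\in\R^+\setminus\{|E_0|,\ldots,|E_{N-1}|\}$. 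Using the decay of $V,W$ and arguments in the spirit of the proof of Proposition \ref{pro:solution-scattering}, one verifies that $WR_V(0)W$ is compact on $L^2$ (hence so are $A$ and $\mathcal{M}(\la)$), and that $\la\mapsto\mathcal{M}(\la)$ is continuous in operator norm on each connected component of its domain. The standard fact that $ST$ and $TS$ share non-zero eigenvalues for bounded $S,T$ then shows that a non-trivial $\eta\in\BB$ solves \eqref{zeroh} iff $1$ is an eigenvalue of $\mathcal{M}(\la)$: given such $\eta$, the element $\zeta := A\, R_U(-\la)\, W\eta\in L^2$ satisfies $\mathcal{M}(\la)\zeta=\zeta$; conversely, given $\zeta\in L^2$ with $\mathcal{M}(\la)\zeta=\zeta$, the element $\eta:=R_V(0) W R_U(-\la) A\zeta$ lies in $\BB$ and solves \eqref{zeroh}. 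Assumption \ref{boundst}.(c) guarantees that $A$ is injective, so the correspondence is bijective on non-trivial solutions.

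Using the spectral decomposition of $H_U$, $\mathcal{M}(\la)$ admits the representation
\beq
	\mathcal{M}(\la) = \sum_{j=0}^{N-1} \frac{Q_j}{E_j + \la} + G(\la), \qquad Q_j := \ket{A\psi_j}\bra{A\psi_j},
\eeq
where each rank-one $Q_j$ is positive and non-zero (since $A\psi_j\neq 0$ by the injectivity of $A$), and $G(\la)$ is a positive self-adjoint compact operator, jointly continuous in $\la\in\R^+$, and vanishing in operator norm as $\la\to+\infty$. On $(|E_0|,+\infty)$ all denominators are positive, so $\mathcal{M}(\la)\geq 0$ and, since $R_U(-\la)$ is monotonically decreasing in $\la$ in the operator order on this set, so is $\mathcal{M}(\la)$. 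By min-max, the top eigenvalue $\mu_0(\la)$ is continuous, monotonically decreasing, satisfies $\mu_0(\la) \geq \lf\| A\psi_0 \ri\|_2^2/(E_0+\la)\to+\infty$ as $\la\to|E_0|^+$, and vanishes as $\la\to+\infty$. The intermediate value theorem yields a unique $\la_0\in(|E_0|,+\infty)$ with $\mu_0(\la_0)=1$.

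Finiteness of the critical set follows from the analytic Fredholm theorem applied to the meromorphic family $\la\mapsto I-\mathcal{M}(\la)$ on each connected component of $\R^+\setminus\{|E_j|\}$: invertibility at at least one point in each component (for large $\la$ on $(|E_0|,+\infty)$, and away from the poles on each bounded component by a direct estimate once the rank-one divergences are extracted) forces the non-invertibility set to be discrete there. Non-accumulation at each pole $|E_j|$ follows because near $|E_j|$ only the eigenvalue branch proportional to $(E_j+\la)^{-1}$ diverges, and by strict monotonicity of $1/(E_j+\la)$ it equals $1$ at most once on each side of $|E_j|$. Combined with the boundedness of each component, this gives finitely many critical values globally.

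For part (ii), direct estimation yields uniform bounds $\lf\| Q_j \ri\| \leq C\lf\|W\ri\|_3^2$ and $\lf\| G(\la) \ri\| \leq C\lf\|W\ri\|_3^2$ on compact subsets of $\R^+\setminus\{|E_j|\}$. Choose $\de_n>0$ small enough that the remainder $\mathcal{M}(\la) - Q_j/(E_j+\la)$ has operator norm below a suitable fixed threshold on each interval $(|E_j|,|E_{j-1}|)$ for $j=0,\ldots,n$ (with the convention $|E_{-1}|:=+\infty$): then the top eigenvalue of $\mathcal{M}(\la)$ on each such interval is a small perturbation of the explicit rank-one profile $\lf\|A\psi_j\ri\|_2^2/(E_j+\la)$, which diverges at $\la=|E_j|^+$, is strictly decreasing, and falls below $1$ well inside the interval. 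The intermediate value theorem produces the required $\la_j\in(|E_j|,|E_{j-1}|)$, establishing the ordering \eqref{eq:order}; any further critical value must then lie in $(0,|E_n|)$, where the dominant singular contributions have the wrong sign and the positive part of $\mathcal{M}(\la)$ is of order $\lf\|W\ri\|_3^2$, hence unable to reach $1$. The monotonicity $\de_n<\de_m$ for $n>m$ reflects the shrinking of the intervals $(|E_j|,|E_{j-1}|)$ as $j$ grows, which forces stronger control on the perturbative remainders. The hardest step is the quantitative tracking of a continuously varying eigenvalue branch of $\mathcal{M}(\la)$ across successive intervals of analyticity, together with the perturbative estimates needed to isolate the dominant rank-one pole from the remainder.
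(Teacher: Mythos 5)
Your overall strategy coincides with the paper's: recast \eqref{zeroh} as the eigenvalue problem for a positive, compact, self-adjoint, monotone family on $L^2$, split it via the spectral resolution of $H_U$ into rank-one poles at the $|E_j|$ plus a bounded remainder, and locate each $\la_j$ as the point where the top eigenvalue crosses $1$, using $\|W\|_3$ small to force the remainder's norm below $1$ at the right end of each interval. Your symmetrization $\mathcal{M}(\la)=A R_U(-\la)A$ with $A=(WR_V(0)W)^{1/2}$ is only cosmetically different from the paper's $K(\la)=R_V^{1/2}(0)WR_U(-\la)WR_V^{1/2}(0)$, since the two share their non-zero spectrum; your explicit back-and-forth map between $\eta\in\BB$ and $\zeta\in L^2$ is a clean way to handle the equivalence.

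Two steps need repair. First, the finiteness of $M$: the analytic Fredholm theorem only yields discreteness on a connected component once you exhibit one point of that component where $I-\mathcal{M}(\la)$ is invertible, and your ``direct estimate once the rank-one divergences are extracted'' does not provide this on the bounded components $(|E_{j+1}|,|E_j|)$ without a smallness assumption on $W$ (which part (i) does not have). The argument that actually closes this (and is the one the paper relies on) is strict monotonicity: $\mathcal{M}'(\la)=-AR_U^2(-\la)A<0$ on $(\ker A)^{\perp}$, so every eigenvalue branch is strictly decreasing and can equal $1$ at most once per component, while only one branch diverges at each pole; together with compactness this gives finiteness. Second, your concluding claim that on $(0,|E_n|)$ the positive part of $\mathcal{M}(\la)$ is $O(\|W\|_3^2)$ and hence ``unable to reach $1$'' is false whenever $n<N-1$: the poles $|E_{n+1}|,\dots,|E_{N-1}|$ lie inside $(0,|E_n|)$ and still produce positive rank-one divergences there, so further critical values can genuinely occur (and even below $|E_{N-1}|$ the continuous-spectrum part may blow up as $\la\to0^+$ if $H_U$ has a zero-energy resonance; cf.\ \eqref{eventual} and Remark \ref{rem:Npoints-homo}). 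Fortunately the proposition only requires that any further critical value lie below $|E_n|$, which follows from the single crossing of the top eigenvalue on each $(|E_j|,|E_{j-1}|)$, $j\leq n$, combined with the bound $\mu_2(\mathcal{M}(\la))\leq C\|W\|_3^2<1$ there (only one branch is pulled up by the rank-one pole); you should delete the stronger claim rather than rely on it.
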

	
	\begin{remark}[Number of critical points]
		\label{rem:Npoints-homo}
		\mbox{}	\\
		The total number $ M $ of critical points $ \la_j $ can not be derived from our hypothesis, but we know by compactness of the operator $ K(\la) $ (see \eqref{unoK}) that $ M $ is finite for all $ \la $. Typically, each $ \mu_j(\la) $ (see the proof below) would provide a unique $ \la_j $ and, in absence of accidental degeneracy, this would result in {\it at least} $ N $ critical points. More precisely, the critical points are identified by the crossing of $ \mu_j(\la) $ with the horizontal line $ 1 $ and therefore the degeneracy of $ \mu_j$'s is relevant only there. This degeneracy is an extremely rare event but its occurrence can not be excluded and this is why in the first part of the statement we refer to at least one critical value. 
		
		Note, however, that even if one assumed $ \lf\| W \ri\|_3 \ll 1 $, there would be no way to conclude that there are {\it exactly} $N$ critical values, since the existence of critical points for $ \la \leq |E_{N-1}| $ depends on the behavior at the origin of the eigenvalues of $ K(\la) $. In fact, if $ H_U $ has a zero energy resonance or eigenstate, we do expect that \eqref{eventual} holds for some $j$ and therefore additional critical points close to the origin can occur.	
	\end{remark}
	
	\begin{remark}[Asymptotics $ \lf\| W \ri\|_3 \ll 1 $]
		\label{rem:asymptW}
		\mbox{}	\\
		When $ \lf\| W \ri\|_3 \ll 1 $, we can perform an asymptotic expansion for any $ \la_j $: we sketch the argument for $\la_0$. According to the behavior of the eigenvalues described in the proof of Proposition \ref{crit}, we expect $\la_0$ to be close to but larger than $ |E_0|$. Let us then restrict the analysis to the region  $ (|E_0| , |E_0|+\de] $, for some $\de > 0 $ fixed. Then, eq. \eqref{uno} becomes
		\beq
			\lf( 1 + \OO(\lf\| W \ri\|_3^2) \ri) \zeta = \frac{1}{E_0 +\la} \braketl{R_V^{1/2} (0) \, W\,\psi_0}{\zeta} R_V^{1/2} (0) \, W\, \psi_0,
		\eeq
		which implies that $ \zeta = R_V^{1/2} (0) \, W\,\psi_0 + \OO(\lf\| W \ri\|_3^3) $ and thus, projecting onto $ R_V^{1/2} (0) \, W\,\psi_0 $, we immediately get
		\bdm
			\frac{1}{E_0 + \la} \lf\| R_V^{1/2} (0) \, W\,\psi_0 \ri\|_2^2 = 1 + \OO(\lf\| W \ri\|_3^2),
		\edm
		which yields
		\be
			\la_0 = |E_0| + \meanlrlr{\psi_0}{W\, R_V (0) \, W}{\psi_0} + {\mathcal O }(\lf\|W \ri\|_3^4).
		\ee
	\end{remark}
	
	\begin{proof}
		The key idea is to rewrite \eqref{zeroh} as a more symmetric equation in $ L^2 $. If $\eta \in \BB$ is a non-trivial solution of \eqref{zeroh}, then $\eta \in H^2_{\mathrm{loc}}$, since $ W \eta \in L^2 $, $ R_U(-\la) $ is bounded in $ L^2 $, the multiplication by $ W $ maps $ L^2 $ into $ L^2_s $, for some $ s > 1 $ (recall Assumption \ref{pot}, point c)) and, finally, $ R_V(0) $ maps $ L^2_s $ to $ H^2_{-s} $ by \eqref{bujo}. In addition, $ (- \Delta+V) \eta \in L^2 $ by \eqref{zeroh}, since $ W R_U(-\la) W \eta \in L^2 $, for any $ \eta \in \BB $. Since $(-\Delta + V)^{1/2}$ is invertible by Assumption \ref{boundst}, we deduce that $ \zeta : =(-\Delta + V)^{1/2} \eta \in L^2 $, and \eqref{zeroh} can be cast in the following form
		\beq 
		\label{uno}
			\zeta - R_V^{1/2} (0) \, W\,  R_U(-\la)\, W\, R_V^{1/2} (0) \, \zeta = 0
		\eeq
		with $R_V^{1/2} (0) \zeta \in \BB $ and $ \zeta \in L^2 $.
		
		To complete the proof of the equivalence of \eqref{zeroh} in $ \BB $ with \eqref{uno} in $ L^2 $, it remains to show that, for any $ \zeta \in L^2 $ solving \eqref{uno}, then $ R_V^{1/2}(0) \zeta \in \BB $ and solves \eqref{zeroh}.
		
		Before proving the other direction of implication, we show that $R_V^{1/2} (0) \, W $ and $W \, R_V^{1/2} (0) $ are compact operators in $L^2$. We recall that, since $ W \in L^3$ by Assumption \ref{pot}, point c), then, by \cite[Thm. XI.22 with $ q  =3$]{RS3}, $ R_0^{1/2} (0) \, W \in \LL^{\infty}(L^2)$ and 
		\beq
			\label{eq:norm-bound1}
			\lf\|R_0^{1/2} (0) \,W \ri\| \leq C \lf\| W \ri\|_3.
		\eeq
		In order to prove that also $ R_V^{1/2} (0) \, W \in \LL^{\infty}(L^2)$, we use the resolvent identity to write
		\beq
			\label{eq:wrw}
			W R_V(0) W  = W R_0 (0) W - W R_0 (0) \, V^{1/2} \lf(I+ |V|^{1/2} R_0 (0) V^{1/2} \ri)^{-1} |V|^{1/2} R_0(0) W
		\eeq
		where we have used the standard notation $ V^{1/2} := \mathrm{sgn}(V) \, |V|^{1/2}$.
By the very same argument used above, $|V|^{1/2} R_0 (0) V^{1/2}$ is compact in $ L^2 $, thanks to Assumption \ref{pot}, point b). If we now combine Assumption \ref{boundst}, point b) with Freedom alternative, we conclude that 
		\bdm	
			\lf(I+ |V|^{1/2} R_0 (0) V^{1/2} \ri)^{-1} \in {\mathcal B}(L^2).
		\edm
		Then, exploiting that $ R_0^{1/2} (0) \, W \in \LL^{\infty}(L^2)$, we deduce from \eqref{eq:wrw} that $ R_V^{1/2} (0) \, W \in \LL^{\infty}(L^2)$ and 
		\beq	
			\label{eq:norm-bound2}		
			\lf\|R_V^{1/2} (0) \,W \ri\| \leq C \lf\| W \ri\|_3.
		\eeq
		
		Now we can complete the argument: for any solution $ \zeta \in L^2 $ of \eqref{uno}, we set $\eta : = R_V^{1/2} (0) \zeta \in L^2 $, so that
		\beq
			\label{eq:eta-eq}	
			\eta = R_V(0) W R_U (-\la ) W R_V^{1/2} (0) \zeta.
		\eeq
		Now, $ R_U (-\la ) W R_V^{1/2} (0) \zeta \in H^2 $ by boundedness of $ W R_V^{1/2} (0) $ in $ L^2 $ and the fact that $  R_U (-\la ) $ maps $  L^2 $ to $ H^2 $. Then, $R_U (-\la ) W R_V^{1/2} (0) \zeta \in \BB$ by Sobolev embedding and $ W R_U (-\la ) W R_V^{1/2} (0) \zeta \in L^2_s $, for any $ s \leq 3 $, so that $ \eta \in H^2_{-s} $, for any $ 1 < s \leq 3 $, thanks to \eqref{bujo}. Hence, $ \eta \in H^2_{\mathrm{loc}} $, but \eqref{eq:eta-eq} also implies that 
		\bdm
			\lf( - \Delta + V \ri) \eta = W R_U (-\la ) W R_V^{1/2} (0) \zeta \in L^2,
		\edm
		and therefore $ \eta \in \dom(H_V) = H^2 \subset \BB $.
		
		So, from now on, we study \eqref{uno} in $L^2$. By compactness of both $ R_V^{1/2} (0) \, W $ and $W \, R_V^{1/2} (0) $ and boundedness of $ R_U(-\la) $, we deduce compactness in $ L^2 $ of the operator
		\bml{
			\label{eq:Kla}
			K(\la) : =   R_V^{1/2} (0) \, W\,  R_U(-\la)\, W\, R_V^{1/2} (0) \\
			= \sum_{j=0}^{N-1} \frac{1}{E_j +\la} \ket{ R_V^{1/2} (0) \, W \, \psi_j } \bra{R_V^{1/2} (0) \,W\, \psi_j   } + \int_0^\infty \: \frac{1}{\mu +\la}  R_V^{1/2} (0) \, W \, \diff E(\mu) \,  W \,  R_V^{1/2} (0),
		}
		where we have used the spectral resolution of $ H_U $ and exploited Assumption \ref{boundst}, point a). Hence, \eqref{uno} can be rewritten as
		\beq
			\label{unoK}
			\zeta = K(\la) \zeta,
		\eeq
		which admits for any $ \la $ at most finitely many non-trivial solutions, because $ K(\la) $ is compact and therefore its spectrum is given by discrete points with finite multiplicity, possibly accumulating only at $ 0 $. Furthermore, since $ K(\la) $ is piecewise continuous and monotone in $ \la $, the number $M $ of critical values $ \la_j $ is always finite.

		To prove the first part of the statement, it is sufficient to notice that $K(\la)$ is a positive compact operator in $L^2$ which is continuous and monotone (decreasing) in $\la\in(|E_0|, +\infty)$. Moreover,
		\beq
			\label{eq:norms}
			\lf\| K(\la) \ri\| \xrightarrow[\la \to | E_0|^+]{} +\infty,		\qquad		\lf\| K(\la) \ri\| \xrightarrow[\la \to +\infty]{} 0.
		\eeq
		Therefore, there must be a non-trivial solution of \eqref{uno} for $ \lambda = \la_0 \in (|E_0|, + \infty) $.

\begin{center}
	\begin{figure}
		\begin{tikzpicture}[scale=1.0]
			\draw[->, very thick] (-1,0) to (8,0) ;
			\draw[->, very thick] (0,-1) to  (0,4) ;
			\node [below] at (8,0) {\scalebox{1.0}{$\la$}};
			\node [left] at (0,1.5) {\scalebox{1.0}{$1$}};
			\node[left] at (0,4) {\scalebox{1.0}{$\mu_j (\la)$} };
			\draw[domain=5.75:8, samples=100,  thick] plot[smooth] (\x, {3/(\x-5) } ); 
			\draw[domain=4:8, samples=100, thick] plot[smooth] (\x, {4/(\x-3) ) } ); 
			\draw[domain=0:8, samples=100,  thick] plot[smooth] (\x, {2.5/(1.9*\x+1) ) } ); 
			\draw[thin, dashed] (5,-1) to (5,3.5);
			\draw[thin, dashed] (3,-1) to (3,3.5);
			\draw (0,1.5) to (8,1.5);
			\node[above right ] at (7,1.5) {$\la_0$ };
			\node[above right] at (5.5,1.5) {$\la_1$ };
			\node[above right] at (0.5,1.5) {$\la_2$ };
			\node[below right] at (5,0) {$|E_0|$ };
			\node[below right] at (3,0) {$|E_1|$ };
		\end{tikzpicture}
		\caption{Typical behavior of the functions $ \mu_j(\la) $, $ j =0, \ldots, N-1 $, and corresponding intersections with the horizontal line $ 1 $.}\label{fig: muj}
	\end{figure}
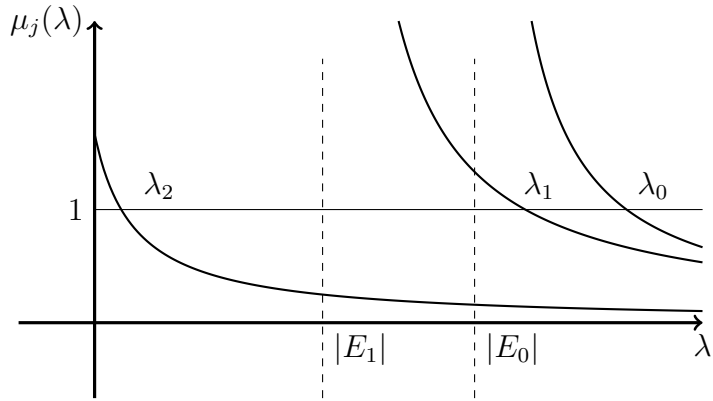
\end{center}

		For $ N > 1 $, we study the behavior of $ K(\la) $ for $ \la < |E_0| $ and show that there exist more critical points $ \la_1, \ldots, \la_{N-1} $, for which a non-trivial solution of \eqref{uno} do exist (see Fig. \ref{fig: muj}). The key point is that, by accidental degeneracy (see Fig. \ref{fig: muj degeneracy}) of the eigenvalue of $ K(\la) $, such points might coincide and the smallness of $ W $ in the condition \eqref{eq:Wsmall} will be used precisely to exclude that this overlap occurs.

\begin{center}
	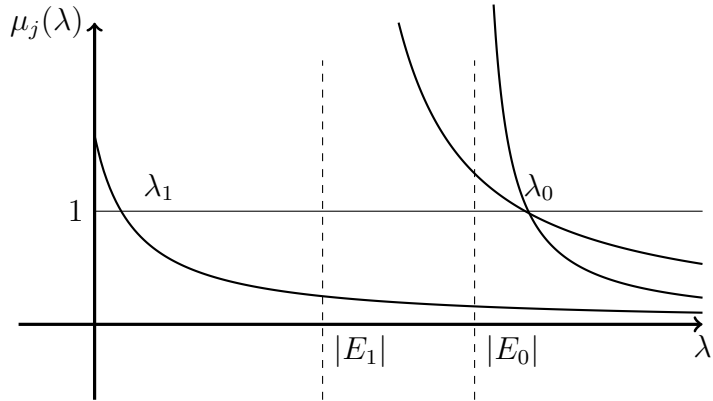
\begin{figure}
		\begin{tikzpicture}[scale=1.0]
			\draw[->, very thick] (-1,0) to (8,0) ;
			\draw[->, very thick] (0,-1) to  (0,4) ;
			\node [below] at (8,0) {\scalebox{1.0}{$\la$}};
			\node [left] at (0,1.5) {\scalebox{1.0}{$1$}};
			\node[left] at (0,4) {\scalebox{1.0}{$\mu_j (\la)$} };
			\draw[domain=5.25:8, samples=100,  thick] plot[smooth] (\x, {1.06/(\x-5) } ); 
			\draw[domain=4:8, samples=100, thick] plot[smooth] (\x, {4/(\x-3) ) } ); 
			\draw[domain=0:8, samples=100,  thick] plot[smooth] (\x, {2.5/(1.9*\x+1) ) } ); 
			\draw[thin, dashed] (5,-1) to (5,3.5);
			\draw[thin, dashed] (3,-1) to (3,3.5);
			\draw (0,1.5) to (8,1.5);
			\node[above right] at (5.5,1.5) {$\la_0$ };	
			\node[above right] at (0.5,1.5) {$\la_1$ };
			\node[below right] at (5,0) {$|E_0|$ };
			\node[below right] at (3,0) {$|E_1|$ };
		\end{tikzpicture}
		\caption{Possible accidental degeneracy at crossing of the $ \mu_j $'s.}\label{fig: muj degeneracy} 
	\end{figure}
\end{center}

		From now assume thus that $ N > 1 $. We are going to show is now that \eqref{uno} admit non-trivial solutions for at least $ N $ values $ \lambda_j \in (|E_j|, + \infty) $. To this purpose we first focus on the behavior near $ |E_0| $ and denote by $\mu_0(\la) $ and $ \xi_0(\la) $, the largest eigenvalue of $K(\la)$ and the corresponding normalized eigenfunction, respectively. Moreover, we set
		\bml{
			\label{eq:Kj}
			K_j (\la) :=  \sum_{i=j}^{N} \frac{1}{E_i +\la} \ket{ R_V^{1/2} (0) \, W \, {\psi}_i } \bra{  R_V^{1/2} (0) \,W\,  {\psi}_i} \\
			+ \int_0^\infty \: \frac{1}{\mu +\la}  R_V^{1/2} (0) \, W \, \diff E(\mu) \,  W \,  R_V^{1/2} (0),
		}
		so that, in particular, 
		\beq
			\label{eq:K}
			K(\la) = \frac{1}{E_0 +\la} \ket{ R_V^{1/2} (0) \, W \, {\psi}_0 } \bra{R_V^{1/2} (0) \,W\,    {\psi}_0} + K_1(\la),
		\eeq
		and the only singular contribution in $ K(\la) $ at $ |E_0| $ is isolated: note that by construction $ K_1 $ remains bounded for $ \lambda = |E_0| $. Obviously, by \eqref{eq:norms},
		\bdm
			\lim_{\lambda \to |E_0|^+} \mu_0(\la) = + \infty,	\qquad 	\lim_{\lambda \to +\infty} \mu_0(\la) = 0,
		\edm
		and thus, if we project \eqref{uno} onto $ \xi_0(\la) $, we get
		\beq
			\label{eq:projected}
			\braket{\xi_0(\la)}{\zeta} = \mu_0(\la) \braket{\xi_0(\la)}{\zeta},
		\eeq
		which is certainly solved by $ \zeta = \xi_0(\la) $ for some $ \lambda_0 \in (|E_0|, + \infty) $, such that $ \mu_0(\la_0) =  1 $. In fact, by taking the derivative w.r.t. $ \la $ of \eqref{eq:Kla}, one can easily realize that such a solution is actually unique, because the operator $ K(\la) $, and thus its eigenvalues, are continuous and monotonically decreasing functions of $ \la \in (|E_0|, + \infty) $. Finally, the min-max characterization of $ \mu_0(\la) $, combined with \eqref{eq:K}, yields
		\beq
			\mu_0(\la) = \sup_{f \in L^2, \lf\| f \ri\|_2 = 1} \meanlrlr{f}{K(\la)}{f} = \lf\| K(\la) \ri\| = \frac{C}{\la - |E_0|} + \OO(1), 	\qquad		\mbox{as } \la \to |E_0|^+,
		\eeq
		so that, if we multiply \eqref{eq:K} by $ E_0 + \la $ and take the limit $ \la \to |E_0|^+ $ using the expansion above, we get that $ \ket{\xi_0(\la)}\bra{\xi_0(\la)} $ converges to the projector onto $ R_V^{1/2} (0) \, W \, {\psi}_0 $, or, equivalently,
		\beq
			\lim_{\la \to |E_0|^+} \lf\| \xi_0(\la) - R_V^{1/2} (0) \, W \, {\psi}_0 \ri\|_2 = 0.
		\eeq
		
		Now, we want to show that one can apply the same argument above to the largest eigenvalue of the operator $ K_1(\la) $ for $ \la \in (|E_1|,+\infty) $, or, more in general, to $ K_j(\la) $ for $ \la \in  (|E_j|,+\infty) $. The idea is to look for a non-trivial solution to \eqref{uno} in 
		\bdm
			\mathscr{H}_1 : = \lf[ \mathrm{span}\lf(R_V^{1/2} (0) \, W \, {\psi}_0 \ri) \ri]^{\perp},
		\edm
		i.e., according to \eqref{eq:K}, we have to consider the homogeneous equation
		\beq
			\label{eq:due}
			\zeta = K_1(\la) \zeta.
		\eeq 	
		Then, if we denote by $ \mu_1(\la) $ the largest eigenvalue of $ K_1(\la) $ and by $ \xi_1(\la) $ the corresponding normalized eigenfunction, we have that
		\beq
			\label{eq:K1}
			K_1(\la) = \frac{1}{E_1 +\la} \ket{ R_V^{1/2} (0) \, W \, {\psi}_1 } \bra{R_V^{1/2} (0)  \,W\,   {\psi}_1} + K_2(\la),
		\eeq
		and $ K_2(\la) $ in bounded for any $ \la \in [|E_1|, + \infty) $. Furthermore,
		\beq
			\lim_{\la \to |E_1|^+} \mu_1(\la) = + \infty,	\qquad		\lim_{\la \to + \infty} \mu_1(\la) = 0,
		\eeq
		where the second identity follows from \eqref{eq:norms}. Hence, as before, $ \xi_1(\la)  $ provides a non-trivial solution of \eqref{eq:due} for some $ \lambda = \lambda_1 \in (|E_1|, + \infty) $ and 
		\beq
			\lim_{\la \to |E_1|^+} \lf\| \xi_1(\la) - R_V^{1/2} (0) \, W \, {\psi}_1 \ri\|_2 = 0.
		\eeq
		Note also that the trivial inequality 
		\bdm
			K(\la) \leq K_1(\la),		\qquad		\mbox{for } \la \in [0, |E_0|],
		\edm
		directly implies that, for $ \la \leq |E_0| $,
		\bmln{
			\sup_{f \in L^2, \lf\| f \ri\|_2 = 1, f \perp  R_V^{1/2} (0) \, W \, {\psi}_0} \meanlrlr{f}{K(\la)}{f} \leq \mu_0(\la) = \sup_{f \in L^2, \lf\| f \ri\|_2 = 1} \meanlrlr{f}{K(\la)}{f} \\
			\leq \sup_{f \in L^2, \lf\| f \ri\|_2 = 1} \meanlrlr{f}{K_1(\la)}{f} = \mu_1(\la) = \sup_{f \in L^2, \lf\| f \ri\|_2 = 1, f \perp  R_V^{1/2} (0) \, W \, {\psi}_0} \meanlrlr{f}{K(\la)}{f},
		}
		and therefore
		\beq
			\mu_1(\la) = \mu_0(\la),		\qquad \mbox{for } \la \in  [0, |E_0|].
		\eeq
		
		The argument can then be easily bootstrapped to show that the equation
		\beq
			\label{eq:tre}
			\zeta = K_j(\la) \zeta,
		\eeq 
		for $ 1 \leq j \leq N-1 $, admits a non-trivial solution in 
		\bdm
			\mathscr{H}_j : = \mathrm{span}\lf(R_V^{1/2} (0) \, W \, {\psi}_0 \,; \ldots; R_V^{1/2} (0) \, W \, {\psi}_j \ri)^{\perp},
		\edm
		for some $ \la_j \in (|E_j|, + \infty) $. Moreover, 
		\beq
			\mu_j(\la) = \mu_0(\la),	\qquad		\mbox{for } \la \in [0, |E_j|].
		\eeq
		Hence, we have found $ \la_0, \ldots, \la_{N-1} $ such that \eqref{uno} has a non-trivial solution for $ \la = \la_j $,  $ j = 1, \ldots, N-1 $. We remark that by monotonicity of $ K_j(\la) $, for $ \la \in (|E_j|, +\infty) $, each $ \la_j $ is unique. Note also that a necessary condition for the existence of further critical points is
		\beq \label{eventual}
			\lim_{\la \to 0^+ } \tilde\mu_{N-1+h} (\la) > 1,	\qquad		\mbox{for some } h > 0,
		\eeq
		where $ \tilde\mu_{N-1+h} $, $ h > 0 $, stands for the lower eigenvalues of the operator $ K_{N-1}(\la) $.

\begin{center}
	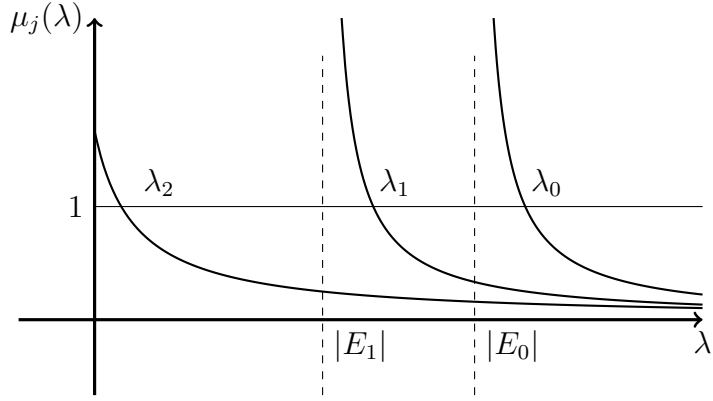
\begin{figure}
		\begin{tikzpicture}[scale=1.0]
			\draw[->, very thick] (-1,0) to (8,0) ;
			\draw[->, very thick] (0,-1) to  (0,4) ;
			\node [below] at (8,0) {\scalebox{1.0}{$\la$}};
			\node [left] at (0,1.5) {\scalebox{1.0}{$1$}};
			\node[left] at (0,4) {\scalebox{1.0}{$\mu_j (\la)$} };
			\draw[domain=5.25:8, samples=100, thick] plot[smooth] (\x, {1/(\x-5) } ); 
			\draw[domain=3.25:8, samples=100,  thick] plot[smooth] (\x, {1/(\x-3) ) } ); 
			\draw[domain=0:8, samples=100, thick] plot[smooth] (\x, {2.5/(1.9*\x+1) ) } ); 
			\draw[thin, dashed] (5,-1) to (5,3.5);
			\draw[thin, dashed] (3,-1) to (3,3.5);
			\draw (0,1.5) to (8,1.5);
			\node[above right ] at (5.6,1.5) {$\la_0$ };
			\node[above right] at (3.6,1.5) {$\la_1$ };
			\node[above right] at (0.5,1.5) {$\la_2$ };
			\node[below right] at (5,0) {$|E_0|$ };
			\node[below right] at (3,0) {$|E_1|$ };
		\end{tikzpicture}
		\caption{Behavior of the functions $ \mu_j(\la) $, $ j =0, \ldots, N-1 $, for $ W $ small.}\label{fig: muj no dege} 
	\end{figure}
\end{center}
		
		As anticipated, however, without any further assumption, we can not exclude a degeneracy of the values $ \mu_j(\la) $: it might indeed be that (Fig. \ref{fig: muj degeneracy})
		\bdm
			\mu_0(\la_j) = \mu_1(\la_j) = \cdots = \mu_j(\la_j),
		\edm
		for some $ j \geq 1 $. As a consequence, a certain number of critical points $ \la_0, \ldots, \la_{j} $ might coincide and thus belong to the interval $ (|E_0|, + \infty) $. By assuming that $ W $ is small enough however we can exclude such a degeneracy (Fig. \ref{fig: muj no dege}): by \eqref{eq:norm-bound1} and \eqref{eq:norm-bound2}, we have (recall \eqref{eq:Kj})
		\beq
			\lf\| K_{n}(|E_j|) \ri\| \leq c_j \lf( \frac{1}{|E_j| + E_n} + 1 \ri) \lf\| W \ri\|_3^2,	\qquad	\mbox{for } n < j \mbox{ and } j = 0, \ldots, n-1,
		\eeq
		which implies that, if
		\beq
			\lf\| W \ri\|_3^2 \leq  c_{n-1}^{-1} \lf( \frac{1}{|E_{n-1}| + E_n} + 1 \ri)^{-1} = : \delta_n,
		\eeq
		then
		\beq
			\lf\| K_{n}(|E_{n-1}|) \ri\| \leq 1, 	
		\eeq
		and therefore $ \la_n \in (E_n, E_{n-1}] $, by monotonicity of $ K_n(\la) $ for $ \la \in (E_n, +\infty) $. This completes the proof. Note that in this second part of the proof we have implicitly used Assumption \ref{boundst}, point c), to deduce that all the vectors $ R_V^{1/2} (0) \, W \, {\psi}_j $, $ j = 0, \ldots, N-1 $, are different, otherwise the argument does not work: if, for instance,  the two vectors did coincide for $ j < k $, then $ \mu_k = \mu_{k+1} $ and there would be one solution less.
	\end{proof}

	We have thus found out that the homogenous equation admits a non-trivial solution only for $ \la = \la_j $ (provided $ \la \neq |E_j| $), $ j = 0, \ldots, M-1 $, for some $ M \in \N $, so that $ M \geq 1 $. The next step is the analysis of the full equation \eqref{zero}: by Freedom alternative we have a unique solution for any $ \la $ different from $ |E_j|, \la_j $, $ j = 0, \ldots, N-1 $.

	\begin{corollary}[Solutions of \eqref{zero}]
		\label{crit2}
		\mbox{}	\\
		Let Assumptions \ref{pot} and \ref{boundst} hold. Then, \eqref{zero} admits a unique continuous solution $\varphi_0 $, which satisfies
		\beq
			\label{eq:varphi-asympt}
			\varphi_0(\xv) = 1 + \frac{\ampeo}{x} + o(x^{-1}),	\qquad		\mbox{as } x \to + \infty,
		\eeq
		for any $ \la \in (0, + \infty) $, with $\la \neq |E_j| $, $ j = 0, \ldots, N-1  $, and $ \la \neq \la_j $, $ j = 0, \ldots, M-1 $, where $ \la_j $ are the critical values as in Proposition \ref{crit}, and where
		\beq \label{scatlen}
			\ampeo = \tx\frac{1}{4\pi} \braketr{\phi_{V,0}}{ W\,  R_U(-\la)\, W \varphi_0} + a_V,
		\eeq
		and $a_V$ the scattering length associated to $ H_V $.
	\end{corollary}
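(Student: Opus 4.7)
The plan is to mirror the proof of Proposition \ref{pro:solution-scattering}, setting $\eta_0 := \varphi_0 - \phi_{V,0}$ and rewriting \eqref{zero} in the space $\BB$ as the inhomogeneous equation
\[
(I - D_0)\,\eta_0 = D_0\,\phi_{V,0}, \qquad D_0 := R_V(0)\, W\, R_U(-\la)\, W.
\]
The strategy is to apply Fredholm alternative: I need (i) compactness of $D_0$ on $\BB$, (ii) $D_0 \phi_{V,0} \in \BB$, and (iii) triviality of $\ker(I - D_0)$ in $\BB$. Point (iii) is exactly the content of Proposition \ref{crit} combined with the assumption $\la \neq \la_j$: the homogeneous equation \eqref{zeroh} has only the trivial solution whenever $\la$ avoids the critical values $\la_j$, while the condition $\la \neq |E_j|$ guarantees that $R_U(-\la)$ is a bounded operator on $L^2$.

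For (i) and (ii), I would repeat essentially verbatim the corresponding argument in the proof of Proposition \ref{pro:solution-scattering}: using the resolvent identity, decompose
\[
D_0 = T_W(0)\, R_U(-\la)\, W \; - \; T_{\lan x\ran^s V}(0)\, \lan x\ran^{-s}\, R_V(0)\, W\, R_U(-\la)\, W,
\]
and treat $D_0 \phi_{V,0}$ analogously. Compactness of $T_W(0)$ and $T_{\lan x\ran^s V}(0)$ on $\BB$ comes from Lemma \ref{ike}, point a), at $k=0$; boundedness of the sandwiched factors relies on Sobolev embedding together with the mapping properties $R_U(-\la): L^2 \to H^2$ and $R_V(0): L^2_s \to H^2_{-s}$. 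The essential difference with respect to the positive-energy case is that the zero-energy bound in Remark \ref{rem:k02} requires $s > 1$ rather than $s > 1/2$; I would therefore fix some $s \in (1, 3/2]$, which is compatible with $V \in I_4$ and $W \in I_3 \cap L^3$ guaranteed by Assumption \ref{pot}.

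Once Fredholm alternative delivers a unique $\eta_0 \in \BB$, and hence a unique continuous $\varphi_0 = \phi_{V,0} + \eta_0$, the asymptotic \eqref{eq:varphi-asympt} follows exactly as in the final part of the proof of Proposition \ref{pro:solution-scattering}: one rewrites
\[
\varphi_0 = \phi_{V,0} + T_{\lan x\ran W}(0)\, \lan x\ran^{-1} R_U(-\la) W \varphi_0 - T_{\lan x\ran V}(0)\, \lan x\ran^{-1} R_V(0) W R_U(-\la) W \varphi_0,
\]
and applies Lemma \ref{ike}, point c), at $k=0$. Since the outgoing momentum $\kvp = k \hat{\xv}$ vanishes, the oscillatory factor $e^{-i \kvp \cdot \yv}$ reduces to $1$ and the would-be spherical waves collapse to pure $1/x$-decaying terms. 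Combining this with the zero-energy form $\phi_{V,0}(\xv) = 1 + a_V/x + o(x^{-1})$ of the LS solution, and with the identity $\phi_{V,0} = (I - R_V(0) V)\,1$ (which, by self-adjointness of $R_V(0)$, gives $\braket{\phi_{0,0}}{(I - V R_V(0))\cdot} = \braket{\phi_{V,0}}{\cdot}$), produces precisely \eqref{scatlen}.

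The main obstacle, beyond the notational bookkeeping, is the singular behavior of the resolvent $R_V(k^2)$ at $k=0$: while for $k > 0$ Lemma \ref{agmon3} provides mapping properties between $L^2_s$ and $H^2_{-s}$ for any $s > 1/2$, at zero energy one must rely on the more delicate estimate of Remark \ref{rem:k02}, which forces $s > 1$ and uses crucially Assumption \ref{boundst}, point b), excluding zero-energy resonances and eigenvalues of $H_V$. This mild loss in the weight does not obstruct the scheme thanks to the strong decay $W \in I_3 \cap L^3$, and, modulo this adjustment, the argument is a direct transcription of the positive-energy case to $k = 0$.
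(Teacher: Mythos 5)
Your argument is correct, but it takes a slightly different route from the paper's. You apply the Fredholm alternative directly in $\BB$ to the operator $D_0 = R_V(0)\, W\, R_U(-\la)\, W$, which forces you to re-establish compactness of $D_0$ on $\BB$ at $k=0$ by redoing the decomposition of Proposition \ref{pro:solution-scattering} with the weight pushed to $s>1$ (as dictated by Remark \ref{rem:k02}); you correctly identify this as the only point where the zero-energy case genuinely differs from $k>0$, and the choice $s\in(1,3/2]$ is indeed compatible with $V\in I_4$ and $W\in I_3\cap L^3$. The paper instead sidesteps this by transferring \eqref{zero} to the $L^2$ equation $\zeta - K(\la)\zeta = \zeta_{V,0}$ via the (formal) substitution $\zeta = H_V^{1/2}\varphi_0$, so that it can reuse the compactness of the symmetrized operator $K(\la)$ and the equivalence of the $\BB$- and $L^2$-formulations already established in the proof of Proposition \ref{crit}; uniqueness in $L^2$ is then translated back to uniqueness in $C$. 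Both proofs invoke Proposition \ref{crit} for triviality of the kernel away from the critical values $\la_j$ and defer the asymptotic \eqref{eq:varphi-asympt} to the large-$x$ argument of Proposition \ref{pro:solution-scattering} (where, as you note, $\kvp=0$ collapses the spherical wave to a pure $1/x$ term and the adjoint identity $\braket{\phi_{0,0}}{(I-VR_V(0))f}=\braket{\phi_{V,0}}{f}$ yields \eqref{scatlen}). Your version is somewhat longer but more self-contained, and in fact the invertibility of $I-D_0$ in $\BB$ that you establish is exactly what the paper later needs (and uses without separate proof) in Lemma \ref{lemma:positive}; the paper's version is shorter because it recycles the $L^2$ machinery, at the price of the slightly delicate meaning of $H_V^{1/2}\phi_{V,0}$ for the non-$L^2$ function $\phi_{V,0}$.
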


	\begin{proof}
		Taking $ \la $ different from all critical points possibly occurring for the homogeneous equation as in  Proposition \ref{crit} allows to apply the Freedom alternative to the equation
		\beq
			\label{eq:zero1}
			\zeta - K(\la) \zeta = \zeta_{V,0},
		\eeq
		which is the $ L^2$-analogue of \eqref{zero} obtained by setting $ \zeta : = H_V^{1/2} \varphi_0 $ and $ \zeta_{V,0} : = H_V^{1/2} \varphi_{V,0} $: acting as in the first part of the proof of Proposition \ref{crit}, one can indeed show that $ \varphi_0 \in C $ solves \eqref{zero} if and only if $ \zeta \in L^2 $ solves \eqref{eq:zero1}. Then, compactness of $ K(\la) $ proven previously (proof of Proposition \ref{crit}), and Proposition \ref{crit} itself imply that $ I - K(\la) $ is invertible whenever there is no non-trivial solution of the homogenous equation and therefore for $ \la \neq \la_j $, $ j = 0, \ldots, M-1 $. Uniqueness in $ L^2 $ of $ \zeta $ translates into uniqueness in $ C $ of $ \varphi_0 $.
		
		It remains to prove the asymptotic \eqref{eq:varphi-asympt}, but to this purpose it suffices to apply to the operator $ D_0 $ the very same argument used for $ D_{\mathbf{k}} $ in the second part of the proof of Proposition \ref{pro:solution-scattering}, since \eqref{zero} can be rewritten as $ (1 - D_0) \varphi_0 = \phi_{V,0} $. We omit the details for the sake of brevity.
	\end{proof}

In order to complete the derivation of the expression of $ \aeff $, according to Definition \ref{def:scattering-length}, we have to prove that 
\bdm
	\ampeo = \lim_{k \to 0} \ampe,
\edm
i.e., that $ \ampe $ is continuous in $ k $ in a neighborhood of $ k = 0 $. Before proving this fact however we need to show that positive eigenvalues $ \EE $ of $ \HH $ do not accumulate at $ 0 $, under suitable conditions on $ \la $ and $ \kv $.

	\begin{lemma}[$ \EE $ for $ k $ small]
		\label{lemma:positive}
		\mbox{}	\\
		Let Assumptions \ref{pot} and \ref{boundst} hold and let $ \la \in (0, + \infty) $, with $\la \neq |E_j| $, $ j = 0, \ldots, N-1  $, and $ \la \neq \la_j $, $ j = 0, \ldots, M-1 $, where $ \la_j $ are the critical values as in Proposition \ref{crit}. Then, there exists $\de=\de(\la) > 0 $, such that, for any $ 0 < k^2 <\de $, $ \EE \cap [0,\delta) = \emptyset $.
	\end{lemma}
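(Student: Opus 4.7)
\medskip

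\textbf{Proof sketch.} My plan is a compactness-contradiction argument combining the uniform $H^2_\ve$ bound from Proposition \ref{pro:positive-eigenvalues} with the invertibility of $I - K(\la)$ outside the exceptional set, as established in Proposition \ref{crit}. Suppose for contradiction that no such $\de$ exists; then there is a sequence $k_n \to 0^+$ with $k_n^2 \in \EE$, and by discarding finitely many indices I may also assume $k_n^2 - \la \neq E_j$ for $j=0,\ldots,N-1$. For each $n$ I pick a normalized eigenvector $\Psi_n = (\zeta_n,\theta_n) \in H^2 \oplus H^2$ of $\HH$ with $\lf\|\zeta_n\ri\|_2 + \lf\|\theta_n\ri\|_2 = 1$.

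The first step is to extract a strongly convergent subsequence. The bound \eqref{sarda} yields $\lf\|\zeta_n\ri\|_{H^2_\ve} + \lf\|\theta_n\ri\|_{H^2_\ve} \leq C$, with $C$ independent of $n$ since the constants in \eqref{triglia} depend on $k_n$ only through the factor $(k_n^2 - i)$, which is uniformly bounded as $k_n \to 0$. By Rellich's compactness in weighted Sobolev spaces, along a subsequence $\zeta_n \to \zeta_0$ and $\theta_n \to \theta_0$ strongly in $L^2$, preserving the normalization $\lf\|\zeta_0\ri\|_2 + \lf\|\theta_0\ri\|_2 = 1$.

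The second step is to identify the limit as a solution of the zero-energy homogeneous problem. Passing to the limit distributionally in \eqref{geneigen}, using $k_n^2 \to 0$, gives $H_V \zeta_0 + W \theta_0 = 0$ and $(H_U + \la) \theta_0 + W \zeta_0 = 0$. Since $\la > 0$ and $\la \neq |E_j|$, the operator $H_U + \la$ is invertible on $L^2$, yielding $\theta_0 = - R_U(-\la) W \zeta_0$, and substitution reduces the system to the homogeneous equation $(I - D_0) \zeta_0 = 0$. A brief bootstrap, analogous to the one in Proposition \ref{poseigen}, upgrades the regularity to $\zeta_0 \in \dom(H_V) = H^2 \subset \BB$. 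Setting $u_0 := H_V^{1/2} \zeta_0 \in L^2$—which is legitimate by Assumption \ref{boundst}, point (b)—the equation rewrites as $u_0 = K(\la) u_0$. Since $\la$ differs from all critical values $\la_j$ of Proposition \ref{crit}, the Fredholm alternative forces $I - K(\la)$ to be invertible on $L^2$, whence $u_0 = 0$, $\zeta_0 = 0$, and in turn $\theta_0 = 0$, contradicting the normalization.

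The principal obstacle is the uniformity of the $H^2_\ve$ bound in $n$: a direct inspection of \eqref{triglia}--\eqref{sarda} confirms that the $k$-dependence is benign as $k_n \to 0^+$. A secondary point is the regularity upgrade $\zeta_0 \in H^2$ needed to apply the $L^2$-reformulation; this follows from $W R_U(-\la) W \zeta_0 \in L^2$ whenever $\zeta_0 \in L^2$, so that the first limiting equation gives $H_V \zeta_0 \in L^2$ at once.
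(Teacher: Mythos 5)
Your overall strategy (contradiction via a sequence $k_n^2\in\EE$ with $k_n\to 0$, extraction of a strongly convergent subsequence of normalized eigenvectors, identification of the limit with a non-trivial solution of the zero-energy homogeneous equation) is different from the paper's, and it hinges entirely on a claim that is not justified and is, in fact, the crux of the whole matter: the uniformity in $k$ of the bound \eqref{sarda} as $k\to 0^+$. The constants in \eqref{triglia}--\eqref{spigola} are indeed benign, but \eqref{sarda} is obtained by applying Lemma \ref{agmon2} to $\widehat{g_{\kv,i}}/(p^2-k^2)$, and the constant in that division lemma is uniform only for $k$ in compact subsets of $(0,+\infty)$: as $k\to 0$ the two zeros of $p^2-k^2$ coalesce at the origin and the estimate degenerates (this is the usual failure of the limiting absorption principle uniformly down to zero energy). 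This is precisely why Proposition \ref{pro:positive-eigenvalues} only rules out accumulation of $\EE$ in intervals $(a,b)$ with $a>0$ and explicitly leaves open accumulation at $0$ — the very phenomenon Lemma \ref{lemma:positive} is designed to exclude. Without the uniform weighted bound you cannot invoke Rellich, the subsequence converges only weakly in $H^2$, the weak limit may vanish, and no contradiction with the normalization arises. So the "principal obstacle" you identify is not resolved by "direct inspection"; it is a genuine gap.

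The paper avoids a priori bounds on eigenfunctions altogether: by Proposition \ref{poseigen} it suffices to exclude non-trivial solutions of the homogeneous equation $(I-D_{\kv})\zeta=0$ in $\BB$ for small $k$, and this is done by showing $\lf\| D_{\kv}-D_0 \ri\|_{\mathcal{B}(\BB)}\to 0$ as $k\to 0$ (using $\lf\| R_U(k^2-\la)-R_U(-\la)\ri\|\to 0$ and the continuity of $k\mapsto T_Z(k)$ from Lemma \ref{ike}), then writing $I-D_{\kv}=(I-D_0)\bigl(I+(I-D_0)^{-1}(D_0-D_{\kv})\bigr)$ and inverting by a Neumann series, since $I-D_0$ is invertible for $\la\neq\la_j$, $\la\neq|E_j|$ by Proposition \ref{crit}. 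If you want to keep a compactness flavour, you would still have to route the argument through the integral equation (where $R_V(k^2)$, $R_U(k^2-\la)$ provide the decay) rather than through the differential one; as written, your proof does not go through.
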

	
	\begin{proof}
		Thanks to Proposition \ref{poseigen}, it suffices to investigate the existence of non-trivial solutions of the homogeneous equation \eqref{eqfonhom}. For $ \kv = 0 $ we have proven in Proposition \ref{crit} that no non-trivial solutions occur for $ \la \neq \la_j $,$ j = 0, \ldots, M $ and $ \la \neq |E_j| $, $ j = 0, \ldots, N-1 $. Therefore, $ I  - D_0 $ is invertible, $ \{ 0 \} \notin \EE $, but $ 0 $ might be an accumulation point of $ \EE $. However, we are going to show that for $ k $ small this is never the case. 
		
		First of all we observe that
		\beq \label{fatto1}
			\lim_{k \to 0} \lf\| R_U (k^2 -\la) - R_U ( -\la) \ri\| =0,
		\eeq
		and thus
		\bdm
			\lim_{\kv \to 0} \lf\| D_{\kv} - D_0 \ri\| = 0.
		\edm
		In fact, by using arguments similar to the ones of the proof of Proposition \ref{pro:solution-scattering}, it is not difficult to see that one also has
		\bdm
			\lim_{\kv \to 0} \lf\| D_{\kv} - D_0 \ri\|_{{\mathcal B}(\BB) }=  0.
		\edm
		Moreover, under the assumptions above, for $ \la $ fixed, there exists $ \delta'(\la) > 0 $, such that for $ k^2 < \delta' $, $R_U (k^2 -\la)$ and $D_{\kv}$ are well defined and 
		\bdm
			I - D_{\kv} = I - D_0 + D_0 - D_{\kv} = (I - D_0) \lf( I + (I - D_0)^{-1} \lf( D_0 - D_{\kv} \ri) \ri).
		\edm
		Now, by the invertibility of $ I - D_0 $ and the estimates above, we can find another $ \delta''(\la) > 0 $, such that
		\bdm
			\lf\| (I - D_0)^{-1} \lf( D_0 - D_{\kv} \ri) \ri\|_{{\mathcal B}(\BB) } < 1,
		\edm
		if $ k^2 \leq \delta''(\la) $, which implies that $ I - D_{\kv} $ is invertible in $ \BB $ by Neumann series. Hence, there is no non-trivial solution of the homogeneous equation \eqref{eqfonhom} in $ \BB $ and thus in $ H^2 $ for $ k^2 < \min(\delta', \delta'') = : \delta $, which proves the result.
	\end{proof}

	\begin{proposition}[Continuity of $ \ampe $]
		\label{usb}
		\mbox{}	\\
		Let Assumptions \ref{pot} and \ref{boundst} hold. Then, for any $ \la \in (0, + \infty) $, with $\la \neq |E_j| $, $ j = 0, \ldots, N-1  $, and $ \la \neq \la_j $, $ j = 0, \ldots, M-1 $, where $ \la_j $ are the critical values as in Proposition \ref{crit}, there exists $\de=\de(\la) > 0 $, such that, for any $ 0 < k^2 <\de $, $ \ampe $ is a continuous function of $ k $ and 
		\beq \label{convergence}
			\ampeo = \lim_{k\to 0 } \ampe =:  \aeff.
		\eeq
	\end{proposition}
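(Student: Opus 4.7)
The plan is to fix $\lambda$ as in the hypotheses, apply Lemma \ref{lemma:positive} to find $\delta>0$ such that for $0<k^2<\delta$ we have $k^2\notin \EE$, and then show continuity of each ingredient in the expression \eqref{scatamp} for $\ampe$. Under these conditions Proposition \ref{pro:solution-scattering} produces a unique solution $\varphi_{\kv}\in \phi_{V,\kv}+\BB$ of \eqref{geneigen3}, and Corollary \ref{crit2} produces $\varphi_0\in 1+\BB$, so both quantities in \eqref{convergence} are well defined. I would then decompose
\[
\ampe-\ampeo = \big(\amp-a_V\big) + \tfrac{1}{4\pi}\sum_{i=1}^{3} I_i(\kv),
\]
with
\[
I_1(\kv)=\meanlrlr{\phi_{V,\kvp}-\phi_{V,0}}{W R_U(k^2-\la) W}{\varphi_{\kv}},
\]
\[
I_2(\kv)=\meanlrlr{\phi_{V,0}}{W\bigl(R_U(k^2-\la)-R_U(-\la)\bigr) W}{\varphi_{\kv}},
\]
\[
I_3(\kv)=\braketl{\phi_{V,0}}{W R_U(-\la) W\,(\varphi_{\kv}-\varphi_0)},
\]
and show that each of these four pieces vanishes as $\kv\to 0$.

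The first term $\amp-a_V$ tends to zero by the classical continuity of the scattering amplitude for $V\in I_4$, which follows from Lemma \ref{ike} and Assumption \ref{pot}. For $I_2$, I would use \eqref{fatto1} together with the bound $\|W\varphi_{\kv}\|_2\leq \|W\|_2\|\varphi_{\kv}\|_{\BB}$ and a uniform $\BB$-bound for $\varphi_{\kv}$; the latter comes from the Neumann series argument in the proof of Lemma \ref{lemma:positive}, which yields $\|(I-D_{\kv})^{-1}\|_{\mathcal B(\BB)}\leq C$ uniformly for $k^2<\delta$. The same Neumann series argument also shows that $\varphi_{\kv}-\varphi_0\to 0$ in $\BB$: indeed, writing $\eta_{\kv}=\varphi_{\kv}-\phi_{V,\kv}$, equation \eqref{baba} gives $\eta_{\kv}=(I-D_{\kv})^{-1}D_{\kv}\phi_{V,\kv}$, and the joint continuity of $D_{\kv}$ and $\phi_{V,\kv}$ in $\kv$ (see below) transfers to $\eta_{\kv}$, and thus to $\varphi_{\kv}$, via $(I-D_{\kv})^{-1}\to (I-D_0)^{-1}$ in operator norm on $\BB$. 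Since $W\in L^2$, convergence of $\varphi_{\kv}$ in $\BB$ implies $W\varphi_{\kv}\to W\varphi_0$ in $L^2$, and then $I_3\to 0$ follows from the boundedness of $R_U(-\la) W$ and Cauchy–Schwarz.

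The remaining piece $I_1$ requires continuity of $\kvp\mapsto \phi_{V,\kvp}$ in a topology strong enough to control the matrix element with $W R_U(k^2-\la) W\varphi_{\kv}$, which is bounded in $L^2$. Writing $\phi_{V,\kvp}=\phi_{0,\kvp}-T_V(k)\phi_{V,\kvp}$ and noting that $T_V(k)$ is compact on $\BB$ with norm-continuous dependence on $k$ (Lemma \ref{ike}), the Fredholm alternative plus Assumption \ref{boundst}, point (b), gives invertibility of $I+T_V(k)$ on $\BB$ near $k=0$, so that $\phi_{V,\kvp}-\phi_{0,\kvp}\to \phi_{V,0}-1$ in $\BB$. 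On the other hand $W\bigl(\phi_{0,\kvp}-1\bigr)\to 0$ in $L^2$ by dominated convergence, since $|W(\mathbf{x})|$ is an $L^2$ majorant with $W\in I_3\subset L^2$. Combining these two facts yields $W(\phi_{V,\kvp}-\phi_{V,0})\to 0$ in $L^2$, hence $I_1\to 0$.

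The main obstacle is the last step: one has to translate the $\BB$–convergence obtained from the Lippmann–Schwinger analysis into genuine $L^2$–convergence after multiplication by $W$, which in turn rests on the decay encoded in the Ikebe class $I_n$ and on the absence of a zero-energy eigenvalue or resonance for $H_V$ (needed to invert $I+T_V(k)$ uniformly for small $k$). Once these uniform bounds and norm-continuity statements are in place, the four-term decomposition delivers \eqref{convergence} directly.
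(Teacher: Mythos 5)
Your argument is correct in substance and rests on exactly the ingredients the paper uses: Lemma \ref{lemma:positive} together with the Neumann--series invertibility of $I-D_{\kv}$ for $k$ small, the norm continuity \eqref{fatto1} of $R_U(k^2-\la)$, and the decay of $W$ to absorb the non-decaying plane-wave part of the generalized eigenfunctions. The packaging differs: the paper proves a single convergence statement, $\lf\| \lan x \ran^{-\eps}\lf(\varphi_{\kv}-\varphi_0\ri)\ri\|_\infty\to 0$ for $\eps$ chosen so that $\lan x\ran^{\eps}W\in L^2$, obtained by writing $\eta_{\kv}=(I-D_{\kv})^{-1}D_{\kv}\lan x\ran^{\eps}\,\lan x\ran^{-\eps}\phi_{V,\kv}$ and showing $\lf\|\lan x\ran^{\eps}(D_{\kv}-D_0)\ri\|_{\mathcal{B}(\BB)}\to 0$, and then reads off the convergence of the amplitude from the asymptotics; you instead expand \eqref{scatamp} minus its $k=0$ counterpart into four pieces and kill each one, which is arguably more transparent at the level of $\ampe$ itself. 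One intermediate assertion, however, is literally false and should be restated: $\varphi_{\kv}-\varphi_0\to 0$ ``in $\BB$'' cannot hold, since this difference contains $e^{i\kv\cdot\xv}-1$, which neither vanishes at infinity (so the difference is not an element of $\BB$) nor tends to $0$ in sup norm. What is true, and all you need for $I_3$, is that $\eta_{\kv}-\eta_0\to 0$ in $\BB$ while $W(\phi_{V,\kv}-\phi_{V,0})\to 0$ in $L^2$ by the same dominated-convergence device you deploy for $I_1$ --- equivalently, the paper's weighted statement. Note also that already the step $D_{\kv}\phi_{V,\kv}\to D_0\phi_{V,0}$ in $\BB$, which you invoke to get $\eta_{\kv}\to\eta_0$ from \eqref{baba}, secretly requires this weighted trick (uniform boundedness of $D_{\kv}\lan x\ran^{\eps}$ combined with $\lan x\ran^{-\eps}(\phi_{V,\kv}-\phi_{V,0})\to 0$ in sup norm), not plain continuity of $\kv\mapsto\phi_{V,\kv}$. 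With that correction your four-term decomposition closes and delivers \eqref{convergence}.
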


	\begin{proof}
		By Lemma \ref{lemma:positive}, we can take $ k $ small enough so that no positive eigenvalue of $ \HH $ lays in the interval $ [0,k^2] $. Then, Proposition \ref{pro:solution-scattering} applies and we know that, if $ \la - k^2 \neq |E_j| $, $ j = 0, \ldots, N-1 $, which again can be ensured by taking $ k $ small enough, since $ \la \neq |E_j| $ by assumption, then \eqref{geneigen3}  admits a unique solution in $ C $, which asymptotically behaves like 
		\bdm
			\varphi_{\kv}(\xv) = e^{i\kv\cdot\xv} + \frac{\ampe}{x} e^{i k x} + o(x^{-1}).
		\edm
		As proven in Corollary \ref{crit2} a similar expansion (see \eqref{eq:varphi-asympt}) holds true for $ \varphi_0 $. Therefore, we just have to control the limit $ k \to 0 $ of the expression above:  we are now going to prove that there exists $ \eps > 0 $ such that
		\beq 
			\label{fatto3}
			\lim_{\kv\to 0} \lf\| \lan x \ran^{-\ve} \lf(  \varphi_\kv - \varphi_0  \ri) \ri\|_\infty = 0,
		\eeq
		which then implies \eqref{convergence} by direct inspection of the asymptotic above and \eqref{eq:varphi-asympt}.
		
		Take then $ \eps > 0 $, so that $\lan x \ran^\ve W \in L^2$. A trivial estimate shows that
		\beq 
			\label{fatto2}
			\lim_{\kv\to 0} \lf\| \lan x \ran^{-\ve} \lf(  \phi_{0,\kv} - \phi_{0,0} \ri) \ri\|_\infty = \lim_{\kv\to 0} \lf\| \lan x \ran^{-\ve} \lf( e^{i \kv \cdot \xv} - 1 \ri) \ri\|_\infty = 0,
		\eeq
		but, by using the LS equation associated to the potential $ V $, i.e., \eqref{eq:ls} and \eqref{eq:ls-solution}, one can also prove that
		\beq 
			\label{fatt3}
			\lim_{\kv\to 0} \lf\| \lan x \ran^{-\ve} \lf(  \phi_{V,\kv} - \phi_{V,0} \ri) \ri\|_\infty = 0,
		\eeq
		Setting $ \varphi_\kv  =  \phi_{V,\kv} + \eta_\kv $, as in the proof of Proposition \ref{pro:solution-scattering}, and consequently $ \varphi_0  =  \phi_{V,0} + \eta_0 $, in order to prove \eqref{fatt3}, it suffices to show that
		\beq \label{fatto4}
			\lim_{\kv\to 0} \lf\| \eta_\kv - \eta_0 \ri\|_\infty = 0.
		\eeq
		by \eqref{fatto2}. On the other hand, we have
		\[
			\eta_{\kv} =  \lf( I - D_{\kv} \ri)^{-1}  D_{\kv} \lan x \ran^{\ve} \; \lan x \ran^{-\ve} \phi_{V, \kv}.
		\]
		We have seen in the proof of Lemma \ref{lemma:positive} that $ \lf\| D_\kv - D_0 \ri\|_{\mathcal{B}(\BB)} \to 0 $, as $ k \to 0 $, but, in fact, if $ \langle x \rangle^{\eps} W \in L^2 $, the result can be strengthened to get
		\bdm
			\lim_{k \to 0} \lf\| \lan x \ran^{\ve} \lf( D_\kv - D_0 \ri) \ri\|_{\mathcal{B}(\BB)} = 0.
		\edm
		Estimate \eqref{fatto4} follows then from invertibility of $ I - D_\kv $ in $ \BB $ for $ k $ small (proof of Lemma \ref{lemma:positive}) and \eqref{fatto2}. 
	\end{proof}

We are now in position to complete the proof of the main Theorem. We recall however first the Schur-Grushin-Feshbach (SGF) formula (see, e.g., \cite{jn}): let  $ X=X_0  \dot{+} X_1$ be a vector space, which is the direct sum of two linear spaces $ X_0 $ and $ X_1 $, then, any linear operator $L$ on $X$ can be expressed as
\beq 
	L= \left(
		\begin{array}{cc}
			L_{00} & L_{01} \\
			L_{10} & L_{11}
		\end{array}
	\right),
\eeq
where $ L_{ij}: X_j \to X_i $, $ i, j = 0,1$.
Suppose now that $L_{00}$ is invertible in $ X_0 $ and set 
\bdm
	S : = L_{11} - L_{10}\, L_{00}^{-1} L_{01},
\edm
which goes under the name of Schur complement of $ L_{11} $. Then, $L $ is invertible if and only if $ S^{-1}$ exists and 
\beq \label{sgf}
	L^{-1}= \left(
		\begin{array}{cc}
			L_{00}^{-1}  + L_{00}^{-1} L_{01} S^{-1} L_{10}L_{00}^{-1}  &- L_{00}^{-1} L_{01}S^{-1} \\
- S^{-1} L_{10}L_{00}^{-1}  &S^{-1}
		\end{array}
\right).
\eeq

	\begin{proof}[Proof of Theorem \ref{main}]
		Fix $\la \in \R^+ $, such that $ \la \neq |E_j|$, $ j = 0, \ldots, N-1 $, and $\la \neq \la_j$, $ j = 0, \ldots, M-1 $. Then, by Propositions \ref{crit} and \ref{usb}, the scattering length $ \aeff $ is well defined and expressed (see \eqref{convergence}) by the formula \eqref{scatlen}, i.e.,
		\beq
			\label{eq:aeff-proof}
			\aeff = \tx\frac{1}{4\pi} \braketr{\phi_{V,0}}{ W\,  R_U(-\la)\, W \varphi_0} + a_V
		\eeq
		which, taking into account \eqref{zero}, can be rewritten as 
		\beq 
			\label{scatlen2}
			\aeff = \tx\frac{1}{4\pi} \braketr{\phi_{V,0}}{ W\,  R_U(-\la)\, W  \left(  I- D_0 \right)^{-1} \phi_{V,0}} +a_V.
		\eeq
		Therefore we have to investigate the inversion of $ I - D_0 $.
		
		The first step towards the result is the analysis of the relation between the operator $ K(\la) $ and the one appearing in \eqref{eq:aeff-proof}, i.e., $ D_0 = R_V(0) W\,  R_U(-\la)\, W $. It is straightforward to show that
		\beq \label{brutta}
  			(I - D_0)^{-1} = \left(  I- R_V (0) \, W\,  R_U(-\la)\, W\, \right)^{-1} = R_V^{1/2} (0) (I-K(\la) )^{-1} H_V^{1/2},
		\eeq
		and, via the identity
		\be
			(I-A)^{-1} = I+A(I-A)^{-1} = I + (I-A)^{-1} A,
		\ee
		we can write 
		\beq \label{bella}
  			(I - D_0)^{-1} = I
  			+  R_V^{1/2} (0)
 \left(  I- K(\la)  \right)^{-1} 
R_V^{1/2} (0)  \, W\,  R_U(-\la)\, W.
		\eeq
		The above formula shows the relation between the inverse of $ I - D_0 $ and the one of $ I - K(\la) $.
	
		Now we analyze the singularities of $ \aeff $, by first addressing the singular points of $ I - K(\la) $. Fix some $ \la_j $, $ j =0, \ldots, M-1 $ and let $ Q_j $ be the orthogonal projector onto
		\bdm
			\mathscr{K}_j : = \ker \lf( I - K(\la_j) \ri) 
		\edm
		while $ P_j : = I - Q_j $ projects onto $ \mathscr{K}_j^{\perp} $.
		Then, we find the following SGF decomposition for the operator $ I-K(\la) $ (recall \eqref{eq:K}):
		\be
			I-K(\la) = 
			\left(
				\begin{array}{cc}
					P_j(I-K(\la))P_j & P_j(I-K(\la))Q_j \\
					Q_j(I-K(\la))P_j & Q_j(I-K(\la))Q_j
				\end{array}
			\right) =:
			\left(
				\begin{array}{cc}
					K_{00}(\la) & K_{01}(\la) \\
					K_{10}(\la) & K_{11} (\la)
				\end{array}
			\right).
		\ee
		By assumption $ K_{00}(\la) $ is invertible at $ \la = \la_j $, and therefore there exists a neighborhood $ (\la_j - \delta, \la_j + \delta) $, $ \delta > 0 $, where it is invertible via Neumann series. Moreover, due to the smoothness of $ K(\la) $, we have, for any such $ \la \in (\la_j-\delta,\la_j+\delta) $,
		\begin{align}
			K_{00}(\la) &= K_{00}(\la_j) +\mathcal{O} (\la-\la_j), \label{00}	\\
 			K_{01}(\la) & = \mathcal{O} (\la-\la_j), \label{01}	\\
			K_{10}(\la) & = \mathcal{O} (\la-\la_j),  \label{10} \\
			K_{11} (\la) & =  (\la-\la_j) Q_j K'(\la_k) Q_j+  \mathcal{O}\lf( (\la-\la_j)^2 \ri), \label{11}	
		\end{align}		
		where the rests have been estimated in operator norm. All the above estimates follow from a Taylor expansion of the operator around $ \la_j $: \eqref{00} is trivial, so let us consider \eqref{01},
		\bmln{
			\lf\| K_{01}(\la) \ri\|^2 = \sup_{f \in L^2, \lf\| f \ri\|_2 = 1} \lf\| K_{01}(\la) f \ri\|_2^2 = \sup_{f \in L^2, \lf\| f \ri\|_2 = 1} \meanlrlr{f}{Q_j (I-K(\la)) P_j (I-K(\la)) Q_j}{f} \\
			\leq \lf\| (I - K(\la)) Q_j \ri\|^2 = \lf\| \lf( K(\la_j) - K(\la) \ri) Q_j \ri\|^2  = \OO\lf((\la - \la_j)^2\ri).
		} 
		The third one \eqref{10} is analogous, so let us consider \eqref{11}:
		\bmln{
			\lf\| K_{11}(\la) - (\la - \la_j) Q_j  K^{\prime}(\la_j) Q_j \ri\|^2 = \lf\| Q_j \lf( K(\la_j) - K(\la) - (\la - \la_j) K^{\prime}(\la_j) \ri) Q_j \ri\|^2 \\
			= \OO\lf( (\la - \la_j)^4 \ri),
		}
		which leads to the last expansion.
		
		\emph{Part (i).} Let us now consider $ \la_0 $, which is defined as the unique point where $ \mu_0(\la_0) = 1 $ (see the proof of Proposition \ref{crit}): at $ \la_0 $ the homogeneous equation $ (I - K(\la_0)) \eta = 0 $ admits a non-trivial solution. A direct computation yields
		\beq
			\label{eq:kerKprime}
			K^{\prime}(\la) = - R_V^{1/2} (0) \, W\,  R^2_U(-\la)\, W\, R_V^{1/2} (0) < 0,
		\eeq
		on $ L^2 $, thanks to the positivity of $ R^2_U(-\la) $ and triviality of the kernel of $ W R_V^{1/2}(0) $, by Assumption \ref{boundst}, point c). Hence, the operator is invertible in $ \mathscr{K}_0 $, which is the key ingredient to apply the SGF method: by \eqref{10}, \eqref{01} and \eqref{11}
		\be
			S = K_{11} (\la) - K_{10}(\la) K_{00}^{-1}(\la)  K_{01}(\la) = (\la-\la_0) \lf[ Q_0 K'(\la_0) Q_0 +  \mathcal{O}\lf(\la-\la_0 \ri) \ri]
		\ee
		which is invertible for $ \la - \la_0  $ small enough, thanks to \eqref{eq:kerKprime}, and
		\be
			S^{-1} = \frac{1}{\la-\la_0} Q_0 Y_0 Q_0 +  \mathcal{O} (1),
		\ee
		where we have denoted by $ Y_0 $ the inverse of $ K'(\la_0) $ restricted to $ \mathscr{K}_0 $. By the SGF formula \eqref{sgf}, we conclude that w.r.t. to the decomposition $ L^2 = \mathscr{K}_0^{\perp} \dot{+} \mathscr{K}_0 $
		\beq
			\lf(I-K(\la) \ri)^{-1} = \frac{1}{\la - \la_0} \lf(
			\begin{array}{cc}
				0	&	0 \\
				0	&	Y_0
		\end{array}
			\ri) + \OO(1),
		\eeq
		where we have estimated
		\begin{align}
			K_{00}^{-1} \lf(I + K_{01} S^{-1} K_{10} K_{00}^{-1} \ri) &= K_{00} \lf( I + \OO\lf( \la - \la_0 \ri) \ri) = \OO(1),	\\
 			K_{00}^{-1} K_{01} S^{-1} & = \mathcal{O} (1), 	\\
			S^{-1} K_{10} K_{00}^{-1} & = \mathcal{O} (1).
		\end{align}	
		Hence, we deduce that w.r.t. to the same decomposition, in a neighborhood of $ \la_0 $,
		\beq
			(I - D_0)^{-1} = \frac{1}{\la - \la_0} \lf(
			\begin{array}{cc}
				0	&	0 \\
				0	&	\tilde{Y}_0
		\end{array}
			\ri) + \OO(1),
		\eeq
		with
		\beq
			\tilde{Y}_0 : = R_V^{1/2} (0)
 Q_0 Y Q_0  R_V^{1/2} (0)  \, W\,  R_U(-\la_0) \, W,
 		\eeq
 		and, consequently
		\beq
			 \aeff = \frac{c_0}{\la - \la_0} + \OO(1).
		\eeq
		The coefficient $ c_0 $ is given by
		\beq
			\label{eq:c0}
			c_0 : = \frac{1}{4\pi} \meanlrlr{R_U(-\la_0) \, W \phi_{V,0}}{   W  R_V^{1/2} (0)
 Q_0 Y_0 Q_0  R_V^{1/2} (0)  \, W\,  }{R_U(-\la_0) \,  W \phi_{V,0}} \neq 0,
		\eeq
		by \eqref{eq: resonant condition} and the invertibility of $ K^{\prime}(\la_0) $. If $ Q_0 = \ket{\eta_0} \bra{\eta_0} $, i.e., $ \mathrm{ran} \, Q_0 $ is one-dimensional, then the above formula simplifies to
		 \beq
 				c_0 =  \frac{1}{4\pi} \mean{\eta_0}{Y_0}{\eta_0} \lf| \braket{\phi_{V,0}}{WR_U(-\la_0) \, W \eta_0} \ri|^2.
		\eeq
		
		\medskip
		
		\emph{Part (ii).} The preliminary assumption we make is that 
			\beq
				\lf\| W \ri\|_3 \leq {\delta_{0} : = \min_{j = 1, \ldots, N-1} \delta_{N-1},}
			\eeq
			where $ \delta_j $ is as in the statement of Proposition \ref{crit}. Hence, by the same Proposition, we know that there exist $ N $ critical values $ \la_j $, $ j = 1, \ldots, N-1 $, satisfying \eqref{eq:order}. Furthermore, no other critical values occur in the interval $ [|E_{N-1}|, +\infty) $, since all the other ones are smaller than $ E_{N-1} $.
			
			Let us now select one critical point $ \la_j $. If we denote by $ Q_j $ the orthogonal projector onto $ \mathscr{K}_j : = \ker(I - K(\la_j)) $ and by $ P_j = I - Q_j $, we can apply the SFG method to invert the operator $ I - K(\la) $ in a neighborhood $ \la \in (\la_j-\delta, \la_j+\delta) $, exactly as we did for $ j = 0 $ in Part (i) of this proof. For instance, the invertibility of $ K^{\prime}(\la_j) $ is again a consequence of \eqref{eq:kerKprime}. In conclusion, we deduce that, for $ |\la - \la_j| < \delta $,
		\beq
			(I - D_0)^{-1} = \frac{1}{\la - \la_j} \lf(
			\begin{array}{cc}
				0	&	0 \\
				0	&	\tilde{Y}_j
		\end{array}
			\ri) + \OO(1),
		\eeq
		with $ \tilde{Y}_j : = R_V^{1/2} (0)
 Q_j Y Q_j  R_V^{1/2} (0)  \, W\,  R_U(-\la_j) \, W $. Therefore,
		\beq
			 \aeff = \frac{c_j}{\la - \la_j} + \OO(1),
		\eeq
		where the coefficient $ c_j \neq 0 $ is explicitly given by
		\beq
			\label{eq:cj}
			c_j : = \frac{1}{4\pi} \meanlrlr{R_U(-\la_j) \, W \phi_{V,0}}{   W  R_V^{1/2} (0)
 Q_j Y_j Q_j  R_V^{1/2} (0)  \, W\,  }{R_U(-\la_j) \,  W \phi_{V,0}},
		\eeq
		which is non-zero due to \eqref{eq: resonant condition perturbative} exactly as above.
		
		To complete the proof, we just have to show that the scattering length is continuous far from the critical values $ \la_j $. This is obvious if $ \la \neq |E_j| $, $ j = 0, \ldots, N-1 $, but it requires some  further discussion when $ \la \to |E_j| $, for some $ j $. We start by observing that given a bounded operator $ L $ on $ L^2 $ such that $ I + L $ is invertible and $ \beta \in \R $, $ \psi, \chi \in L^2 $, then one has
		\beq \label{inv}
			\lf( I + \beta \ket{\chi}\bra{\psi} + L \ri)^{-1}
			= \lf( I + L \ri)^{-1} \lf[ 1 - \frac{1}{\f{1}{\beta} + \braket{\psi}{( I + L )^{-1}\chi}}\ket{\chi} \bra{\lf[( I + L )^{-1}\ri]^* \psi} \ri].
		\eeq
		If we now decompose 
		\[
			R_U (-\la) = \f{1}{E_j+\la} \ket{{{\psi}}_j} \bra{{{\psi}}_j}+ R_j (\la) 
		\]
		with $ R_j(\la) = \OO(1) $, as $ \la \to |E_j| $, then 
		\[
 			I - D_0 = I- R_V (0) \, W\,  R_U(-\la)\, W = I - \f{1}{E_j+\la} \ket{R_V(0) W {\psi}_j} \bra{W {\psi}_j} - \tilde{R}_j (\la),
		\]
		where again $ \tilde{R}_j = \OO(1) $, as $ \la \to |E_j| $. 
		
		Now, we claim that $ I - \tilde{R}_j $ is invertible in a neighborhood of $ E_j $: by compactness of $ \tilde{R}_j $ and Freedom alternative this is equivalent to prove that there is no non-trivial solution to $ (I - \tilde{R}_j) \eta = 0 $. If we set $ \zeta : = H_V^{1/2} \eta $ as in the derivation of \eqref{uno}, we map the above equation into 
		\bdm
			\lf(I - R_V^{1/2} W R_j(\la) W R_V^{1/2} \ri) \zeta = 0.
		\edm	
		The ordering of eigenvalues $ E_i$'s implies that, as an operator, for any $ \la < |E_{j-1}| $,
		\beq
			I -  R_V^{1/2} W R_j(\la) W R_V^{1/2} \geq I - K_{j+1}(\la).
		\eeq
		However, the smallness condition on $ W $ guarantees that $ \lf\| K_{j+1}(|E_j|) \ri\| < 1 $ and thus $ I - K_{j+1}(|E_j|) > 0 $, which in turn implies via the above inequality that $ I -  R_V^{1/2} W R_j(\la) W R_V^{1/2} $ and $ I - \tilde{R}_j $ are both invertible at $ |E_j| $, with bounded inverse.
		
		We can therefore apply \eqref{inv} to get for $ \la $ close to $  |E_j| $
		\bmln{
			\lf( I - D_0 \ri)^{-1} = (I -  \tilde{R}_j (\la) )^{-1} - \frac{1}{E_j + \la + \braketr{W {\psi}_j}{(1 - \tilde{R}_j(|E_j|))^{-1}R_V(0) W {\psi}_j}}	\times \\
			 \times  \ket{(1 - \tilde{R}_j(\la))^{-1} R_V(0) W {\psi}_j} \bra{\lf[(1 - \tilde{R}_j(\la))^{-1}\ri]^* W {\psi}_j},
		}
		Some lengthy computation starting from \eqref{scatlen2} and using the above formula leads to the following expression of the effective scattering length
		\bml{
			\lim_{\la \to |E_j|} \aeff = a_V + \frac{1}{4\pi} \meanlr{W\phi_{V,0}}{ R_j(|E_j|)  }{ W  (I -  \tilde{R}_j (|E_j|) )^{-1} \phi_{V,0}} \\
			 - \frac{\braketr{{\psi}_j}{W (1 - \tilde{R}_j(|E_j|))^{-1} \phi_{V,0}}}{4\pi \braketr{ W{\psi}_j}{(1 - \tilde{R}_j(|E_j|))^{-1} R_V(0) W {\psi}_j}}  \meanlr{ W\phi_{V,0}}{R_j(|E_j|) }{W(1 - \tilde{R}_j(|E_j|))^{-1} R_V(0) W {\psi}_j}.
		}
		Exploiting the decay of $ W $, it is not difficult to see that both $ W \phi_{V,0} $ and $ W (1 - \tilde{R}_j(|E_j|))^{-1} \phi_{V,0}$ belong to $ L^2 $ and therefore all the terms in the expression above are in fact bounded.
	\end{proof}
	
	We finish the Sect. with the proof of Corollary \ref{zerores}.
	
	\begin{proof}[Proof of Corollary \ref{zerores}]
		Fix $ \la = \la_j $ and let $ \varphi\in \BB$ be a non-trivial solution of $ (I - D_0) \varphi =0$ and set
		\beq
			\Psi_0 : = \lf( 
				\begin{array}{c}
					\varphi \\
					- R_U (-\la_j)W  \varphi
				\end{array}
			\ri),
		\eeq
		then by \eqref{zero} and Remark \ref{rem:k02}, 
see also the regularity used in Proposition \ref{crit},
we have $\varphi_0 \in H^2_{-s}$ for $\forall s>1$
and 
\[
(-\Delta + V) \varphi_0 = W R_{U} (-\la) 
W \varphi_0\in L^2.
\]
Moreover $R_U (-\la) W \varphi_0\in H^2$ and
\[
(-\Delta + U+\la) R_U (-\la) W \varphi_0 = W \varphi_0 \in L^2.
\]
Then it is straightforward to verify that $\HH \Psi_0=0$.
	\end{proof}

\bigskip

\nt
{\bf Acknowledgements.} The authors acknowledge the support of INdAM through GNFM.

\bigskip

\bigskip
\bigskip

\end{document}